% Template for the submission to:
%   The Annals of Applied Statistics    [AOAS]
%
%%%%%%%%%%%%%%%%%%%%%%%%%%%%%%%%%%%%%%%%%%%%%%
%% In this template, the places where you   %%
%% need to fill in your information are     %%
%% indicated by '???'.                      %%
%%                                          %%
%% Please do not use \input{...} to include %%
%% other tex files. Submit your LaTeX       %%
%% manuscript as one .tex document.         %%
%%%%%%%%%%%%%%%%%%%%%%%%%%%%%%%%%%%%%%%%%%%%%%

\documentclass[aoas]{imsart}
\usepackage{xr}

\usepackage{graphicx}
\usepackage[utf8]{inputenc}
\usepackage{amsthm}
\RequirePackage{bm}
\usepackage{mathrsfs}
\usepackage{mathtools}
\usepackage{bbm}
\usepackage{subcaption}
\usepackage{float}\label{key}
\usepackage{comment}

\usepackage{arydshln} 

\usepackage[dvipsnames]{xcolor}

%\usepackage[a4paper, total={5.4in, 8.25in}]{geometry}
%\oddsidemargin .25truein
%\evensidemargin -1truein
%\parskip .05in
\raggedbottom

\newtheorem{proposition}{Proposition}

\newtheorem{definition}{Definition}

%----- bold fonts -----%

\newcommand{\Cb}{\mathbf{C}}

\newcommand{\Gb}{\mathbf{G}}

\newcommand{\Pb}{\mathbf{P}}

\newcommand{\Xb}{\mathbf{X}}
\newcommand{\Yb}{\mathbf{Y}}
\newcommand{\Zb}{\mathbf{Z}}

\newcommand{\bG}{\bm{G}}

\newcommand{\bX}{\bm{X}}

\newcommand{\Prob}{\mathbb{P}}
\newcommand{\Expec}{\mathbb{E}}

%----- other useful defs -----%

\newcommand{\boldtheta}{\bm{\theta}}

\usepackage{tikz}
\usepackage{arydshln}

\usepackage{bbm}

\usepackage[sort,numbers]{natbib}

\newcommand{\btau}{\boldsymbol{\tau  }}

\newcommand{\hv}{\text{hv}}

\newcommand{\KL}{\textbf{KL}}

\newcommand{\jointPxz}{ f (\Zb, \Xb  )}

\newcommand{\Rcz}{ R_{\Cb | \Zb }  }
\newcommand{\btheta}{\boldsymbol{\theta}}

\usepackage{color, colortbl}
\definecolor{Gray}{gray}{0.9}
\definecolor{LightCyan}{rgb}{0.88,1,1}
\definecolor{mygrey}{gray}{0.45}

\usepackage{todonotes}
\usepackage{url}

% change back to black! when submitting 
% toggle 

%\newcommand{\ch}[1]{\textcolor[rgb]{0.7,0,0.7}{{#1}}}
%\newcommand{\toch}[1]{\textcolor[rgb]{0.3,.7,0.3}{{#1}}}

\newcommand{\ch}[1]{\textcolor[rgb]{0,0,0}{{#1}}}
\newcommand{\toch}[1]{\textcolor[rgb]{0,0,0}{{#1}}}

\usepackage{enumitem}
\usepackage{rotating}
\usepackage{makecell}

\newcommand{\ERhv }{ \Expec_{ R (\mathcal Z, \Cb  )} }    

\usepackage{sectsty}
\usepackage{titlesec}

%% Packages
\RequirePackage{amsthm,amsmath,amsfonts,amssymb}

\makeatletter
\newcommand*{\addFileDependency}[1]{% argument=file name and extension
	\typeout{(#1)}
	\@addtofilelist{#1}
	\IfFileExists{#1}{}{\typeout{No file #1.}}
}
\makeatother

\newcommand*{\myexternaldocument}[1]{%
	\externaldocument{#1}%
	\addFileDependency{#1.tex}%
	\addFileDependency{#1.aux}%
}

\myexternaldocument{supplement_RENEW}

\begin{document}
	
	\begin{frontmatter}
		%%%%%%%%%%%%%%%%%%%%%%%%%%%%%%%%%%%%%%%%%%%%%%
		%%                                          %%
		%% Enter the title of your article here     %%
		%%                                          %%
		%%%%%%%%%%%%%%%%%%%%%%%%%%%%%%%%%%%%%%%%%%%%%%
		\title{ Community Detection in Weighted Multilayer Networks with Ambient Noise }
		%\title{A sample article title with some additional note\thanksref{T1}}
		%\runtitle{???}
		%\thankstext{T1}{A sample of additional note to the title.}

		%%%%%%%%%%%%%%%%%%%%%%%%%%%%%%%%%%%%%%%%%%%%%%
		%%Only one address is permitted per author. %%
		%%Only division, organization and e-mail is %%
		%%included in the address.                  %%
		%%Additional information can be included in %%
		%%the Acknowledgments section if necessary. %%
		%%%%%%%%%%%%%%%%%%%%%%%%%%%%%%%%%%%%%%%%%%%%%%
		%\author[A,B,C]{\fnms{John} \snm{Marc}}%
		%,
		%\author[B,D]{\fnms{Peter} \snm{Wolf}}
		%,
		%\author[B,E]{\fnms{Dean} \snm{Boy}}
		%
		%\address[A]{University of \LaTeX, Higher Institute lab}
		%\address[B]{Some University}
		%
		%
		%\address[C]{\tt{johnmarc@gmail.com}}
		%\address[D]{\tt{peterwolf@gmail.com}}
		%\address[E]{\tt{deanboy@gmail.com}}
		
		\begin{aug}
			%%%%%%%%%%%%%%%%%%%%%%%%%%%%%%%%%%%%%%%%%%%%%%
			%%Only one address is permitted per author. %%
			%%Only division, organization and e-mail is %%
			%%included in the address.                  %%
			%%Additional information can be included in %%
			%%the Acknowledgments section if necessary. %%
			%%%%%%%%%%%%%%%%%%%%%%%%%%%%%%%%%%%%%%%%%%%%%%
			\author[A]{\fnms{Mark} \snm{He}\ead[label=e1]{markhe@live.unc.edu}},
			\author[B]{\fnms{Dylan (Shiting)} \snm{Lu}\ead[label=e2]{shl16101@live.unc.edu}},
			\author[B]{\fnms{Rose Mary} \snm{Xavier}\ead[label=e2,mark]{rxavier@email.unc.edu (joint senior author)}}
			\and
			\author[C]{\fnms{Jason} \snm{Xu}\ead[label=e3,mark]{jason.q.xu@duke.edu,  (joint senior author)}}
			%%%%%%%%%%%%%%%%%%%%%%%%%%%%%%%%%%%%%%%%%%%%%%
			%% Addresses                                %%
			%%%%%%%%%%%%%%%%%%%%%%%%%%%%%%%%%%%%%%%%%%%%%%
			\address[A]{Herbert Irving Comprehensive Cancer Center, Columbia University, New York, NY, 11032, USA	\printead{e1}}
			\address[B]{School of Nursing, University of North Carolina, Chapel Hill, NC, 27599, USA
				\printead{e2}}
			\address[C]{ Statistical Science, Duke University, Durham, NC, 27708, USA
				\printead{e3}}
		\end{aug}
		
		%\author{ Mark He$^1$,
			%	Dylan Lu$^2$, Jason Xu$^3$ \footnote{Joint senior author}, Rose Mary Xavier$^2$ $^*$}
		%%\author[1]{Mark He$^1$  \footnote{MH was supported by the NSDEG fellowship.}}
		%%\author[2]{Dylan Lu$^2$}
		%%\author[3]{Rose Mary Xavier$^2$}
		%%\author[3]{Jason Xu$^3$ }
		%\date{%
			%	\vspace{-.5cm}
			%	\small
			%	$^1$Statistics \& Operations Research, University of North Carolina, Chapel Hill, NC, 27599, USA\\%
			%	$^2$School of Nursing, University of North Carolina, Chapel Hill, NC, 27599, USA\\
			%	$^3$Statistical Science, Duke University, Durham, NC, 27708, USA \\ %[3ex]
			%	%	\today
			%}
		% 
		
		\begin{abstract}
			We introduce a novel model for multilayer weighted networks that  accounts for global noise in addition to local signals. The model is similar to a multilayer stochastic blockmodel (SBM), but the key difference is that between-block interactions independent across layers are common for the whole system, which we call \textit{ambient noise}. A single block is also characterized by these fixed ambient parameters to represent members that do not belong anywhere else. 
			\ch{This approach allows simultaneous clustering and typologizing of blocks into \textit{signal} or \textit{noise} in order to better understand their roles in the overall system, which is not accounted for by existing Blockmodels.} 
			We employ a novel application of hierarchical variational inference to jointly detect and \ch{differentiate types of blocks}. We call this model for multilayer weighted networks the \textit{Stochastic Block (with) Ambient Noise Model }(\texttt{SBANM}) and develop an associated community detection algorithm.  We apply this method to subjects in the Philadelphia Neurodevelopmental Cohort to discover communities of subjects with co-occurrent psychopathologies in relation to psychosis.
		\end{abstract}
		
		\begin{keyword}
			\kwd{Multilayer Network}
			\kwd{Community Detection}
			\kwd{Stochastic Blockmodel}
			\kwd{Variational Inference}
			\kwd{High Dimensional Data}
			\kwd{Psychosis Spectrum}
		\end{keyword}
		
	\end{frontmatter}
	%%%%%%%%%%%%%%%%%%%%%%%%%%%%%%%%%%%%%%%%%%%%%%
	%% Please use \tableofcontents for articles %%
	%% with 50 pages and more                   %%
	%%%%%%%%%%%%%%%%%%%%%%%%%%%%%%%%%%%%%%%%%%%%%%
	%\tableofcontents
	
	%%%%%%%%%%%%%%%%%%%%%%%%%%%%%%%%%%%%%%%%%%%%%%
	%%%% Main text entry area:
	
	\section{Introduction}
	
	In the advent of more sophisticated data gathering mechanisms and more nuanced conceptions of  dependency,
	relational data  have become more widely used than ever before. Statistical network analysis has become a major field of research and is a useful, efficient mode of  pattern discovery. Networks representing  social interactions,  genes, and ecological webs often model members or agents as nodes (vertices) and their interaction as edges.
	\ch{Oftentimes,  relational information  for the same observations or participants manifest in different modalities, represented by layers}. For example, nodes represented by users in a social network such as Twitter can have edges that represent `likes', `follows', and `mentions'.
	In biological  networks,  modes of interactions such as gene co-expressions or similarities between biotic assemblages may arise among the same sample of study. 
	The study of multilayer networks is especially pertinent in the study of psychiatric data, where distinct diagnoses are not clearly demarcated but rely on constellations of interacting psychopathologies. In this study, we analyze these multimodal psychopathological symptom data using multilayer network analysis.
	
	While network theory for simple graphs is well established \cite{girvan_community_2002,configbender}, the literature concerning weighted, multimodal networks is an emerging field of interest \cite{Menichetti_2014WeightedMultiplex,Holme_2015}.
	The field of \textit{community detection} has also grown considerably in recent times \cite{newman2018networks,Fortunato_2016,handock_modelclust_2007,townshend_murphy_2013}.
	Community detection is an approach used to cluster nodes in a network. Many techniques have been proposed for \emph{unweighted} (binary) graphs including modularity optimization \cite{girvan_community_2002, clauset_community_2004}, stochastic block models \cite{holland_stochastic_1983, nowicki_snijders, peixoto_nonpar, yan_sbm}, and extraction \cite{zhao_consistency_2012,Lancichinetti_plos_2011}.  
	
	The stochastic block model (SBM) is a theoretical model for  random graphs \cite{karrer2011stochastic,peixoto_nonpar,Hoff2002,nowicki_snijders}; it has also found practical use in community detection  \cite{mariadassou2010,newman2003structure}.  The model lays out a concise formulation for dependency structures within and across communities, but does not model \textit{global} characteristics. 
	Though some methods  discern \textit{background} (unclustered) nodes \cite{palowitch_continuous,wilson_essc,dewaskar2020finding}, few existing models explicitly account for \textit{community-wise} noise even though it may be useful. We develop a  model  for  multilayer weighted graphs that explicitly accounts for (1)  global noise present between differing communities, and (2) dependency structure across layers within communities.  We call this model and its associated estimation algorithm as the  (multivariate Gaussian) \textit{Stochastic Block (with) Ambient Noise  Model} (\texttt{SBANM}).
	
	We develop a novel method that jointly finds clusters in a multilayer weighted  network  and classifies the \textit{types} of these clusters, namely whether they are (local) signal or (global) noise. We propose a model that discovers and categorizes these communities.
	We also develop its method of inference, which is additionally useful as many existing multilayer SBM analyses  assume known parameters \cite{wang2019joint,mayya2019mutual}.  In the primary case study (Section \ref{sec:results_PNC}), we use \texttt{SBANM} to find clusters of diagnostic subgroups of patients judged by similarity measures of their psychopathology symptoms.

	\subsection{Background and Contributions} \label{sec:contributions_context} 
	
	A canonical example of a globally noisy network is the Erdos-Renyi model where every edge is governed by a single probability. The affiliation model is a weighted extension \cite{Allman_2011} used to describe a ``noisy homogeneous network"; a single \textit{global} parameter $\theta_\text{in}$ dictates the connectivity between all nodes in \textit{any} community, while another  $\theta_\text{out}$ controls the connectivity for all nodes in differing communities.  A similar model was posited by Arroyo et al. \cite{arroyo2020inference}  where $\theta_\text{in} > \theta_\text{out}$ as a baseline for network classification. The weighted SBM and the affiliation model are both \textit{mixture models for random graphs} described  by Allman et al. \cite{Allman_2011,ambroise2010new}. %These classes of models account for assortativity (the clustering tendency among nodes who connect at similar intensities)  and sparsity.
	
	SBMs were initially used for simple networks  \cite{Hoff2002,nowicki_snijders}, but they have been extended to weighted \cite{mariadassou2010}) and multilayer settings \cite{stanley2015,paul_multilayer_2015,arroyo2020inference}, and in particular  time series \cite{matias2016} where clusters across all time points have the same inter-block parameters, but varying between-block interactions.  These multilayer SBMs typically do not  account for correlations between layers.  Some recent studies or binary networks  have accounted for correlations across layers \cite{mayya2019mutual} and noise \cite{mathews2019gaussian}, but typically assume that parameters are already known.

	Though much work has been done on estimating SBMs, \ch{not much of it has focused on assessing the \textit{noise} present within them}, much less for multilayer weighted graphs. Extraction-based methods identify background nodes to signify lack of community membership \cite{palowitch_continuous,wilson_essc}, but these methods do not attribute any parametric descriptions to these nodes. 
	Some recent work  discuss noise in network models that are  oftentimes associated with global (i.e. entire-network) uncertainty \cite{blevins2021variability,Newman_2018,mathews2019gaussian,Young_2020}. However, few have studied \textit{structural noise} that exists between differing communities or that serves as some notion of a residual term (i.e. in regression analysis).

	We attempt to address these two gaps in this work. In a multilayer graph with $Q$ ground truth communities (indexed by $q $), as well as a single block that is considered \textit{noise} (labeled $NB$  for \textit{noise block}), we postulate a model  that is \textit{locally unique} with parameter $  \btheta_q  $  for all edges within a block indexed at $q$. The global noise parameter ${\btheta} _\text{Noise} $ describes all interactions between differing blocks as well as $NB$. 
	A simplified version of this model is presented  as follows, but will be written in more detail in Section \ref{sec:model_inference}:
	\begin{align} \label{eq:combined_model}
		\btheta_{ql}    & = \begin{cases}
			{\btheta}_{q} & \text{if } q = l \text{ and } q \text{ is not } NB
			\\
			%	{\btheta}_\text{Noise} & \text{if } q = l;  q   \text{ is Noise} \\`
			{\btheta} _\text{Noise} & \text{if } q \ne l \text{ or } q \text{ is } NB    % \text{or} $q$ 
		\end{cases}.
	\end{align}  
	
	The model combines qualities of the affiliation model \cite{Allman_2011} with the weighted SBM and extends to multiple layers.  Because both the affiliation model and the multilayer SBM are proven to be identifiable by prior work \cite{Allman_2011,matias2016}, we posit that \texttt{SBANM} is also identifiable. A brief argument is given in Appendix \ref{app:identifiability}, but deeper investigation remains as future work.  One major advantage of a global noise term is its parsimony  compared to SBMs.  Existing clustering models on multilayer networks, even when accounting for communities that persist across layers \cite{persistent_liu}, still tend toward overparameterization. 
	A reference or \textit{null} group is often used in scientific and clinical settings, an example being the cerebellum in the analysis of brain networks.  
	%The commonality in \textit{out-of-clique} and \textit{baseline} modes of communication in the example  in  Section \ref{sec:motivation} provides an interpretable justification for the empirical realism of this model.

	\subsection{Motivation} \label{sec:motivation}  
	We use an example to motivate the proposed model. Suppose there is a social network where nodes represent individuals and weights represent social interactions among them. Individuals naturally interact in cliques where rates of communication are roughly similar (i.e. assortative).  Across differing communities, however, rates are assumed to be at a global baseline level. Moreover, interactions among  members who are \textit{asocial}  and do not belong to any community with a unique signal are similarly modeled as ``noise".  Which individuals are still friends with each other after 10 years?  Alternatively, how might relationships be broken down -- in what ways may work relationships (i.e. co-authorships) correlate with friendships?  A schematic figure for this model compared to SBM is presented in Figure \ref{fig:schematic}.
	
	\begin{figure} [htbp]
		\centering
		\begin{tabular}{cc}
			\hline
			\textbf{	SBM }&\textbf{\texttt{SBANM} }\\
			\hline
			\includegraphics[width=0.35\linewidth, trim= {3cm 19.5cm 12.5cm 2cm }, clip]{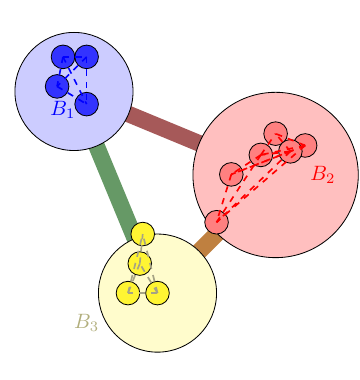} 	&  
			\includegraphics[width=0.5\linewidth, trim= {2.5cm 16cm 8.5cm 2cm }, clip]{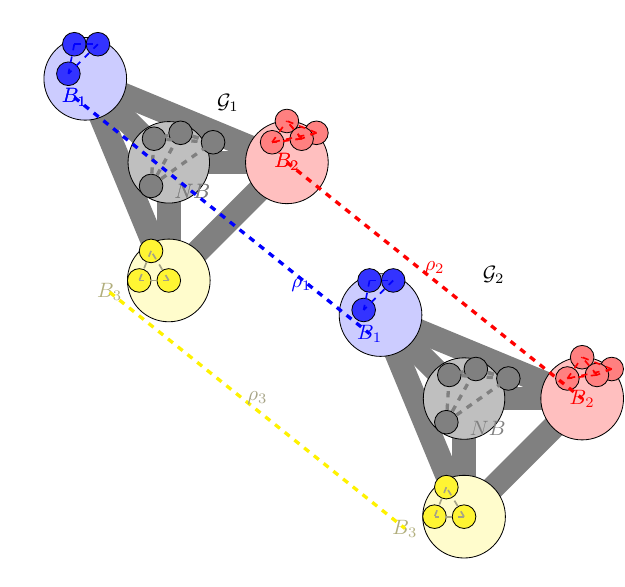} 
		\end{tabular}
		\caption{ \footnotesize 
			\textit{Illustrative example of the types of relationships between  blocks for the canonical SBM (left)  and \texttt{SBANM} (right). Dashed lines represent the inter-block connectivity among nodes. Large circles represent distinct communities. Solid thick lines represent the inter-community rates of interaction (transition probabilities if binary).  In the canonical case (left), the inter-block transitions are all distinct, as denoted by its colors. For the multilayer \texttt{SBANM} case (right), the  inter-block parameters are all the same (represented by gray). \textit{Ambient Noise} ($AN$) governs the connectivities between blocks and the \textit{intra-}block connectivity within the block \textit{NB} across two layers $\mathcal{G}_1$ and $\mathcal{G}_2$ with blocks $B_1,B_2,B_3$ and \textit{Noise Block} ($NB$) with correlations $\rho_1,\rho_2,\rho_3 $ across $\mathcal{G}_1$ and $\mathcal{G}_2$ .	}}
		\label{fig:schematic}
	\end{figure}
	
	Psychiatric disorders lack objective measures such as laboratory testing that can confirm or clarify diagnosis.
	As such, the diagnostic process rests on clinical assessment and is built on codified symptom domains  \cite{kahn_schizophrenia_2015}.	Psychiatric illnesses, moreover, have multiple causes and symptoms.  There are no laboratory tests for most psychiatric conditions;  current diagnostic processes only consider the presence of discrete symptoms and can identify patients who have already have the disease, but it does not help identify who is at risk for the illness in question.  One such illness is schizophrenia, a chronic psychotic disorder that affects millions worldwide and imposes a substantial societal burden. Identifying individuals who are at risk for developing this illness is an important issue.

	\ch{In most existing research on networks, nodes represent individuals and edges are known quantities between nodes}. This assumption cannot directly be applied to psychiatric network models to identify communities of individuals with specific conditions, \ch{as such relational data are not measurable}. They can, however, can be estimated from biological and/or psychosocial data, which can then be used for early identification \cite{kahn_schizophrenia_2015,clark_dx95,kendell_diagnoses2003}. \toch{The flagship criterion that defines the diagnosis of disorders such as schizophrenia is the presence of positive symptoms (DSM-V \cite{APA2013}). This type of categorization  is clinically useful but leads to an excess of diagnostic comorbidities and heterogeneities in the clinical presentation of illnesses \cite{clark_dx95,kendell_diagnoses2003,nosology}.  More importantly, it is a post-hoc diagnosis: subjects typically are no longer treatable after being diagnosed.} 
	With an increase in availability of multimodal data across populations of clinical subjects,  community detection is a natural tool for the classification of psychiatric illnesses with multifaceted latent characteristics that could not be directly observed. \ch{Moreover, it can pave way for future methods to contribute towards the important objective of early identification.}
	
	\toch{We use the ``co-occurrences" in \textit{psychopathology} symptoms to detect groups of participants with early signs of psychosis.} \ch{Existing research document ``co-occurring and reciprocal relationships between" anxiety,  mood, and behavior disorders among cohorts that form a distinct \textit{prodromal} stage that precedes psychosis \cite{Cupo2021,Krabbendam2005DevelopmentOD}. 
		We model the \textit{co-occurrence} among common prodromal symptoms as the correlations between networks constructed from anxiety, behavior, and mood  disorders. These networks are highly correlated in the prodromal stage, but become independent once the threshold of psychosis is crossed. 
		After the initial episodes of psychosis, the diseases progresses qualitatively out of the \textit{prodrome} and into a psychotic illness. \cite{Tandon2012AttenuatedPA}.
		The \textit{independent} group should manifest high levels of psychosis and represent the group that have transitioned from psychosis prodromal symptoms to active psychotic symptoms \cite{Tandon2012AttenuatedPA}, while the \textit{correlated} groups of subjects remain in the prodrome.
	}
	
	%	 (Section \ref{sec:app_results}), 
	\ch{
		We hypothesize that the independence assumption of the \textit{NB} cluster from the \texttt{SBANM} model corresponds to the  decoupled prodromal symptoms in the psychosis symptoms stage.  We statistically model the separation in the stages of psychosis symptom onset with the \texttt{SBANM} model and algorithm.   The model seeks to separate  groups that have transitioned to active psychosis symptoms or rather psychosis spectrum from those that have not. We view the correlations between networks constructed from anxiety, behavior, and mood disorders as the analogous to the ``co-occurrence and reciprocal relationships" among the prodromal symptoms. 
		Consequently, we interpret clusters with high correlations among these pathologies as indicative of the subjects in varying prodromal stages of psychosis development.  In contrast, we hypothesize that the subjects found within \textit{NB}, whose psychopathologies are independent across network layers, are indicative of subjects that have converted to the psychosis spectrum stage. This analysis will be described in more detail in Section \ref{sec:app_results}.
	}

	In Section \ref{sec:data}, we describe the terminology alongside the \textit{Philadephia Neurodevelopmental Cohort} (PNC) data for the main case study. We then describe the model and its method of (variational) inference in Section \ref{sec:model_inference}, and its specific mechanics in Section \ref{sec:est_algorithm}.  In Section \ref{sec:results_PNC}, we describe the analysis and results for the PNC case study.
	In Section 6, Model performance is assessed using synthetic experiments that closely match the results derived from the data. %Simulation results and data results are both compared with other methods in Section 6.
	While distinguishing psychosis spectrum will be the primary focus of the proposed methodology, it is useful to find latent structure in other types networks.  We also demonstrate the method on (1) US congressional voting data and (2) human mobility (bikeshare) data in Appendices \ref{app:voting},\ref{app:results_bikeshare}.

	\section{Data, Notation, and Terminology} \label{sec:data}
	%\subsection{PNC Data}
	%\todo[]{What is  layer? what is  dont actually need to have all the data transforms UP FRONTclarify SAME PEOPLE who are taking different surveys}
	%We describe the notation to be used in the rest of paper in this section, motivated primarily by the application to PNC data.
	
	%For a $K-$layer (Gaussian) weighted multigraph $\Xb=\{\Xb^1,...,\Xb^K\}$ with $Q$ ground truth communities (blocks) indexed by $q$, but such that a single block is considered \textit{noise} and labeled $NB$. We postulate a model that is \textit{locally unique}  with block-wise distributions $  \btheta_q: = (\boldsymbol{\mu}_q ,\boldsymbol{\Sigma}_q )  $  (for all edges within block $q$) and ambient noise parameter ${\btheta} _\text{AN} :=(\boldsymbol{\mu}_{AN} ,\boldsymbol{\Sigma}_{AN}) $ that globally describes all interactions between nodes in differing blocks. 

	For  a $K$-layer \textbf{weighted} multigraph with registered $n$ nodes indexed by the set $[n] = \{1,2,...,n  \}$, let $\Gb$ represent the collection of multilayer weighted graphs with $K$ layers: $\Gb = \{  \Gb^1, \Gb^2,...,\Gb^K\}.$
	Similarly, suppose $\Gb$ contains $Q$ ground truth communities (blocks) indexed by $q$, but such that a single block is called \textit{noise block} and labeled $NB$ (indexed by $q_{NB}$). 
	We let  $\textbf{G}_{ij} = (G^1_{ij} , G^2_{ij},..., G^K_{ij})$ represent the vector of edge-weights between edges $(i,j)$ across all layers $k=1,2,...,K$. We define a community  as  $B_q \subset [n]$ to denote the nodes that are contained in a given block indexed by $q$  in $\Gb$, and we let $\Gb_q$ represent the set of all edges contained in block $q$ across all $K$ layers:
	\begin{align} \label{eq:Xb_q} 
		\Gb_q = 	\{  \Gb_{ij}    \}_{i,j \in B_q}.
	\end{align}  
	Moreover, we call the set of edges across different blocks $q,l$ (where $q \ne l$) \textit{interstitial noise} ($IN$), and label it as:   		
	\begin{align} \label{eq:Xb_IN}
		\Gb_{IN} = 	\{  \Gb _{ij}   \}_{i  \in B_q , j \in B_l }. 
	\end{align} 
	We fix \textbf{one} block indexed  as $NB$ as the \textit{noise block}, where all weights in the block follow a $ N_K \big(\boldsymbol{\mu}_{NB}, \boldsymbol{\Sigma}_{NB} \big)  $ distribution. 	This block represents a null region that is devoid of unique signal, but is distributionally governed by the same characteristics as the interstitial relationships between different blocks.
	We let $\Gb_{NB}$  represent the set of edges among members in the ``noise block": $\Gb_{NB} = 	\{  \Gb_{ij}   \}_{i,j \in NB}. $
	In the following subsection we describe the data as introduced in the prior section in the context of the notation.
	In Section \ref{sec:model_inference} we describe the assumption that classifies this notion of noise. 
	Multilayer networks can represent  multimodal,  longitudinal, or \textit{difference} graphs
	\cite{Menichetti_2014WeightedMultiplex,Holme_2015}.  The  data in   the  \textit{Philadelphia Neurodevelopmental Cohort} (PNC) (described below) is constructed as a multimodal network, while the applications outlined in Appendices \ref{app:voting} and \ref{app:results_bikeshare} are examples of longitudinal graphs.
	$\Gb =\{\Xb, \Yb, \Zb \}$ represents  anxiety, behavior, and mood psychopathology symptom networks processed from the psychopathological questionnaires for a given set of subjects. 
	With respect to the PNC data, each layer represents one of the psychometric evaluation networks for each disorder.

	\ch{
		% WAS FROM 1.2
		In the introduction we have mentioned that sex differences play a large role in psychosis onset \cite{Mc1988GenderDI,Nowrouzi2015AgeAO,Kirkbride_2012_Sexdiff,Li_SexDiff_2016}. The prevalent view is that males typically have a peak in the rates of onset between the ages of 21-25, while females have bimodal peaks much later \cite{Li_SexDiff_2016}. We subset the PNC data to \textit{early adult} (ages 18-21) males for the statistical analysis to be concordant with the clinical context. 
	}
	The sample size $n$ in  this study represents the 764 \textit{early adult} male subjects. 
	Each node represents a subject, and each weighted edge the transformed similarity ratio between two subjects for anxiety, behavior, and mood symptoms.
	
	The Philadelphia Neurodevelopmental Cohort (PNC) is a community sample of youth subjects aged 8-21 years, recruited from the greater Philadelphia area. These subjects underwent  a detailed neuropsychiatric evaluation. \cite{Calkins_PNC_2014,Calkins_PNC_2015}. A PNC subsample is used as the primary case study.
	We assume each member of $\Gb $ are generated from node-clusters whose (Fisher) transformed edges  follow blockwise multivariate normal distributions.  We use three general categories of disorders to represent each layer:
	\begin{enumerate}[noitemsep]
		\item Anxiety ($\Xb$): 44 questions (generalized anxiety, social anxiety, separation anxiety, agoraphobia, specific phobia, panic, obsessive compulsive and post traumatic stress disorder)
		\item Behavior ($\Yb$): 22 questions (attention deficit hyperactive disorder, oppositional defiant disorder, conduct disorder) 
		\item Mood ($\Zb$): 11 questions (depression and mania),
	\end{enumerate} 
	%\vspace{-.25cm}
	% Each of these categories are made into networks using the method of Fisher Transformation.
	then Fisher-transformed to produce the weighted edge in graph layer   In these following sections these categories will simply be referred to as ``anxiety",``behavior", and ``mood". More details on pre-processing can be found in Appendix \ref{app:data_pnc_proc}.
	
	\ch{
		We note that $NB$ could in some cases parallel the notion of an \textit{indepedent residual}, but not always. In case study of PNC data presented in Section \ref{sec:results_PNC}, $NB$ corresponds to perhaps the ``most informative" discovered block.  In Section \ref{sec:motivation} 
		we hypothesize that the independence of $NB$ across layers suggests separation of the prodromal co-occurrences 
		As such, we say that $NB$ is \textit{noise} only insofar as it is independent across layers, paralleling the analogy of the residual in regression analysis, when in practice $NB$ may correspond to the cluster of the most interest.
	}

	\section{Model and Inference} \label{sec:model_inference} 
	
	\texttt{SBANM} supposes that networks across $K$ layers have the same block structure, while transition parameters between blocks are fixed at the same global level.  This model  allows detection of common latent characteristics across layers, as well as differential sub-characteristics within blocks (represented by multivariate normal distributions).  We assume edges  are correlated across layers in the block structures of the proposed model.  %Even if the assumption of correlated multilayer blockstructures with ambient noise does not inherently fit the data-generation mechanism,  \texttt{SBANM} still gleans information from the multilayer network reductively and sufficiently. 
	
	%For a $K-$layer (Gaussian) weighted multigraph $\Xb=\{\Xb^1,...,\Xb^K\}$ with $Q$ ground truth communities (blocks) indexed by $q$, but such that a single block is considered \textit{noise} and labeled $NB$. We postulate a model that is \textit{locally unique}  with block-wise distributions $  \btheta_q: = (\boldsymbol{\mu}_q ,\boldsymbol{\Sigma}_q )  $  (for all edges within block $q$) and ambient noise parameter ${\btheta} _\text{AN} :=(\boldsymbol{\mu}_{AN} ,\boldsymbol{\Sigma}_{AN}) $ that globally describes all interactions between nodes in differing blocks. 

	%\subsection{Model Description} \label{sec:multi_gaussian_weighted_graph}
	\begin{definition} \label{def:corr_blocks} 
		{(Correlated Blocks)} 
		\normalfont		
		For a $K-$layer (Gaussian) weighted multigraph $\Gb=\{\Gb^1,...,\Gb^K\}$ where each layer $k$ represents a graph with  $n$ registered nodes, let $B_q \subset [n]$ represent a community housing a partition of  nodes $\{i\}_{i \in B_q}$, then each weighted edge between any node in block $B_q$ form a multivariate normal distribution with mean $K$-dimensional vector $ \boldsymbol{\mu}_q $ and $K \times K $-dimensional covariance matrix $\boldsymbol \Sigma_q$:
		\begin{align*}  
			\boldsymbol{\Sigma}_q  = 
			\begin{pmatrix}
				\sigma_{q,1}^2& \rho_q \sigma_{q,1} \sigma_{q,2  }... & \rho_q \sigma_{q,1} \sigma_{q,K }   \\
				\rho_q \sigma_{q,2} \sigma_{q,1 }  &   \sigma_{q,2}^2... & \rho_q \sigma_{q,2} \sigma_{q,K }  \\
				... &... &...  \\
				\rho_q \sigma_{q,K} \sigma_{q,1 }  &  ... & \sigma^2_{q, K} 		
			\end{pmatrix}.
		\end{align*}   	
		If nodes $i,j$ are in the same block, the distribution of their edges follow a multivariate normal distribution
		\begin{align*}
			\Gb_{ij} |   \{ i \in B_q, j \in B_q \}  \sim N_K (\boldsymbol{\mu}_q , \boldsymbol{\Sigma}_q).
		\end{align*}
	\end{definition} 
	
	Note that there is a single correlation parameter $\rho_q$ across all layers for a given block $B_q$. This is a deliberate choice to induce parsimony and interpretability among block relationships across all layers. We  assume that the \textit{noise block}  has the same characteristics as the \textit{interstitial noise}; both are drawn from the same distribution $AN$ (\textit{ambient noise}). 	$AN$ is a global noise distribution that governs both  $IN$ and $NB$:
	\begin{align*}
		\Gb_{IN}
		&  \stackrel{d}{=}   \Gb_{NB}
		\sim N_K    (\boldsymbol{\mu }_{AN},    \boldsymbol{\Sigma}_{AN}).
	\end{align*}	
	Because $NB$ and $IN$ both represent ``baseline" levels of connectivity for the network, we assume that they both have equivalent characteristics as $AN$.  Members of each block $B_q$ interact with other members in the same block at  rates that follow multivariate   $\boldsymbol{\mu}_{q}$ with variance $\boldsymbol{\Sigma}{q}$, but interact with members in differing groups $l; l \ne q$ at baseline rates $\boldsymbol{\mu}_{IN}$ with variance  $\boldsymbol{\Sigma}_{IN}$, i.e.   background  interactions. 
	
	\begin{definition} \label{def:ambient_noise}
		{(Ambient Noise)}  \normalfont  Edges in $IN$ between differing blocks and  in $NB$,  are characterized by  
		$ (\boldsymbol{\mu}_{AN} , \boldsymbol{\Sigma}_{AN} ) $:  
		$ \boldsymbol{\Sigma}_{AN}  $ is a $K \times K$ diagonal matrix 
		with diagonal    $ (\sigma^{2}_{AN, 1}, ..., \sigma^2_{AN, K}) $
		and off-diagonal entries of 0:
		\begin{align*}
			\Gb_{ij} |   \{ i \in B_q, j \in B_l \}  \sim N_K (\boldsymbol{\mu}_{AN} , \boldsymbol{\Sigma}_{AN}).
		\end{align*}
		%	 \vspace{-.5cm}
	\end{definition}
	For a community   $B_q \subset [n]$ representing the nodes that are contained in block $q$ in a weighted multilayer network $\Gb$, we let $\Gb_q$ represent the set of all edges contained in block $B_q$ across all $K$ layers as defined in Equation \eqref{eq:Xb_q}. Conversely, the set of edges across differing $B_q,B_l$ (i.e. interstitial noise), are defined as in Equation \eqref{eq:Xb_IN}.

	\begin{definition}  
		(Stochastic Block (with) Ambient Noise Model (\texttt{SBANM})) \normalfont
		A $K-$layer (Gaussian) weighted multigraph $\Gb=\{\Gb^1,...,\Gb^K\}$   with $n$ nodes (with index set $[n]$) and $Q$  communities (blocks) indexed by $q$ with a single block that is considered \textit{noise} labeled $NB$  (indexed by $q_{NB}$) with disjoint blocks  $\{B_1, B_2, ...,NB,...,B_Q\}_{q:  q \le Q}$ 
		is a  \texttt{SBANM}  if the following conditions are satisfied.		 
		%	\vspace{-.25cm}
		\begin{enumerate}
			
			\item Edges $\Gb_{ij}$ in the same \textit{correlated} block $B_q$    follows conditional distribution   $N_K(\boldsymbol{\mu}_q , \boldsymbol{\Sigma}_q)$ given block memberships. \ch{Each two edges in the same block are correlated at rate $\rho_q$, across any 2 layers.}
			
			\item Ambient noise $AN$ with $	 N_K    (\boldsymbol{\mu }_{AN},    \boldsymbol{\Sigma}_{AN})$  governs both $IN$ and $NB$: 
			\begin{enumerate} %[noitemsep]
				%			\vspace{-.1cm}
				\item Edges $i \in B_q $ and $j \in B_l$ $ (l \ne q)$  follow a $N_K(\boldsymbol{\mu}_{AN} , \boldsymbol{\Sigma}_{AN} )$ distribution. 
				\item   \textbf{One} block $NB$ contains members whose edges are generated from a $K-$ dimensional  multivariate normal distribution $  N_K (\boldsymbol{  \mu  }_{AN},   \boldsymbol{  \Sigma  }_{AN}   ).   $ 
			\end{enumerate} 	
		\end{enumerate}
	\end{definition}
	
	\subsection {Connection to Existing Models}
	
	The weighted SBM and the affiliation model are both cases of the \textit{mixture models for random graphs} described  by Allman et al. \cite{Allman_2011,ambroise2010new}. This general class of network models accounts for assortativity (the tendency for nodes who connect to each other at similar intensities to cluster together) and sparsity (when there are much fewer edges than nodes).
	In addition to VEM-based inference methods \cite{mariadassou2010,matias2016,paul_randomfx_2018} that are extensively referenced in Section \ref{sec:contributions_context}, \ch{we also note the existing multilayer work in physics \cite{Peixoto2017ModellingSA, Barbillon2015StochasticBM,Taylor2016,Valles_2016} and statistics \cite{Lei2022BiasadjustedSC,Macdonald2020LatentSM,wang2019joint}}.

	Some methods for multigraph SBMs are based on spectral decomposition \cite{wang2019joint,arroyo2020inference,mayya2019mutual}. These methods are typically applied to binary networks and use different sets of methodology or assumptions such as known parameters \cite{mayya2019mutual}, but are still similar enough to warrant comparison.   \ch{The notion of \textit{ambient noise} has been studied in some existing methods. Miao et al. \cite{Miao2021InformativeCI} and Priebe et al. \cite{Priebe_2019} present another spectral method for  the goal of \textit{core identification} and  to separate cores (i.e. signal) from periphery (i.e. noise).
		Zhang et al. detect noise using correlations \cite{Zhang_Noise_2015}}.
	
	One class of these existing methods model edge connectivity of a (potentially multilayer) network as a function of  membership vectors $\Zb_i$ (for node $i$), connectivity matrix $\mathcal{R}_k$ at layer $k$, and the graph Laplacian  \cite{mayya2019mutual,mathews2019gaussian,reeves2019geometry,arroyo2020inference,wang2019joint}.  Typically, the connectivity rate corresponds to  Bernoulli probabilities (for binary networks), but some of these approaches allow for extensions to the weighted cases \cite{wang2019joint,mercado2019spectral,arroyo2020inference}.
	Some work has focused on studying the correlations or linear combinations of the eigenvectors of $\mathcal{R}_k$,  but in most of these cases \textit{conditional independence given labels} between layers is assumed \cite{mayya2019mutual,arroyo2020inference}.
	Another class of these multiplex methods is to devise an optimal aggregation scheme to combine multiple layers and then to use single-graph methods on the resultant  network \cite{levin2019central}. We consider several special cases of \texttt{SBANM} that reduce to existing models.  
	\begin{enumerate} [noitemsep]
		\item If all $\rho_q$ were zero (ie. diagonal $\boldsymbol{  \Sigma}_q$; no correlations amongst communities) and all the within-community signals were the same, then   \texttt{SBANM} is a multivariate extension of the  models posited by Allman et al.  \cite{Allman_2011} or Arroyo et al. \cite{Arroyo2019connectomics}.
		\item If $K=1$, \texttt{SBANM} is a special case of the weighted Gaussian SBM as proposed by Mariadassou et al.  where all inter-block connectivities are fixed at a single rate \cite{mariadassou2010}.  
		\item 
		Wang et al. (\cite{wang2019joint}) constrain the connectivity matrix to a diagonal, which would be analogous to \texttt{SBANM} if ambient noise parameter is fixed at zero: $\boldtheta_{AN}:=0$.
		\item
		Arroyo et al. \cite{arroyo2020inference} describe the  multilayer SBM  \cite{holland_stochastic_1983} for binary graphs which   ``could be easily extended to the weighted cases". The model  assumes   \textit{independent} block parameters $\mathcal{R}_k$ across every layer.
		If there were parameters $  \boldtheta_{AN}$ such that $\mathcal{R}_{ql,k}:= \boldtheta_{AN}$ (for every $q \ne l$), then a special case of  \texttt{SBANM} (where each $\rho_q:=0$) would be recovered.
	\end{enumerate}
	%\texttt{SBANM} finds a loose connection to mixed-membership blockmodels  (MMBM) in that both models attribute uncertainty to membership designations  \cite{airoldi2007mixed}. However,  MMBM  doubly complicates the model parameter landscape with overlapping block combinations, while \texttt{SBANM} more parsimoniously addresses ambiguous memberships by subsuming their characteristics into an umbrella ambient noise term that describes them in the interstitial noise term $IN$.  
	
	\ch{One could conceive of many different other alternative models without some of the assumptions about ambient noise, such as, for example, a noise block that does not require between-block parameters to be the same. Indeed, there can be many nested variations, but we choose this specific model because it is parsimonious, applicable to the primary case study of the \textit{Philadelphia Developmental Cohort}, and can generalize to other potential uses.}

	\ch{ \subsection{Hierarchical ELBO}  \label{sec:ELBO_Optim}} 
	\ch{ We estimate our proposed model using  variational inference (VI)} which is used to estimate SBM memberships as well as their parameters \cite{mariadassou2010,paul_multilayer_2015}. 
	VI is an approach to approximate a conditional density of latent variables using optimization \cite{blei_VI,Jaakkola00tutorialon}. 
	When optimizing the full likelihood is intractable, simpler surrogates of complicated variables are chosen to create a simpler objective function. 
	The Kullback-Liebler (KL) Divergence between this simpler function  and the full likelihood are then minimized. For community detection problems, mean-field (MF) approximations of membership allocations often serve as simpler surrogates of latent approximands to simplify the likelihood function into a {lower bound} (typically known as \textit{evidence lower bound}: ELBO) \cite{mariadassou2010,townshend_murphy_2013,airoldi2007mixed}. 
	\ch{Variational inference  is often used for community detection \cite{sarkar_ambient_JMLR,Yin2020ATC,zhou_fan_variational2020,bickel_variational_13}}.

	\begin{definition} {(Evidence Lower Bound (ELBO))}  \label{def:ELBO}
		For observed data $\mathcal X$  with unknown latent membership variables $\mathcal Z $, 	the evidence lower bound (ELBO) 
		$ \mathcal{L} $ is the 	approximately  optimal likelihood that minimizes the KL Divergence between the approximate distribution $R( \mathcal Z , \Cb )$  and the  posterior frequency $f(   \mathcal Z , \Cb |   \mathcal X ) $. It is expressed as follows:	
		\begin{align*} 
			\mathcal{L}  
			&= \Expec_{ R_{\text{hv}} (\mathcal Z)  }    \left[     
			\log f(\mathcal Z, \mathcal X)
			- \log R_\text{\hv} (\mathcal Z ) \right] 
		\end{align*} 
	\end{definition}  
	This ELBO is minimized in variational inference problems. 
	Ranganath et al \cite{ranganath2016}  have shown that the \textit{hierarchical ELBO} yields a tighter bound than the ELBO as
	defined above.
	%We write the hierarchical ELBO  $ \mathcal{L}^\prime $ here as follows:
	$$ \mathcal{L}^\prime  = \Expec_{ R (\mathcal Z, \Cb  )}[     	\log f(\mathcal Z, \mathcal X)]     - \Expec_{R ( \mathcal Z , \Cb ) }\left[    \log R ( \mathcal Z, \Cb) \right]   + \Expec_{R ( \mathcal Z, \Cb ) }  \left[  \log S( \Cb |  \mathcal Z) \right].$$  
	An inequality is shown between  the ``ordinary" ELBO and the Hierarchical ELBO by Ranganath et al. \cite{ranganath2016} when minimized with parameters $\Theta$ (defined in the following section)
	(details in Appendix \ref{app:ELBO_Proof}) 
	\begin{align*}
		\min_\Theta 	\mathcal{L}^\prime  \le \min_{\Theta}\mathcal L.
	\end{align*} 
	%	The key to this inequality rests in that given by \eqref{eq:HELBO_Ineq}  in the following section:
	%		\begin{align}			\mathcal L ^ \prime 	& \leq   \Expec_{ R_{\text{hv}} (\Zb)  } [       \log \jointPxz]  + \mathcal{H}_\text{hv} ( R(\Zb ))  .		\label{ineq:HELBO}	\end{align}  

	\subsection{Variational EM}
	Variational EM (VEM) is a demonstrably effective method to estimate SBM and more efficient than other approaches (such as MCMC) \cite{mariadassou2010,nowicki_snijders}. Daudin et al. introduced using VEM for binary SBMs (\cite{Daudin2008}. 
	Mariadassou et al. used a similar method  for  weighted graphs \cite{mariadassou2010}.  
	% The estimation algorithm for the proposed model is also rooted in  VEM, but we augment the procedure with   classifying different blocks as \textit{signal} or \textit{noise}
	Though it enables efficient inference, MF VI is limited by its assumption of strong factorization and does not capture posterior dependencies between latent variables arising amongst multilayered networks. Hierarchical variational inference (HVI) provides a natural framework for the two-layered latent structure for multilayer networks. 
	A hierarchy is induced in \texttt{SBANM}  by the assumption that all but one block are categorized as \textit{signal}, while a single block is designated as noise. HVI augments variational approximations with priors on its parameters: this assumption allows joint clustering of blocks and their signal-noise differentiation.
	
	We use a similar approach to that originally used in Daudin et al. \cite{Daudin2008}. 
	The latent variable of interest is the membership allocation matrix $\mathcal Z$, which is a $n \times Q$ matrix where each row $\{\mathcal Z_i\}_{i \le n}$ contains $Q-1$ zeros and a single one that represents membership at that given entry. We  introduce  indicator   $\Cb$  of length $Q$ whose values $C_q$ are 0 or 1  to determine if a block $q$ is signal or noise $NB$. 
	
	In addition to the latent variables and memberships, model  parameters $\Theta$ can be partitioned into $  \Theta_{\text{Signal}}$ and $\Theta_{\text{Noise}}$ in addition to global parameters $\boldsymbol{\alpha}, \Psi$:
	\begin{align} \label{eq:total_Theta}  
		\Theta = \{    \boldsymbol{\alpha}, \Psi  , \Theta_{\text{Noise}} , \Theta_{\text{Signal}} \}. 
	\end{align}   
	$\Theta_{\text{Signal}} = \{   \boldsymbol{\mu}_{q}    , \boldsymbol{\Sigma}_q\}_{q: 1 \le q  \le Q; B_q \ne NB}$ represents the model parameters that are unique to each block $B_q$ (not including $NB$),      and also there is one index $q_{NB}$ for noise block $NB$.
	$ \Theta_{\text{Noise}}= \{ \boldsymbol{ \mu}_{AN}, \boldsymbol{  \Sigma}_{AN}       \} $ represents the noise parameters that govern both interstitial noise $IN$ and noise block $NB$.   For $NB$, each correlation between $K$ layers is set at zero.

	The estimation procedure minimizes the hierarchical ELBO with respect to the parameters $\boldsymbol{\mu}, \boldsymbol \Sigma$ as well as memberships 
	The first term $ \ERhv \log f (\mathcal X,\mathcal Z)   $ which represents the observed joint densities of $\mathcal X$ and $\mathcal Z$ is written in Eq. \eqref{eq:ElogfXZ}.  $ \ERhv [ \log   R (\Cb ,\mathcal Z ) ] $ represents the joint distribution of the two-tiered variational variables and is written as: 
	\begin{align*} \Expec_{ R (\mathcal Z, \Cb  )}[\log R (\mathcal Z, \Cb  ) ] & = \sum_{i,q}    \tau_{iq} \log \tau_{iq} +   \sum_q    \bigg( P_q  \log P_q   +  (1-P_q)         \log  (1-P_q)    \bigg) .
	\end{align*}    
	The third term $\ERhv [ \log   S (\Cb| \mathcal Z  ) ] $  described by Ranganath et al.  as the `recursive variational approximation' \cite{ranganath2016}  for $R(\cdot )$, is 
	\begin{align*}
		\Expec_{ R (\Zb, \Cb  )}   \log  S (\Cb| \Zb  ) )   = 
		\sum_{i,q}  \bigg( P_q \log \Psi +(1- P_q) \log (1-\Psi)  \bigg)  \tau_{iq}.
	\end{align*}
	Combining the above terms, the hierarchical ELBO is written as:
	\begin{align*} 
		\mathcal{L}^\prime 
		=&  \ERhv [\log f (\mathcal X | \mathcal Z) ] +  \sum_{i,q}   \bigg(   \tau_{iq} \log \alpha_q    
		+ \tau_{iq} \log \tau_{iq}  +  \bigg( P_q \log \Psi +(1- P_q) \log (1-\Psi) \bigg)   \tau_{iq} \bigg) 
		\\
		&+  \sum_q    \bigg( P_q  \log P_q   +  (1-P_q)         \log  (1-P_q)    \bigg) .
	\end{align*}  
	\ch{In the following section we outline the EM framework and then discuss the derivations of $S(\Cb|\Zb)$ and $R(\Zb, \Cb)$.
	}
	% The hierarchical ELBO written in full can be found in Appendix \ref{app:ELBO}. 
	Detailed derivations for all of these terms can be found in Appendix \ref{appendix:proofs_and_deriv}.

	The main innovation in our approach is in modeling joint approximate conditional distributions of $\Zb$ and $\Cb$ in addition to $\Zb$:
	\begin{align} \label{eq:R_ZC}
		R_{\mathcal X}(\Zb,\Cb  )  \approx  \prod_{i, q} \bigg(  m   (\Zb_i ,\boldsymbol{\tau}_i)  \times \text{Bern}(C_q, P_q)   \bigg) . 
	\end{align}  
	In Eq. \eqref{eq:R_ZC} $   R_{\mathcal X}(\mathcal Z,\Cb  )  $ represents the joint variational distribution of the memberships $\mathcal Z, \Cb$.
	The exact joint distribution is unknown, but the hierarchical mean field (MF) approximation  $   R(\mathcal Z,\Cb  )  $  can be used to obtain a factorized estimate for its marginals \cite{ranganath2016}.  We write the approximate composition of marginals using ``$\times$"; $m(\cdot)$ represents the multinomial distribution. The variational approximations of membership matrix $\mathcal Z$ is a $n \times Q$-dimensional matrix  $\boldsymbol{\tau}$, each row represents the vector of probabilities that approximates $\mathcal Z_i$ \cite{mariadassou2010}.  
	
	The variational approximation of the indicator $C_q$ at block $q$ is the probability $P_q$, which typologizes $\boldsymbol{\tau}$. Under variational distribution $R$, each member $i$ of a block $B_q$ adheres to multinomial distribution with parameter $\tau_{iq} = \Expec[Z_{iq}]$. $P_q$ is the probability of $C_q$ akin to $\tau_{iq}$. 
	%For each $q$, $P_q$  is ambient noise with prior probability $\Psi$.  
	$\Psi: = 1/Q$ is the \textit{prior} probability of block $\{B_q\}_{q: q\le Q}$ to be noise block $NB$.
	A derivation for $\Psi$ is given in Appendix \ref{app:derivPsi}.
	
	The hierarchical MF distribution $R_\text{hv}(\mathcal Z )$ as introduced by Ranganath et al. \cite{ranganath2016} ``marginalizes out" the MF parameters in $   R_{\mathcal X}(\mathcal Z ,  \Cb  )$ and is written as
	\begin{align*}
		R_\text{hv}(\mathcal Z ) = \int  R_{\mathcal X}(\mathcal Z ,  \Cb  )   \text{d}  \Cb.  
	\end{align*}
	Following the methods of estimation proposed in prior work on SBM estimation \cite{Daudin2008,mariadassou2010,paul_randomfx_2018}, 
	$R_{\mathcal X}(\mathcal Z, \boldsymbol{\tau} )$ represents the multinomial variational distribution wherein each  $\tau_{iq}$ approximates  the membership allocations.  $R_\text{hv}(\mathcal Z ) $ is the same as the variational distribution $R$ in prior work.
	\begin{figure}
		\centering
		\includegraphics[width=0.66\linewidth, trim= {0cm 0cm 0cm 0cm }]{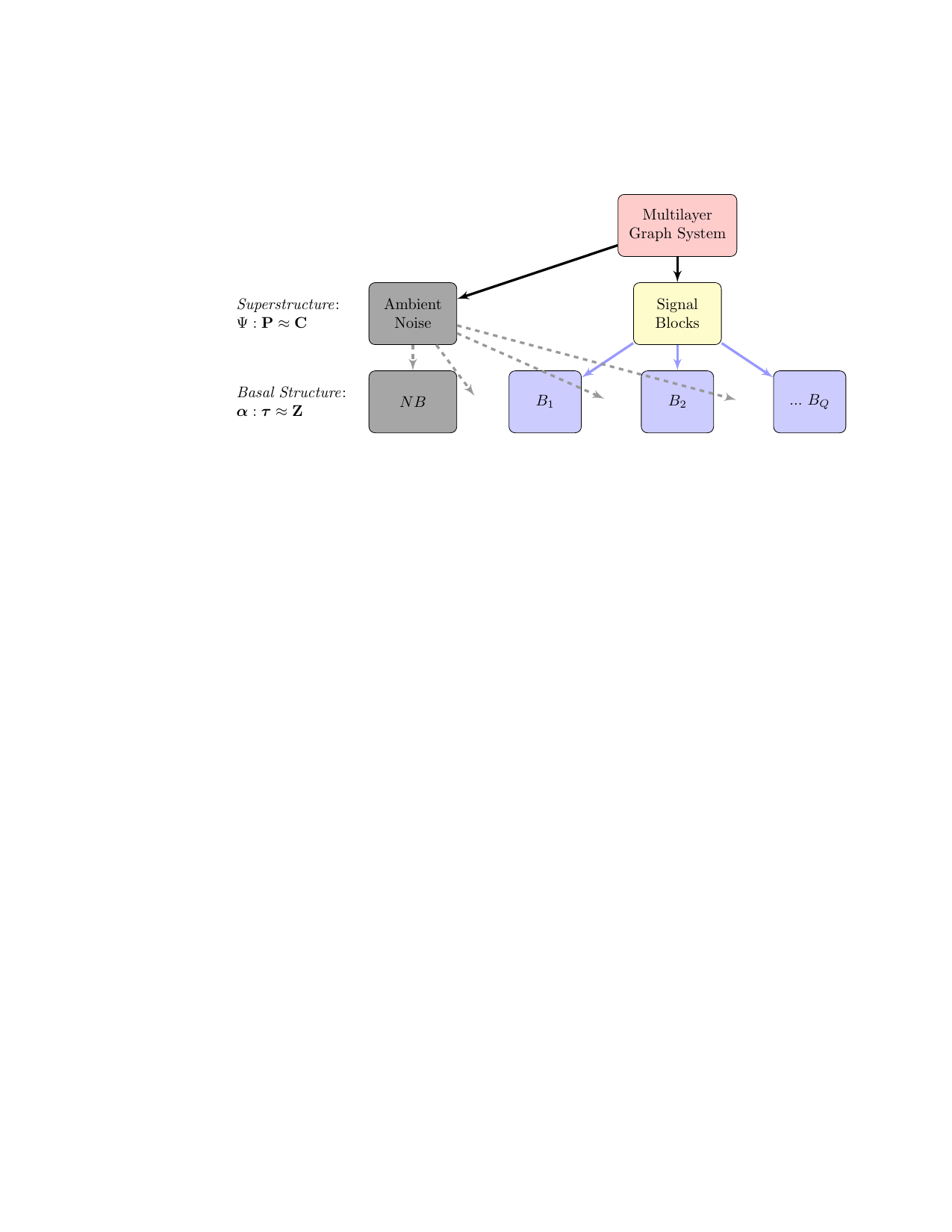}
		\caption{\footnotesize Schematic diagram for the hierarchy of organization for blockstructures with signal/noise differentiation for blocks as the top layer and the actual blocks as the bottom layer. }
		\label{fig:flowchart-signalnoise}
	\end{figure}% \end{tabular}
Prior VEM-based  estimation methods focus on optimizing ELBO \cite{paul_multilayer_2015,paul_randomfx_2018,mariadassou2010,Daudin2008}. 

%The following expression describes $\mathcal{L} $, the 	approximate  likelihood that minimizes the KL Divergence between $R( \Zb , \Cb )$  and the  posterior frequency $f(   \Zb , \Cb |   \Xb ) $.  
$\mathcal{L} $ can be rewritten as the following 
\begin{align*} 
	\mathcal{L}   = \Expec_{ R_{\text{hv}} (\mathcal Z)  } [       \log \jointPxz]  + \mathcal{H}_{\hv}( R(\mathcal Z)),
\end{align*} 
where   $\mathcal{H}$  is the entropy of variational variable $\mathcal Z$ \cite{ranganath2016}.
%More details on ELBO  are found in Definition \ref{def:ELBO} (Appendix \ref{app:unabridged}). 
A sharper bound than the ELBO is derived by introducing the marginal recursive variational approximation $S(\Cb|\mathcal Z)$, and then exploiting the following inequality with joint MF distribution $R(\mathcal Z, \Cb )$  and the (hierarchical) entropy $\mathcal{H} (\mathcal Z)$:
\begin{equation} \label{eq:HELBO_Ineq} 
	\mathcal{H}_\text{hv} ( R(\mathcal Z )) 
	\geq   - \Expec_{R ( \mathcal Z, \Cb ) }\left[    \log R (\mathcal Z, \Cb) \right]   + \Expec_{R (\mathcal Z, \Cb ) }  \left[  \log S( \Cb | \mathcal Z) \right]. 
\end{equation} 
%A proof of inequality \eqref{eq:HELBO_Ineq} is given in Appendix \ref{app:ELBO_Proof} \cite{ranganath2016}.   

The jointly  factorized mean field  components are $R ( \Cb  )$ and $ R (\mathcal Z| \Cb )$.  $R ( \Cb  )$ ia expressed as
$ R (\Cb )  = \prod_q    P_q    ^{C_q} (1-P_q)    ^{1-C_q}   $ and $R (\mathcal Z|\Cb  ) $ is written similarly to prior variational membership variables \cite{Daudin2008,mariadassou2010}, exponentiated by $C_q$:
$$R (\mathcal Z|\Cb  ) =    \prod_q  \bigg(   \prod_{i}  \tau_{iq}^{  Z_{iq} }  \bigg) ^{C_q}  \bigg(  \prod_{i} \tau_{iq}^{  Z_{iq} } \bigg)^{1-C_q},$$
combining to form   	$R (\mathcal Z, \Cb  ) =     R (\mathcal Z | \Cb ) R (\Cb ) .$
Moreover, the recursive variational  approximation  $S (\Cb| \mathcal Z  )$ \cite{ranganath2016} estimates the higher-order memberships $\Cb$   using the basal memberships $\mathcal Z$:
$$    S (\Cb| \mathcal Z  )   = \prod_q \prod_i \bigg(    \Psi ^ {C_q}    (1-\Psi) ^{ 1- C_q} \bigg) ^ {Z_{iq}}. $$ 
%The global signal rate $\Psi:=\Prob(NB)$ represent the \textit{prior} probabilties of each group membership $C_{q}$.%, or the parameters of the \textit{initial stationary distribution} of $P_q$ \cite{matias2016}.

%$S(\Cb | \Zb )$ represents the variational approximation model for the posterior distribution $R$ given the memberships $\Zb, \Cb.$  
%This bound  is sharper than the ELBO because it  takes into account the entropy of $R$. We reproduce a proposition from prior work  as follows;

%  \begin{proposition}
	%  	For observed data $\Xb$ and membership matrix $\Zb $, if the membership matrix $\Zb$ are structured hierarchically, with overarching layer $\Cb$  which partition the membership variables  $\Zb$ into groups, and with joint distribution $R(\Zb, \Cb)$, 
	%  	then the Hierarchical ELBO provides a better approximation of the true log-likelihood: i.e. a lower evidence lower bound.
	%  \end{proposition}
% 

\section{Estimation Algorithm} \label{sec:est_algorithm}   
We summarize the targets of inference here to set up the language for the rest of the section. Variational parameters   $\boldsymbol{ \tau}_q$ and $P_q $  (for $q: q \le Q$) approximate the membership allocations, while model parameters describe the parametric qualities of the blocks.   Within the set of model parameters, we further distinguish  \textit{local} and \textit{global} parameters. \textit{Local} block-wise parameters are represented by $\boldsymbol{\Theta}_q$, and membership probabilities $\alpha_q$ for each $q$. \textit{Global} parameters are $ \Psi  , \Theta_{\text{Noise}} $. 
We use  VEM to estimate variational parameters in the E-step and model parameters in the M-step, alternating these steps until the differences in $\btau$ become miniscule.  We present the closed-from solutions to all the estimates below, but detailed derivations for every term is found in Appendix \ref{app:VEM_Calc}. Operationally, the E-step and M-step are implemented in \ch{an} alternating fashion until the membership variables $\btau$ converge. %The algorithm can also be initialized using  a variety of ways, one being spectral clustering of edges aggregated across layers. 

%\subsection{Parameter Estimation} \label{sec:param_estimation}  
First we introduce some more terms 
\begin{align} 
	f( G^k_{ij}, \boldsymbol{\mu}_q, \boldsymbol{\Sigma}_q )  &= \frac 1 2   ( G^k_{ij}     - \boldsymbol{\mu}_{q}   )^T  \boldsymbol{\Sigma}_q^{-1} (G^k_{ij}      - \boldsymbol{\mu}_{q}   )         -   (2\pi)^{K/2}  (\log    |\boldsymbol{ \Sigma}_q |  )^{  1 /2 }
	\stepcounter{equation}\tag{\theequation} \label{eq:abbrev_Sigfreq}
	\\
	f( G^k_{ij}, \boldsymbol{\mu}_{AN}, \boldsymbol{\Sigma}_{AN} ) &= \frac 1 2   ( G^k_{ij}     - \boldsymbol{\mu}_{AN}   )^T  \boldsymbol{\Sigma}_{AN}^{-1} (G^k_{ij}      - \boldsymbol{\mu}_{AN}   )         -   (2\pi)^{K/2}  (\log    |\boldsymbol{ \Sigma}_{AN} |  )^{  1 /2 }
	\stepcounter{equation}\tag{\theequation} \label{eq:abbrev_Noifreq}.
\end{align} 
Equation \eqref{eq:abbrev_Sigfreq} denotes the density for  edges in a signal block $\big( \boldsymbol{\mu}_q , \boldsymbol{\Sigma}_q \big)$ at layer $k$;  equation \eqref{eq:abbrev_Sigfreq} denotes density for edges with noise $\big( \boldsymbol{\mu}_{AN} , \boldsymbol{\Sigma}_{AN} \big)$.
\ch{Graph $\Gb$ with $K$ graph-layers $\{\Gb^1,..., \Gb^K\} $, has conditional density 
	\begin{align*}
		\log f   (  \Gb| \mathcal Z  )  
		= &
		\sum_{q : B_q \ne NB;q \le Q } \sum_{k \le K }   
		\sum_{i , j \le n }   \    Z_{iq} Z_{jq}    f( G^k_{ij}, \boldsymbol{\mu}_q, \boldsymbol{\Sigma}_q) +   \\
		\textbf{1} (B_q& = NB )   \sum_{i , j \le n}     Z_{iq} Z_{jq}  f( G^k_{ij}, \boldsymbol{\mu}_{AN}, \boldsymbol{\Sigma}_{AN} ) 
		+   \sum_{q,l \le Q : q \ne l }       \sum_{i,j  \le n  }   Z_{iq} Z_{jl}  f( G^k_{ij}, \boldsymbol{\mu}_{AN}, \boldsymbol{\Sigma}_{AN} ) .
		\stepcounter{equation}\tag{\theequation} \label{eq:fXZ}
	\end{align*}
	The log likelihood portion of the ELBO, $\log( f   (  \Gb| \mathcal Z  )  )$, written above in Equation \eqref{eq:fXZ} is comprised of three parts: unique signals for every $q$ (top), the noise block $NB$ (bottom left), and the interstitial noise $IN$ (bottom right).
	$AN$ is the global \textit{ambient noise}  whose parameters govern the \textit{interstitial noise} as well as \textit{noise block} as in Definition \ref{def:ambient_noise}.
	%The probability of block $B_q$ ``as signal" is demarcated by $P_q$. 
	Given variational variables   $\boldsymbol{\tau}, \Pb$, the expected likelihood is 
	\begin{align*}
		\Expec_{R_{\bG}} [\log( f   (  \Gb| \mathcal Z  )  ) ] 
		=& 
		\sum_{q : q \le Q}     \Prob(B_q  \ne NB)  \tau_{iq}  \tau_{jl}    f( G^k_{ij}, \boldsymbol{\mu}_q, \boldsymbol{\Sigma}_q)   
		\\
		+  \Prob(B_q =  NB) &     \sum_{i , j: i \ne j }         \tau_{iq} \tau_{jl}  f( G^k_{ij}, \boldsymbol{\mu}_{AN}, \boldsymbol{\Sigma}_{AN} ) 
		+      \sum_{q, l \le Q: q \ne l   }       \sum_{i , j \le n }         \tau_{iq} \tau_{jl}  f( G^k_{ij}, \boldsymbol{\mu}_{AN}, \boldsymbol{\Sigma}_{AN} ) .
	\end{align*} 
	The $   \ERhv [\log f (\mathcal Z ) ] $ term restores to the same form as earlier work on SBMs \cite{mariadassou2010,Daudin2008}:
	\begin{equation}  \label{eq:fZb_same}
		\ERhv [\log f (\mathcal Z ) ] 
		=    \sum_{i,q}       \tau_{iq} \log \alpha_q,    
	\end{equation}     
	where as in prior work \cite{Daudin2008,mariadassou2010}, the variables $\alpha_q$ represent the  membership probabilities of  $Z_{iq}$ and sum to 1:
	\begin{align}\label{eq:alpha_q}
		\alpha_q = \Prob(i \in B_q) = \Prob( Z_{iq}=1).
	\end{align}
	For the rest of the manuscript we use  $  \sum_{i,q}     ( \cdot )$ to signify the double summation across all $i \le n$ and $q \le Q$.  The expected log frequency of the membership vectors $\mathcal Z$ reduces to that in canonical SBMs. Details of this identity are found in Appendix \ref{app:preserve_ElogfZ}.} The joint density is written as:
\begin{equation} \label{eq:ElogfXZ} 
	\Expec_{ R (\mathcal Z, \Cb  )}[\log f (\Gb , \mathcal Z )]  =     
	\Expec_{ R_{\bG}}
	[\log f (\Gb | \mathcal Z ) ] + \sum_{i,q}     \tau_{iq} \log \alpha_q .  
\end{equation} 
The expression is written in full in Appendix \ref{app:ElogfX|Z}.

\subsection{E-Step} \label{sec:E_stepTauEst}
The E-Step of the algorithm estimates the variational variables which represent block memberships $Z_{iq}$ of the nodes $i$ as well as  $C_q$  which  represent the  ``memberships of memberships". First we describe the estimation procedure for the variational approximations $\tau_{iq}$, next we describe the estimation of signal-noise differentiation probabilities  $P_q$. This two-step procedure differs from prior work because of an additional hierarchical estimation step of the higher-level variational variables $P_q$.

%\subsubsection{Estimation of Membership Vectors $\boldsymbol{\tau} $} 
$\tau_{iq} $ is estimated by an iterative fixed-point approach.   Derivatives for each $\tau_{iq}$ are calculated based on model parameters and $\tau_{jl}$, 
\begin{align*} 
	\log(\tau_{iq})   \propto   \log(\alpha_q )
	&+     \sum_{k \le K  }  \sum_{j  \le n }   \tau_{jl} \bigg(        P_q f(G^k_{ij}, \boldsymbol{\mu}_q, \boldsymbol{\Sigma}_q )   + (1-P_q)  f(X^k_{ij}, \boldsymbol{\mu}_{AN}, \boldsymbol{\Sigma}_{AN} ) \\
	&+    \sum_{l \le Q ; l \ne q }  f(G^k_{ij}, \boldsymbol{\mu}_{AN}, \boldsymbol{\Sigma}_{AN} )    \bigg)     -1  + P_q \log \Psi +(1- P_q) \log (1-\Psi)  .
\end{align*}
After exponentiating, the fixed-point equation can feasibly be solved after the iterating the system until relative stability.  This is the same approach as most existing literature \cite{Daudin2008,mariadassou2010}.  $P_q$ are calculated as follows:
\begin{align} \label{eq:Pq_calculation}
	\widehat{P_q} =   1 - \bigg(   1+   \bigg[  \exp  \bigg(   \sum_{k \le K }  \sum_{i,j \le n}   \tau_{iq} \tau_{jq}  \bigg(     f(G^k_{ij}, \boldsymbol{\mu}_q, \boldsymbol{\Sigma}_q ) - f(G^k_{ij}, \boldsymbol{\mu}_{AN}, \boldsymbol{\Sigma}_{AN} )  )   +   \log  \bigg(
	\frac{1-\Psi }{ \Psi }
	\bigg)   \bigg)  \bigg) \bigg]  ^{-1}   \bigg) ^{-1}. 
\end{align} 
Calculations for each of these terms are provided in Appendices \ref{app:tau_est} and \ref{app:noise_probPq}.
We  apply  stochastic variational inference (SVI) to calculate the membership parameters $\tau_{iq}$ and $P_q$. Details for SVI are described in Appendix \ref{app:StochVI}.

\subsection{ M-Step} \label{sec:Mstep}
Similar to its estimation in Daudin et al.  \cite{Daudin2008}, $ {\alpha}_q $ are estimated as follows using Lagrangian multipliers: $   \hat{\alpha}_q = {  \sum_{i,q} \tau_{iq} } / { n}.  $
The closed-form estimate for the \textit{local} parameters for the mean vector $   {\boldsymbol{ \mu } }_{q} $ for each block $q$   from the M-step is
\begin{align*}
	\widehat{\boldsymbol{ \mu } }_{q} & = \frac{ \sum_{i,j} \tau_{iq} \tau_{jq} \Gb_{ij}     }{ \sum_{i,j} \tau_{iq} \tau_{jq}    }   P_q +     \boldsymbol{ \mu}_{AN}    (1- P_q ).
\end{align*}

In the above, and all subsequent expressions in this subsection, the derivations are located in Appendix \ref{app:M_StepSignalTerms}.
Similarly to mean calculations, the variance calculations (along diagonals) are 
\begin{align*}
	\widehat {  \boldsymbol{\Sigma}_q}  &  =   \frac{  \sum_{i,j}   \tau_{iq}  \tau_{jq}  ( \Gb_{ij} - \boldsymbol \mu_q  )^2     } {  \sum_{i,j}   \tau_{iq}  \tau_{jq}    }   P_q   
	+  \boldsymbol \Sigma_{{AN}}  (1-P_q ). 
\end{align*}  
The  cross-term for two layers  $h, k$  is written as:
\begin{align*} 
	\widehat{\boldsymbol  {\Sigma}}_{hk,q}
	&=\frac{   \sum_{i,j}   \tau_{iq}  \tau_{jq}  (G^k_{ij} -  \mu_{q,k}) (G^h_{ij} -   \mu_{q,h}  )         } {  \sum_{i,j}   \tau_{iq}  \tau_{jq}   }  P_q.
\end{align*}
The element-wise  correlations at iteration $t$ across layers  $h,k$ ($h \ne k$)  are then calculated,   and the maximum (if $K>2$)  of these values is taken as the putative correlation (across all layers) for block $q$  
\begin{align*}
	\hat{\rho_q}  =   \max_{h,k}      \frac{\widehat{ {\Sigma_{hk}^q}} }{    
		\sqrt{\widehat{ {\Sigma^{ h}_q}} \widehat{ {\Sigma^{ k}_q}}}}. 
\end{align*}
If $K=2$ then no maximum needs to be taken. This is an operational step of the optimization and does not necessarily yield closed-form estimates.   This value is also known as the \textit{mutual coherence} of estimated correlation matrix and serves as a summary statistic of the estimates for correlations \cite{Tropp_JustRelax2006}.   %that is consistent with the approximation of the optimization problem we solve with VEM  
%Theoretical properties of these relationships should be explored in future work. 
%  analogous to the principles of taking the maximum of correlations in  mutual incoherence.

\subsubsection{Estimation of Global Parameters } 
\newcommand{\ttau}{\tilde{\tau}}
\newcommand{\nottau}{\dot{\omega}}

At each iteration of VEM, the closed-form solutions of the global parameters $	{ \widehat{   \boldsymbol{\mu}} _{AN}    }$ and 
$		{ \widehat{\boldsymbol{\Sigma}}_{AN}}$
are written as follows.  $	{ \widehat{   \boldsymbol{\mu}} _{AN}    }$ is 
%To calculate the global parameters, the global noise probability term $\Psi$ defined previously is 
\begin{align}
	{ \widehat{   \boldsymbol{\mu}} _{AN}    }
	& =    \Psi  \frac {  \sum_{j , i  }    \sum_{l, q: q \ne l } \tau_{iq}  \tau_{jl}   
		\Gb_{ij}  
	} {  \sum_{j , i  }    \sum_{l, q: q \ne l }     \tau_{iq}  \tau_{jl}         } 
	+(1-\Psi)
	\frac{	  \sum_{j , i  }    \sum_{ q }    \tau_{iq}    \tau_{jq}   (1-P_q)
		\Gb_{ij}  }{   \sum_{j , i  }    \sum_{ q }    \tau_{iq}    \tau_{jq}     (1-P_q) }.
\end{align} \label{eq:mu_AN} 
The variance of global parameters is similarly calculated as:  
\begin{align*}
	{ \widehat{\boldsymbol{\Sigma}}_{AN}}
	& = \Psi \frac {  \sum_{j , i  }    \sum_{l, q: q \ne l } \tau_{iq}  \tau_{jl}   
		(\Gb_{ij} - \boldsymbol{\mu}_{AN} )^2  
	} {  \sum_{j , i  }    \sum_{l, q: q \ne l } \tau_{iq}  \tau_{jl}    } 
	+    (1 - \Psi) \frac {   \sum_{j , i  }    \sum_{ q }  \tau_{iq}  \tau_{jl}  (1-P_q) 
		(\Gb_{ij} -   \boldsymbol{\mu}_{AN}   )^2      } {    \sum_{j , i  }    \sum_{ q }   \tau_{iq}    \tau_{jq}   (1-P_q)   } ,
\end{align*} 
the covariance term for global noise, as stated earlier, is zero. 
Derivations for these expressions are in Appendix \ref{app:deriv_muAN}.

\section{Case Study: PNC Psychopathology Networks} \label{sec:results_PNC}
We apply \texttt{SBANM} to the PNC data as the primary case study of this paper.  We first construct networks from \textit{anxiety, behavior}, and \textit{mood} psychopathologies as described in Section \ref{sec:data}, then run the algorithm and  subsequently cross validate and empirically verify the results with diagnoses data. 
We use the notation for data outlined in Section \ref{sec:data}: $\Xb$ represents the layer of symptom response networks for anxiety, $\Yb$ for behavior, $\Zb$ for mood disorders. Correspondingly, we let  $ \big( \boldsymbol \mu_{\Xb}, \boldsymbol \mu_{\Yb}, \boldsymbol \mu_{\Zb} \big)$ represent the means of the edge-connections for each block representing anxiety, behavior, and mood with corresponding standard deviations $  \big( \boldsymbol \sigma_{\Xb} , \boldsymbol \sigma_{\Yb}, \boldsymbol \sigma_{\Zb} \big)$. 

% \todo{Imported}
Not much prior work has  approached the study of psychiatric conditions using subject-networks. We construct networks of individuals as nodes and their similarity as edges. Distinct conditions are represented by different layers as in a multilayer network.  The goal of introducing \textit{ambient noise} to psychopathology symptom networks is to  identify groups of people who have similar clinical characteristics  and facilitating early identification of individuals at high risk of developing the disorder, in this case psychosis spectrum. 
Existing machine learning studies of psychosis spectrum  typically require input from  already-diagnosed subjects. These analyses usually use methods such as logistic regression \cite{dcannon_psyRiskCalculator2016}.   However, we aim to classify anxiety, mood, and behavior symptoms to identify who is at risk for psychosis \textit{without} the knowledge of which patients have psychosis spectrum.

Unsupervised analysis is useful in early identification in clinical settings, and we leverage the \texttt{SBANM} method to conduct exploratory analysis that will pave way for potential evidence-based  intervention schemes.
The developmental periods prior to the onset of psychotic disorders are critical targets of early intervention and as such serve appropriate data for experimental hypotheses of `exploratory clustering and classification for the purpose of early detection.

% by identifying potential onset of psychosis from a pre-schizophrenia patient group.  

\ch{\subsection{Scientific Hypothesis}}\label{sec:scientific_hypothesis}
We have introduced literature in Section \ref{sec:motivation} that details specific timing for onset of psychosis in early adult subjects \cite{Cupo2021,Krabbendam2005DevelopmentOD,Tandon2012AttenuatedPA}. 
Tandon et al. and Cupo et al.  posits a qualitative change in subjects' psychopathologies as they transition from the prodromal stage into the psychosis stage  \cite{Tandon2012AttenuatedPA,Cupo2021}. 
Existing research on pre-psychotic psychopathologies  note that ``psychotic disorders may be due to \textit{nonpsychotic} common mental disorders such as depression and anxiety" \cite{Cupo2021}.
\ch{Cupo et al. document that ``epidemiological  cohorts also demonstrate co-occurring and reciprocal relationships" between these disorders and psychosis. A myriad of interacting psychopathologies, notably anxiety, behavior disorders, depression, mood disorders known as the psychosis \textit{prodrome} are demonstrated to {precede} the \textit{first episode} of psychosis \cite{Cupo2021,Krabbendam2005DevelopmentOD,Chen2019PatternsOS}. After the first episode, however, the diseases progresses out of the \textit{prodrome} and into ``full-blown psychotic illness":} \toch{several works have described this decoupling, but few have statistically modeled such a transition \cite{Tandon2012AttenuatedPA,Renwick2015ProdromalSA}.}

We seek to separate the subjects that have transitioned  to psychosis from those who did not. We model the \textit{co-occurrence} among common prodromal symptoms as the correlations between the multilayer network $\Gb$ constructed from anxiety ($\Xb$), behavior  ($\Yb$), and mood  ($\Zb$) disorders (Section \ref{sec:data}).  
%These networks are highly correlated for subjects in the prodromal stage, but are independent for those  who have crossed the threshold of psychosis. 
Prior work suggest that  there is separation among \textit{independent} and \textit{correlated} groups of subjects (in $\Gb$) \cite{Tandon2012AttenuatedPA}.
\toch{We hypothesize that since prodromal symptoms are highly correlated \cite{Chen2019PatternsOS,Cupo2021}, and that they are not associated with non-initial symptoms of psychosis \cite{Renwick2015ProdromalSA}},
the sample that has converted to psychosis from the prodrome \cite{Tandon2012AttenuatedPA}
will have \textit{independent}, but exhibit\textit{ high rates of}, prodromal symptoms.  %Tandon et al. describe such a conversion as a persistence in the duration or duration of psychotic symptoms, while coinciding with the decoupling of the co-occurrence in non-psychotic symptoms \cite{Cupo2021}. 

We have also traced the literature on sex differences among such co-occurrences between common psychopathologies and their relationship with the first episode of psychosis \cite{Mc1988GenderDI,Nowrouzi2015AgeAO,Kirkbride_2012_Sexdiff,Li_SexDiff_2016}.
We restricted the analysis to the early adult sample for this study, and split up the sexes among subjects to examine the potentially differential effects of clustering (Section \ref{sec:data}).
Li et al. \cite{Li_SexDiff_2016} cite several other   works in describing the difference in the peaks of rates of psychosis onset between sex \cite{Mc1988GenderDI,Nowrouzi2015AgeAO,Kirkbride_2012_Sexdiff}. The consensus among literature describe the peaks of onset as between 21-25 for males, and 25-30 with another peak occurring much later in the middle ages for females. Indeed, for the PNC sample to overlap with the range of psychosis onset, the target sample is male early adults aged 18-21. The sample size is thus 764 subjects.

\ch{\subsection{Clinical Verification}} \label{sec:results_clinical}

We ran the \texttt{SBANM} algorithm on early adult PNC subjects stratified by sex. We set $Q=3$ based on the optimal \text{Integrated Composite Likelihood} \cite{matias2016}, of which a more detailed explanation is provided in Section \ref{sec:expeirmental_procedure}. 
Table  \ref{fig:psychtable} shows the average proportions of subjects who met the criteria for clinical diagnoses of anxiety, mood, and behavior disorders, psychosis spectrum as well as those who were typically developing (TD).  The columns after block labels and sizes are positive indicators for anxiety, behavior, and mood disorders.  Each clinically identified indicator is  `yes' or `no' for each subject.
Among males, the results remarkably differentiate rates of psychosis between the $NB$ group and the other correlated clusters (Table \ref{fig:psychtable}, left). However, similar rates of differentiation are not found among females  (Table \ref{fig:psychtable}, right). 
Furthermore, the rates of psychosis in the independent block $NB$ from clinical verification is 54\%, while none fall under \textit{typical development} (TD) among males.

The high rates of psychosis spectrum among males in $NB$ coincide with the clusters where anxiety, mood, and behavioral disorders are disjoint support the hypothesis that the \textit{first episode of psychosis} marks a qualitative transition from prodrome to psychosis spectrum. Furthermore, the timing (in early adult) and difference in the distinguishing characteristics among $NB$ blocks between sex also concur with the prior work in timing of psychosis onset.  The most significant clustering result  is found among subjects in the $NB$ block  (Table \ref{fig:psychtable}). Among these subjects, their high rates of psychosis,  and low (0\%) rates of \textit{TD}  lends evidence of a psychosis spectrum conversion group \cite{Tandon2012AttenuatedPA}. The high rates of anxiety, behavior, and mood disorders persist in spite of their independence among layers  indicate that these prodromal signs persist, but become decoupled \cite{Cupo2021,Renwick2015ProdromalSA}. 
%\toch{Indeed, there is very little discussion about eth epersistence}
Uncorrelated symptoms among these subjects in $NB$ could suggest that they tend towards psychosis through individuated channels.

\begin{table}[htbp]
	\ch{    \textsc{Psychopathological Symptom Groupings (Early Adult (18-21))} }
	\par \smallskip
	\scriptsize
	\centering 
	\begin{tabular}{ccc|ccccc|}
		\hline
		\hline 
		\multicolumn{8}{c}{\textsc{ Male}}  \\
		\hline 
		&Block	& $n$ & \textbf{Anx} & \textbf{Beh} & \textbf{Mood} & TD & \textbf{Psy} \\ 
		\hline
		\Large $\bullet$ &	$NB$ & 41 & 73 & 95 & 46 & 0 & 54 \\ 
		\color{Yellow}\Large $\bullet$	& $S1$ & 244 & 51 & 39 & 29 & 30 & 38 \\ 
		\color{SkyBlue}\Large $\bullet$		& $S2$ & 471 & 39 & 21 & 9 & 42 & 14 \\ 
		\hline
	\end{tabular}
	\begin{tabular}{ccc|ccccc}
		\hline
		\hline 
		\multicolumn{8}{c}{\textsc{Female}}  \\
		\hline 
		&Block	& $n$ & \textbf{Anx} & \textbf{Beh} & \textbf{Mood} & TD & \textbf{Psy} \\ 
		\hline
		&	$NB$ &  35 & 11 & 17 & 3 & 69 & 20 \\ 
		& $S1$ &883 & 65 & 25 & 26 & 24 & 17 \\ 
		& $S2$ & 189 & 36 & 16 & 12 & 53 & 25 \\ 
		\hline
	\end{tabular}
	\caption {\footnotesize  \textit{Mean summary statistics for psychiatric diagnoses (approximate diagnostic criteria of DSM-IV) for early adult males and females. The following columns details symptoms of anxiety, behavior, and mood disorders. The `Psy' column gives the average of whether the respondants have overall diagnoses for psychosis and the `TD' column indicates typical development.}}
	\label{fig:psychtable}
\end{table}

%These results demonstrate the ability of \texttt{SBANM} to integrate \textit{anxiety, mood} and \textit{behavior} psychopathologies  to differentiate groups entailing differential behaviors. 

Psychosis rates are clearly differentiated between different blocks; those in $NB$ are consistently higher. The differential clustering results for early adult males likely demonstrate latent neurodevelopmental pathways for onset of psychosis.  Psychosis onset is characterized by presence of active psychotic symptoms and occurs during early adulthood between 21-25 for males \cite{Li_SexDiff_2016}. This represents a continuum with individuals reporting proportionally more depression, anxiety, and behavior psychopathology  prior to the onset of psychosis \cite{Cupo2021}.  As symptoms segregate with growth and become statistically independent, clustered subjects with higher correlations $\rho_q$  correspond to  more interconnected  pre-psychotic pathways \cite{Chen2019PatternsOS}, while subjects with independent symptoms are indicative of progressing past the first episode of psychosis \cite{Renwick2015ProdromalSA,Tandon2012AttenuatedPA}. That these categories emerged without any supervision demonstrates the discerning ability of \texttt{SBANM}. Results did not show any strong differentiation in other demographic characteristics (Table \ref{fig:ClusDemog_table} in Appendix \ref{app:pnc_posthoc}). 

\ch{\subsection{Method Comparison for PNC Data}}\label{sec:method_comparison}

We compare different community detection methods to cluster the PNC data.
In the absence of ground-truth data for clusters among real data, we consider the diagnoses data of PNC subjects to validate results. 
The zero-correlation constraint between the layers $\Xb, \Yb, \Zb$ within $NB$ discovered by \texttt{SBANM} is natural for testing our clinical hypothesis (Section \ref{sec:scientific_hypothesis}).
\toch{We interpret $NB$ as the group of subjects that have transitioned from the prodromal stage to psychosis spectrum.}
%Other methods do not yield an explicit \textit{NB} in the same way  that \texttt{SBANM} does, 
We compare the clusters with the highest psychosis rates that was obtained from each method; we define  $q^*$ as the cluster that yields the partition of subjects with the highest rates of psychosis. Table \ref{table:method_compare} (right) compares the characteristics of  $q^*$ for each method by taking their average rates of anxiety, behavior, and mood disorders, in addition to those of psychosis and typical development (TD).  $Q$ for each method is assessed based on their own internal criteria for best fit. 
\texttt{dynsbm} (row 2) for example finds 8 groups  based on its own ICL criteria.  Optimal $Q$ was between 3 or 4
for most spectral methods.  
The results of \texttt{SBANM} (Section \ref{sec:results_clinical}), yielded 41 subjects in $NB$ ($q^*$) with average psychosis rates of 54\%. The proportion of subjects that approximate criteria for a clinical anxiety disorder from post-hoc evaluations is 73\%, behavioral disorders  was 95\%, and 46\% for mood disorders.
\toch{Out of all the  methods,   \texttt{SBANM} and \texttt{dynsbm} found the clusters with the highest rates of psychosis.}
\texttt{dynsbm} finds a very small group (of 14 subjects) that has high rates of psychosis, anxiety, behavior, and mood disorders.

The identified clusters from \texttt{dynsbm} and \textit{MASE}, both yield  strongly positively correlated blocks across layers. Interpretating these 
results as symptoms transitioning from the \textit{prodrome} to \textit{psychosis spectrum} is less suitable.  
\toch{The high correlations signify high rates of co-occurrence and  may correspond to the correlated prodromal symptoms that precede psychosis \cite{Tandon2012AttenuatedPA}.}\ch{ However, they do not capture the qualitative change 	that we posit as the transition from psychosis prodrome to psychosis spectrum as outlined in Section \ref{sec:scientific_hypothesis} \cite{Renwick2015ProdromalSA,Chen2019PatternsOS,Cupo2021}}.
Other methods  yield results with similar rates of psychosis ($\approx$50\%) , TD($\approx$0\%)   anxiety ($\approx$75\%) and mood disorders ($\approx$50\%). The spectral methods (rows 2,4,5,6) typically beget much larger groups of around one quarter to one third of the total sample size. 
These larger, evenly populated subgroups may yield some advantages, but reveal less specificity in terms of potential diagnosis.  
The constraint of independence (through zero correlation) allows a much more specific demarcation of varying psychopathologies. 
%but identification of psychosis-syndrome subjects could benefit more from more specific (i.e. smaller group size) and \textit{uncorrelated} which hints at independence.

We also compare multivariate spectral clustering (\texttt{MVSPEC}) to the PNC data that was not transformed to networks (row 6). This method did not separate subjects with psychosis nearly as well as the network methods. The degradation in classification  suggests that the network transformation fo large-scale questionnaire (or survey) data is perhaps  even necessary for clustering analysis with the goals of diagnosis and prevention. This method was not evaluated for the simulated data.

% In all the....

% -- as such  it is not directly comparable.
%Indeed, one of the primary advantages of \texttt{SBANM} is that it is able to  differentiate and extract a block with fixed independent layers.  
%??? relation to psychosis identification, we compare the  clusters with highest psychosis rates.  
%This pared-down analysis evaluates the ``most significant" parts of each analysis by taking those that is of most interest to us. More comprehensive evaluations for all other clusters for all other methods may be pursued in future work.

\begin{table}[htbp] \footnotesize
	\centering
	
	\begin{tabular}{cc}
		\hline
		\hline
		\multicolumn{2}{c}{\textsc{ Method Comparison}}  
		\\ 
		\hline
		\begin{tabular}{lcc}
			\multicolumn{1}{c}{}  &	\multicolumn{2}{c}{\textbf{Simulations (50 Runs)}}  \\ 
			\hline
			\hline
			\textit{Method}& \textit{NMI} &\textit{ARI}  \\ 
			\hline
			&  Mean $\pm$ SD &  Mean  $\pm$ SD \\ 
			\hline
			\texttt{SBANM} & 1   & 1  \\ 
			\url{SPEC}(sum) &   .54$\pm$.02  & .53$\pm$.01 \\ 
			%\url{dynsbm} (unique config.) & 0.62 & 0.25 & 0.67 & 0.25 \\ 
			\url{dynsbm} & .98$\pm$.04 & .95$\pm$.02\\ 
			\url{MASE}&  .90 $\pm$.05 &  .76$\pm$.09 \\ 
			\url{MVSPEC}(net) &  .66$\pm$.02  &  .55$\pm$.14   \\ 
			\url{MVSPEC}(raw) & \textit{NA}   & \textit{NA}    \\ 
		\end{tabular}
		&  % TABLE WITHIN TABLE
		\begin{tabular}{ccccc|ccc}
			\multicolumn{8}{c}{\textbf{PNC Data (Male EA)}}  \\ 
			%\\
			\hline
			%	& \multicolumn{2}{c}{\textit{NMI}} & \multicolumn{2}{c}{\textit{ARI}}  \\ \hline
			\multicolumn{8}{c}{Characteristics (\%) of $q^*$ ($q$ with highest \%\textbf{Psy})} 		\\
			\hline
			\textbf{Psy}&\textbf{TD}& Anx&Beh&Mood&   $n_{q^*}$ &  $\boldsymbol{\rho}_{q^*}$&$Q$ \\ 
			\hline
			54&  0& 73 &95&46& 41 & 0,0,0 &  3\\ 
			51& 2 & 77 &77 &49 & 171& -.1,-.2,.4   &3  \\ 
			%& 0. & 0. & &&&   &  \\ 
			57& 0 & 71 &93&57 &14&  .3,.6.,7 & 8 \\ 
			52& 5 &76 & 55& 50 &  139 & .5,.5,.3  &4  \\ 
			42& 4  & 74 &72 & 52 &      264  &  0,.1,.3 & 4 \\ 
			33 & 20 & 62 &56&42& 194&   $NA$ &4 \\ 
		\end{tabular}
		\\
		\hline
	\end{tabular}

	\caption{\footnotesize \textit{Comparison of different methods for membership recovery using the ARI and NMI measures.   \texttt{dynsbm}  (unique config.) refers to the interpretation of the method when every unique configuration of blocks across layers are treated as a unique block. \texttt{dynsbm}  (most freq.)  treats the block with the most frequent occurence of memberships across all layers as the cross-layer block.
			In the right column, the
			$q^*$ represents cluster with highest \%\textbf{Psy}.
		}
	} \label{table:method_compare}
\end{table}

%\section{Empirical Performance of Synthetic Experiments}  

\section{Synthetic Experiments}\label{sec:app_results} 
In this section we describe the simulation studies to demonstrate the accuracy and efficacy of the proposed method.  \ch{Simulations are generated to match the outcomes of the real data in the previous section.} We considered networks of  three layers with size 800 that match approximately with the results of early adult males.
% \ch{Computation time for simulations are feasible in networks of several thousand nodes , which exceeds the computing requirements for the primary case study.}
\ch{We simulate 50  networks with underlying memberships and parameters that approximately match those in Section \ref{sec:results_clinical}}. We then run \texttt{SBANM} on these networks to demonstrate that the method is able to recover simulated memberships and parameters. 
We also assess the computation times of various simulations and compare them to existing methods. 
The estimation algorithm is more parsimonious and highlights more nuanced relationships compared to some existing methods described in the following Section \ref{sec:method_comparison}.

\subsection{Experimental  Procedure} \label{sec:expeirmental_procedure}

%We describe the procedure for generating and applying the \texttt{SBANM}  method to synthetic datasets . 
The goal of these experiments is to demonstrate that our proposed method can faithfully recover generated memberships and parameters. 
For this section, we write 
$ \boldsymbol \mu := \big( \boldsymbol \mu_\Xb, \boldsymbol \mu_\Yb, \boldsymbol \mu_\Zb \big)$ represent the means of the edge-connections for each block representing anxiety, behavior, and mood with corresponding standard deviations $ \boldsymbol \sigma =  \big( \boldsymbol \sigma_\Xb, \boldsymbol \sigma_\Yb, \boldsymbol \sigma_\Zb  \big)$.  We let $\boldsymbol \rho$ represent the block-wise correlations  (across all layers) $\boldsymbol \rho:= (\rho_1,...,\rho_Q)$. 
Blockwise parameters  $\boldsymbol{\mu}, \boldsymbol{\sigma},$ and $\boldsymbol \rho$ are extracted from the early adult males results from Section \ref{sec:results_clinical} to serve as the ground-truth parameters for the following experiment.
Probabilities of membership-allocations $\boldsymbol \tau$ are also extracted from the data results and used to generate multinomial distributions that approximate the ``true" distributions of memberships. As such, simulated block-sizes are randomized but approximately match those of the data.

For every network, nodes are simulated within clusters with membership probabilities $\boldsymbol{\tau}$. Within these clusters,  edge-weights are simulated according to  the multivariate normal parameters   $\boldsymbol{\mu}, \boldsymbol{\sigma},$ and $\boldsymbol \rho$. 
Simulated data is generated after extracting these ground-truth parameters from the results of the PNC early adult males.
We set  $n$ to be 800, and then simulate a multinomial distribution of fixed total size $n$ where each cluster has membership probabilities of 4\%($NB$), 32\%($S_1$), and  62\%($S_2$) from  Table \ref{fig:psychtable}.
For each mean-covariance pair corresponding to block $q$, we generate multivariate Gaussian distributions with a sample size of $n_q (n_q -1)/2$, then  we convert these multivariate data to weighted edges. Finally,  a  sample of the $AN$ distribution  with size
\ch{$$n_{IN} := (n-1)n/2 - \sum_{q = 1}^Q n_q (n_q -1)/2$$} is generated for all  $ n_{IN}$ interstitial edges  between differing blocks.  \texttt{SBANM} is then applied to these networks and we assessed membership as well as parameter recovery .

%Increasing dimensions tends to induce more probable mixtures between blocks that are close together.

Fifty three-layer networks were generated from a fixed set of parameters and membership probabilities.
For the \texttt{SBANM} algorithm, the initial membership probabilities  $\boldsymbol{\tau}$ are obtained by  by applying spectral clustering on the sum graph $\tilde{X}_{ij} = X_{ij} + Y_{ij} + Z_{ij}.$  across all $K$ layers, then averaged with uniformly generated probabilities. 
Results show consistently accurate estimates for the mean, variance, and correlation parameters (Figure \ref{fig:simu_boxplots}).
The algorithm was able to  exactly recover memberships for all simulations (\cite{abbe2017community}).   Table \ref{table:method_compare} shows that  \texttt{SBANM} is able to retrieve the  simulated memberships at  a 100\%  recovery rate.   True parameters are shown in Figure \ref{fig:simu_boxplots}.The variances for most of the estimates were within 0-3\% of the true values. More simulations are described in Appendix \ref{appsec:simu_experiment2}.

\begin{figure}[htbp] 
	\centering
	\begin{tabular}{ccc}
		\hline
		\multicolumn{3}{c}{\textsc{  Boxplots of Estimated Parameters}}  \\ 
		\hline
		&	$\boldsymbol{\mu}_{k,q}$&	$\boldsymbol{\Sigma}_{k,qq}$ \\
		\rotatebox[origin=l]{90}
		{ 	\quad \quad 	\quad \quad \quad \quad	\quad  \footnotesize    $\Xb$} &
		\includegraphics[width=0.4\linewidth]{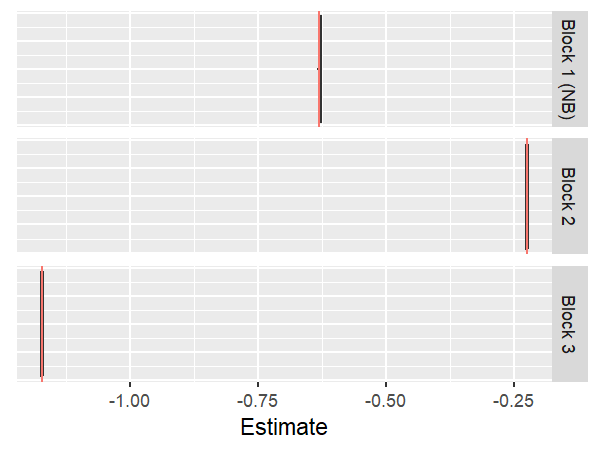}	& 	\includegraphics[width=0.4\linewidth]{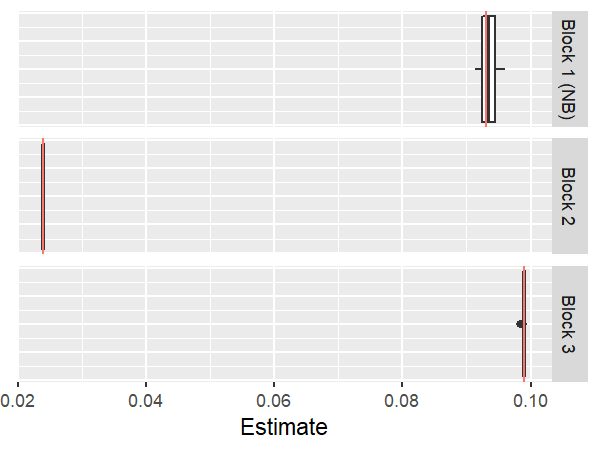} 	 	\\  
		\rotatebox[origin=l]{90}{ 	\quad \quad \quad  \quad 	\quad \quad \quad	  \footnotesize    $\Yb$}
		& \includegraphics[width=0.4\linewidth]{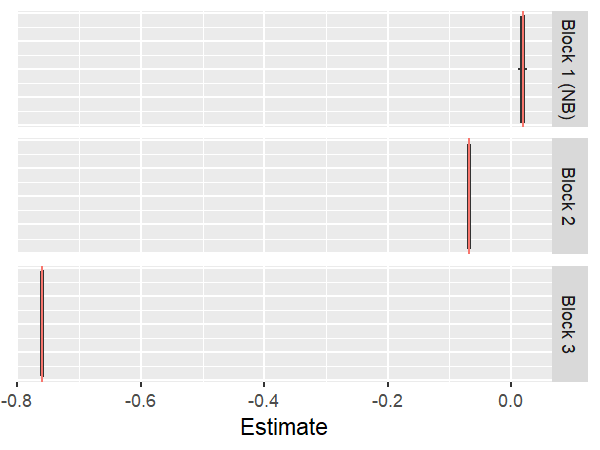}	&    \includegraphics[width=0.4\linewidth]{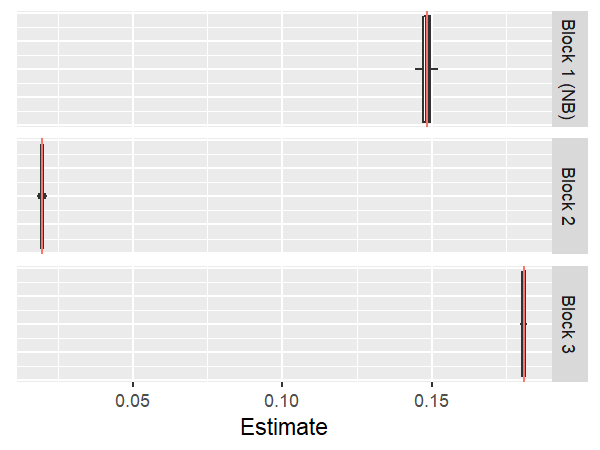} \\
		\rotatebox[origin=l]{90}{ 	\quad \quad \quad \quad  	\quad \quad 	\quad  \footnotesize    $\Zb$} 
		&\includegraphics[width=0.4\linewidth]{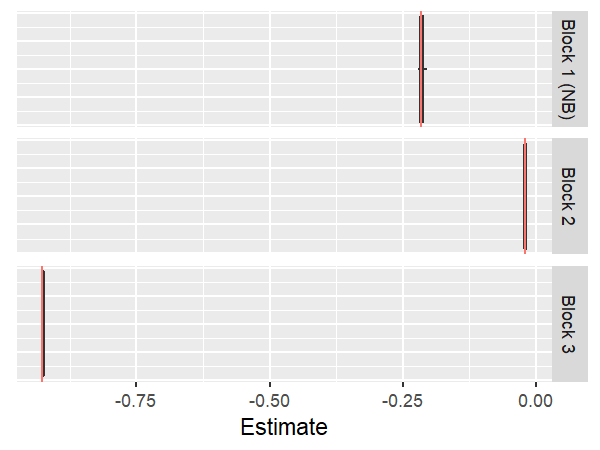}  
		&\includegraphics[width=0.4\linewidth]{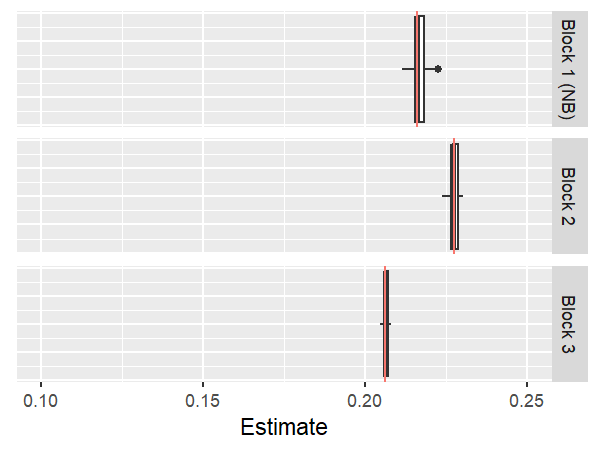}    \\
	\end{tabular}
	\begin{tabular}{c}
		${\rho}_{q}$ \\
		\includegraphics[width=0.4\linewidth]{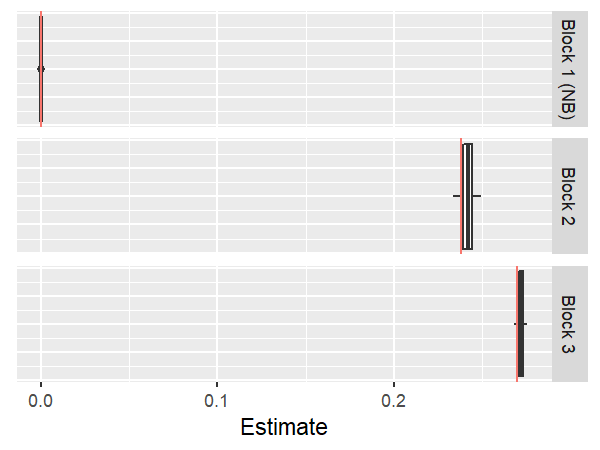} 
	\end{tabular}
	\caption{\footnotesize  \ch{\textit{Boxplots for repeated estimates of simulations. We ran the algorithm applied to 100 randomly generated networks with the same ground truth parameters and sample sizes but with variable group sizes. Each boxplot represents the summary of 50 individual estimates corresponding to 50 runs. The red bands represent the ground truth parameters for means, variances, and correlations. } }}
	\label{fig:simu_boxplots}
\end{figure}

%\subsection{Choice of Subsets and Hyperparameters } \label{sec:choice_of_blocks}

The only `free' parameter in our algorithm is the \textit{number of blocks} specified $Q$.
\textit{Model selection} in the SBM clustering context usually refers to specified $Q$ during VEM estimation.
Existing approaches \cite{Daudin2008,mariadassou2010,matias2016} consider the \textit{integrated complete likelihood} (ICL) for assessing block model clustering performance \ch{in weighted as well as simple graphs}.  
%For this experiment we fix $n$ at the ground-truth $Q$ 
We apply the method for a range of  $\widehat{Q}$ (as the \textit{estimate} for number of blocks). Results show that the usage of ICLs reaches its highest value at the correct ground truth value  of 3 and verify that this metric is suitable for evaluation of the method  (Figure \ref{fig:elbossim200} in Appendix \ref{app:simu_ICL}). %More details on ICL can also be found in  Appendix \ref{app:simu_ICL}).   

%  
%\todo{ We also did it on large ones, but not to the extent as small networks, say something about 'indeed it works'}

% \subsection{Comparison with Other Methods} 

%\ch{We compared \texttt{SBANM} with several other methods}. Spectral clustering is an attractive choice for community detection because it is fast and intuitive. We first use a `naive' approach for where all the edges are averaged across layers (row 2 of Table \ref{table:method_compare}), then multivariate spectral approaches \cite{arroyo2020inference,Kumar2011ACA}(rows 4 and 5 of Table \ref{table:method_compare}). The dynamic SBM (\texttt{dynsbm}) \cite{matias2016} is another method we compare \texttt{SBANM} to (Table \ref{table:method_compare}, row 3). 
% Because \url{dynsbm} imputes different block memberships for every layer, we convert these into cross-layer persistent community labels by taking the most frequent occurrence of the clustered membership across all layers.

\subsection{Method Comparisons for Synthetic Experiments}

We evaluated \textit{ARI}  (Adjusted Rand Index) and \textit{NMI} (Normalized Mutual Information) scores \cite{wilson_essc,palowitch_continuous,matias2016} for the six methods for these simulations.
The scores are both between 0 and 1 and serves as a proxy for \textit{percentage recovery}.
We have found that \texttt{SBANM} outperforms competing methods in every setting. We note that there is perhaps some implicit bias in favor of  \texttt{SBANM} because the simulations are generated according to the model. However, differences in recovery rates still elucidate some important information regarding the method's efficacy.

\toch{We present these results also as a part of Table \ref{table:method_compare}, earlier use for method comparison for PNC data.}
Table \ref{table:method_compare} (left) shows the recovery rates of the  NMIs and ARIs for all of the runs of \texttt{SBANM} are 1, suggesting perfect recovery. Recovery rates of the spectral clustering for both the naive (single-graph sum) and multigraph spectral clustering results are around half, while \textit{MASE} and \texttt{dynsbm} yield much better results, of up to 98\% agreement.
Results for other methods suggest effective \textit{partial} recovery of the memberships from \texttt{dynsbm} and \textit{MASE} even if none the network block structures are not perfectly recovered. 
%However, the need to interpret the results of \url{dynsbm} (taking the most frequent occurrence of layer-specific memberships across layers) reveals  an implicit advantage of the \texttt{SBANM} method in its  parsimony and interpretability of blocks across layers.  
None of the competing methods perfectly recover the block structures for the multilayer networks. 

% We also note that this procedure may be somehat biased in our favor because the simulation scheme  exactly matches the distributional assumptions of \textit{SBANM}.

Computation times are also assessed for the various methods. The spectral methods (rows 2, 4, 5) of Table \ref{table:method_compare} (left) have a nearly instant run time. \texttt{SBANM} and \texttt{dynsbm} take longer to compute. Under the (correct) specification for $Q=3$  \texttt{SBANM} , the computing time averages around 1100 seconds (in CPU time), while 350 for $Q=2$ and 3000 for 4 blocks.  
For  \texttt{dynsbm} the total procedure took about 3500 seconds in CPU time (with an optimal 8 clusters), which is slightly less than running \texttt{SBANM} for 2, 3, and 4 clusters. As such, our proposed method is fairly slow, but comparable to existing methods.

%the 

%using a brute-force approach which 
% dynsbm: 5-9 Clus = 3230
%SBANM 2 = 346, SBANM (3) = 1168 SBANM 4 = 2996

\section{Discussion}  \label{sec:disc} 
\ch{
	We have introduced a novel method \texttt{SBANM} that is  motivated by real-world clinical phenomenon of psychosis progression.
	\texttt{SBANM} is an unsupervised data-driven approach to identify groups of psychopathologies  that describe patterns of prodromal subjects transitioning to psychosis spectrum.
	Future developments of method could potentially lead to a deeper understanding of the transition from prodrome to psychosis spectrum and finally to schizophrenia using statistical network theory.  
	We demonstrated the relative benefits of this model compared to existing methods.
}

Network data  come in complex forms. They are particularly synchronous with the surge in data availability.
Our primary contribution was to introduce the notion of structured noise to weighted multilayer SBMs as well as an algorithm to estimate it.
Other work has explored cases where between-block transitions are all uniquely parameterized \cite{matias2016}, but they do not account for correlations between layers nor do they separate signal from noise. The proposed model is parsimonious and reveals more interpretable results that could be useful in clinical settings (more details in Appendix \ref{app:parsimony}).
In practice, $NB$ does not necessarily represent  a control group in PNC early adult males but  rather a dynamic cluster that specifically captures and 
\ch{reflects} the most notable interactions.
%These results capture inter-group relationships that are more \textit{abnormal} than \textit{normal}.  
%Typically, however, a block that represents control cases does emerge in applications. In PNC data, a large cluster espouses the lowest psychosis and highest TD rates in both age groups.   

We have demonstrated that the  method is able to uncover latent, non-trivial patterns in a psychiatric condition (as well other data in Appendices \ref{app:voting},\ref{app:results_bikeshare}).
% from 1.2!
\ch{In Section \ref{sec:results_clinical}, we have shown that \texttt{SBANM} reveal a moderately sized group in the male early adult age subset with high rates of psychosis spectrum incidence and, perhaps more importantly, independence across different prodromal psychopathologies. These findings are useful for the study of psychosis in its ability to separate subjects with independent prodromal conditions from those with co-occurring ones.} \ch{The results from the applications to psychopathology data concurs with the ongoing discourse in moving away from  \textit{nosology} where psychiatric disorders are treated as discrete entities as opposed to multifaceted pathologies \cite{nosology}}. Etiologically, the proposed methodology supports the shift away from one-dimensional causal assumptions and instead to multifaceted casual pathways underlying severe psychiatric disorders such as psychosis and schizophrenia.

% \subsection{Limitations and Future Work}
There are some limitations to \texttt{SBANM}. The issue of computation time persistently plagues variational methods. The algorithm slows when $K$ or $Q$ is large. \ch{However, its unique ability to partitioning independent from correlated clusters may be of more importance than speedy computation form a less nuanced method}.  SVI speeds up computation time to to make possible what was previously infeasible. Future work  may further explore subsampling methods with faster computation times.

\textit{Ambient noise} in networks  are related to overlapping  SBMs.  Some community detection methods adhere to a \textit{bottom-up} heuristic where clusters increase in size until memberships become stable; and naturally allows for separation between \textit{signal} and \textit{noise}. Many of these approaches implicitly assume  inherent structure but do not assign an explicitly parametric model to signal or noise \cite{wilson_essc,bodwin2015testingbased,palowitch_continuous}. Members not assigned to communities are  called \textit{background} nodes are identified but not statistically modeled.
As in many subdomains of statistics and signal processing, noise plays a large role in network theory and methods.
\ch{Existing work that discusses noise in networks  \cite{Zhang_Noise_2015,Priebe_2019,Miao2021InformativeCI}, 
	mostly describe it theoretically, however, few authors specifically address noise in the context of SBMs and fewer seek to explicitly model it for practical purposes.
	Though our approach theoretically isolates ambient noise in that it is uncorrelated,  in this study the \textit{noise block} actually yields the most meaning.}
%Uncertainty and ambiguity in block-memberships may also be represented \textit{overlapping blocks} used in MMBMs \cite{Liu2018MultiOverlap,Latouche_2011,airoldi2007mixed}. The connection between representing noisy signals via overlapping memberships and \textit{ambient noise} may be explored in future work.  

The development of \texttt{SBANM} opens up many methodological avenues. One immediate next step is to expand the study of PNC data to include neuroimaging and genomics data. Such work is currently in progress for the PNC study to identify and jointly model neural and genetic influences in addition to symptoms. Another direction is in assessing significance or predictive power of the clusters. More generally, these in-group and out-of-group interactions are related to mixed effects models for multimodal weighted networks that may serve as another perspective in the study \ch{of} longitudinal analysis of networks \cite{Snijders05modelsfor}.

\section*{ Reproducibility}
Code and sample data for \texttt{SBANM} is available at \url{https://github.com/markhe1111/SBANM}.

\section*{Acknowledgements and Funding Information}
This project was funded by the Rockefeller University Heilbrunn Family Center for Research Nursing (RX, 2019) through the generosity of the Heilbrunn Family. The funding organizations had no role in the design and conduct of the study; collection, management, analysis, and interpretation of the data; preparation, review, or approval of the manuscript; and decision to submit the manuscript for publication.
MH was supported by the NSDEG fellowship.

Philadelphia Neurodevelopment Cohort (PNC) clinical phenotype data used for the anal- yses described in this manuscript were obtained from dbGaP at \url{http://www.ncbi.nlm. nih.gov/sites/entrez?db=gap} through dbGaP accession \url{phs000607.v3.p2.}
Support for the collection of the data for Philadelphia Neurodevelopment Cohort (PNC) was provided by grant RC2MH089983 awarded to Raquel Gur and RC2MH089924 awarded to Hakon Hakonarson. 

The authors thank Andrew Nobel and Shankar Bhamidi for helpful coments and theoretical advice. In particular, we thank them for defining and recognizing the problem of differential, correlated communities amongst multilayer networks. We also thank Professor Galen Reeves for helpful advice in contextualizing this work to the literature.  

%\bibliography{all_refs,revision_refs}

\section*{Supplementary Material for  Community Detection in Weighted Multilayer Networks with Ambient Noise}
\begin{appendix}

	%\appendix
	
	\section{Proofs and Derivations} \label{appendix:proofs_and_deriv}

	In this appendix we provide the proofs and derivations for the terms for the algorithm updates in Section \ref{sec:est_algorithm}. 
	Note that in Appdenices \textbf{A}, \textbf{B} and \textbf{C} we use slightly different notation: instead of $\mathcal X$ and $\mathcal Z$ for data and membership matrices, we revert to that used by other work and instead use $\Xb$ and \textbf{Z}. Note that this notation interferes with the layers of the network in  in the PNC case study, but for these first three appendices they are representative of more general cases.

	\subsection {Proof for Hierarchical ELBO} \label{app:ELBO_Proof}
	This is a proof paraphrased from Ranganath et al. \cite{ranganath2016} that the hierarchical ELBO is a sharper lower bound than the ELBO.  
	An inequality can be drawn between  the ``ordinary" ELBO $ \mathcal{L} $ without any hierarchical information and the Hierarchical ELBO 
	\begin{align*}
		\mathcal{L} 
		& = \Expec_{ R_{\text{hv}} (\Zb)  } [       \log \jointPxz]  + \mathcal{H}_\text{hv} ( R(\Zb )) \\
		& \geq  \Expec_{ R (\Zb, \Cb  )}[       \log \jointPxz]     - \Expec_{R ( \Zb, \Cb ) }\left[    \log R (\Zb, \Cb) \right]   + \Expec_{R (\Zb, \Cb ) }  \left[  \log S( \Cb | \Zb) \right] 
		% 	\stepcounter{equation}\tag{\theequation}\label{eq:helbo_ineq}
		\\
		& := \mathcal{L}^\prime  (\text{Hierarchical ELBO }).
	\end{align*}  
	The inequality in the above relationship arises from the  decomposition of the entropy $\mathcal{H}_\text{hv}$ of the  hierarchical distribution.  The proof of the inequality is based on the proof from Ranganath et al. \cite{ranganath2016} :
	\begin{proposition}
		\begin{equation*} %\label{eq:HELBO_Ineq} 
			\mathcal{H}_\text{hv} ( R(\Zb )) 
			\geq   - \Expec_{R ( \Zb, \Cb ) }\left[    \log R (\Zb, \Cb) \right]   + \Expec_{R (\Zb, \Cb ) }  \left[  \log S( \Cb | \Zb) \right]. 
		\end{equation*} 
	\end{proposition}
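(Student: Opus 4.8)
The plan is to prove the inequality by the standard entropy-decomposition argument underlying hierarchical variational models, relying only on Gibbs' inequality (non-negativity of the Kullback--Leibler divergence). The starting point is the definition of the hierarchical entropy of the marginal, $\mathcal{H}_\text{hv}(R(\Zb)) = -\Expec_{R_\text{hv}(\Zb)}[\log R_\text{hv}(\Zb)]$, together with the fact that $R_\text{hv}(\Zb) = \int R(\Zb,\Cb)\,\text{d}\Cb$ is obtained by marginalizing the joint variational distribution over $\Cb$.

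First I would write, for the exact conditional $R(\Cb\mid\Zb)$ induced by the joint $R(\Zb,\Cb)$, the pointwise identity $\log R_\text{hv}(\Zb) = \log R(\Zb,\Cb) - \log R(\Cb\mid\Zb)$. Since the left-hand side does not depend on $\Cb$, I may take expectations of both sides under the full joint $R(\Zb,\Cb)$ without altering the left term, which yields
\begin{align*}
	\mathcal{H}_\text{hv}(R(\Zb))
	= -\Expec_{R(\Zb,\Cb)}\!\left[\log R(\Zb,\Cb)\right]
	+ \Expec_{R(\Zb,\Cb)}\!\left[\log R(\Cb\mid\Zb)\right].
\end{align*}
This is exactly the chain rule for entropy, $\mathcal{H}(\Zb) = \mathcal{H}(\Zb,\Cb) - \mathcal{H}(\Cb\mid\Zb)$, rewritten in terms of expectations under $R$.

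The remaining step is to lower-bound the conditional term by its surrogate counterpart. Introducing the recursive approximation $S(\Cb\mid\Zb)$, I would consider the difference
\begin{align*}
	\Expec_{R(\Zb,\Cb)}\!\left[\log R(\Cb\mid\Zb)\right]
	- \Expec_{R(\Zb,\Cb)}\!\left[\log S(\Cb\mid\Zb)\right]
	= \Expec_{R(\Zb)}\!\Big[ \mathrm{KL}\big(R(\Cb\mid\Zb)\,\|\,S(\Cb\mid\Zb)\big)\Big] \geq 0,
\end{align*}
where the inner quantity is the KL divergence between the exact conditional and its surrogate, non-negative by Gibbs' inequality, and the outer expectation of a non-negative quantity is non-negative. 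Substituting this bound into the entropy identity gives precisely the claimed inequality.

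The conceptual obstacle is not the algebra but the motivation for introducing $S(\Cb\mid\Zb)$: the exact conditional $R(\Cb\mid\Zb)$ appearing after the chain rule is intractable, so it cannot be used directly in the ELBO. The key insight (following Ranganath et al.) is that replacing it with any tractable surrogate only \emph{loosens} the bound in a controlled direction, so that the resulting expression remains a valid lower bound on $\mathcal{H}_\text{hv}(R(\Zb))$ and hence yields the sharper hierarchical ELBO $\mathcal{L}'$. I would make sure to note that the identity in the first display requires $R_\text{hv}(\Zb) > 0$ wherever $R(\Zb,\Cb) > 0$, which holds here since both the multinomial and Bernoulli factors in \eqref{eq:R_ZC} have full support on the relevant discrete domains.
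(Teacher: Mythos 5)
Your proposal is correct and takes essentially the same route as the paper's own proof: both arguments rest on the chain-rule decomposition $\log R(\Zb,\Cb) = \log R_\text{hv}(\Zb) + \log R(\Cb\mid\Zb)$ together with the nonnegativity of $\mathrm{KL}\bigl(R(\Cb\mid\Zb)\,\big\|\,S(\Cb\mid\Zb)\bigr)$ averaged over $\Zb$. The paper merely orders the steps differently---it inserts the identically zero term $\mathrm{KL}\bigl(R(\Cb\mid\Zb)\,\big\|\,R(\Cb\mid\Zb)\bigr)$ into the entropy, relaxes it to the divergence against $S$, and then expands---whereas you decompose first and bound second, so the two proofs are identical in substance.
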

	
	\begin{proof}
		\begin{align*}
			\mathcal{H}_{\hv} ( R(\Zb )) & =  - \Expec_{R_{\hv} (\Zb )}[\log R_{\hv} (\Zb )]  \\
			& = - \Expec_{R_{\hv} (\Zb )}  \left[  \log R_{\hv} (\Zb )   - \KL \left(  \Rcz ( \Cb | \Zb );   \Rcz (\Cb | \Zb ) \right)    \right] 
			\\
			&\geq  - \Expec_{R_{\hv} (\Zb )}  \left[  \log R_{\hv} (\Zb ) + \KL \left(   \Rcz (\Cb | \Zb )  ;   S ( \Cb | \Zb ) \right) \right] 
			% \stepcounter{example}\tag{\theequation}\label{eq:KL_switch_geq}
			\\
			& =-\Expec_{R_{\hv} } \left[  \Expec_{R (\Zb )}[\log R_{\hv} (\Zb )]   + \log \Rcz (\Cb | \Zb )    -    \log S (\Cb| \Zb )  \right]  \\
			& =  -\Expec_{R( \Zb, \Cb ) }\left[    \log R_{\hv} (\Zb) + \log R_{\Cb|\Zb } (\Cb | \Zb )  - \log S( \Cb | \Zb )  \right] \\
			& =  -\Expec_{R( \Zb, \Cb ) }\left[    \log R_{ \Zb, \Cb } (\Zb, \Cb ) - \log S(\Cb | \Zb )  \right] 
		\end{align*} 
	\end{proof}

	\subsection{Preservation of  $ \Expec_{ R (\Zb, \Cb  )}  [\log f (\Zb ) ] $ }  \label{app:preserve_ElogfZ}
	Here we show that the term for $   \Expec_{ R (\Zb, \Cb  )}  [\log f (\Zb ) ] $ as written in Eq.  \eqref{eq:fZb_same} is the same as in prior studies such as Daudin et al. \cite{Daudin2008}  
	\begin{proposition}
		$$	\Expec_{ R (\Zb, \Cb  )}  [\log f (\Zb ) ] = \sum_{i,q}       \tau_{iq} \log \alpha_q$$
	\end{proposition}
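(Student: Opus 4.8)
The plan is to exploit the single structural fact that the membership prior $f(\Zb)$ carries no dependence on the higher-tier indicator $\Cb$, so that the hierarchical average over $\Cb$ is vacuous and the joint expectation collapses to the ordinary multinomial computation of Daudin et al.

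First I would recall that, exactly as in the classical SBM, the prior on the allocation matrix is the product-multinomial $f(\Zb)=\prod_{i,q}\alpha_q^{Z_{iq}}$ with $\alpha_q=\Prob(Z_{iq}=1)$, so that
\begin{equation*}
  \log f(\Zb)=\sum_{i,q} Z_{iq}\,\log\alpha_q .
\end{equation*}
The crucial observation is that this is a linear function of the entries $Z_{iq}$ alone and is completely free of $\Cb$. Next I would factor the joint variational law as $R(\Zb,\Cb)=R(\Cb)\,R(\Zb\mid\Cb)$ and push the expectation over $\Cb$ through the $\Cb$-free integrand. Because $\log f(\Zb)$ does not involve $\Cb$, summing over $\Cb$ simply marginalizes the joint law to its $\Zb$-marginal, which is precisely $R_{\hv}(\Zb)=\int R(\Zb,\Cb)\,\mathrm{d}\Cb$; thus
\begin{equation*}
  \Expec_{R(\Zb,\Cb)}[\log f(\Zb)]=\Expec_{R_{\hv}(\Zb)}\!\Big[\,\textstyle\sum_{i,q} Z_{iq}\log\alpha_q\,\Big].
\end{equation*}

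To finish, I would verify that $R_{\hv}(\Zb)$ is the ordinary multinomial variational distribution with parameters $\tau_{iq}$. Reading off the stated factorization $R(\Zb\mid\Cb)=\prod_{q,i}\big(\tau_{iq}^{Z_{iq}}\big)^{C_q}\big(\prod_i\tau_{iq}^{Z_{iq}}\big)^{1-C_q}$, one checks that both the $C_q=1$ and the $C_q=0$ branch reduce to $\prod_i\tau_{iq}^{Z_{iq}}$, so $R(\Zb\mid\Cb)=\prod_{i,q}\tau_{iq}^{Z_{iq}}$ is in fact independent of $\Cb$. Hence $R_{\hv}(\Zb)$ is this same multinomial and $\Expec_{R_{\hv}}[Z_{iq}]=\tau_{iq}$, and applying linearity of expectation term-by-term gives $\sum_{i,q}\tau_{iq}\log\alpha_q$, as claimed.

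The main (and essentially only) obstacle is recognizing that no real computation is needed: once one notes that the factor $\log f(\Zb)$ is $\Cb$-free and that the stated conditional $R(\Zb\mid\Cb)$ degenerates to a $\Cb$-independent multinomial, the averaging over the signal/noise indicator leaves the membership marginal — and therefore its mean $\tau_{iq}$ — unchanged, so the identity reduces verbatim to the standard SBM term. The only point requiring a line of care is confirming that the two exponent branches in $R(\Zb\mid\Cb)$ genuinely coincide.
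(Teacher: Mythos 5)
Your proof is correct and follows essentially the same route as the paper's: the paper writes the expectation as $\sum_{i,q}\big(P_q\,\tau_{iq}\log\alpha_q+(1-P_q)\,\tau_{iq}\log\alpha_q\big)$ and collapses it via $P_q+(1-P_q)=1$, which is precisely your observation that the two exponent branches of $R(\Zb\mid\Cb)$ coincide, so averaging over the signal/noise indicator is vacuous and only the multinomial means $\tau_{iq}$ survive. The difference is purely presentational — you state the structural reason (the integrand and the conditional are $\Cb$-free) while the paper carries out the equivalent two-branch weighted computation explicitly.
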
 
	\begin{proof}
		
		\begin{align*}
			\Expec_{ R (\Zb, \Cb  )}  [\log f (\Zb ) ]  &= 
			\sum_i    \sum_q   \bigg(     P_q \tau_{iq} \log \alpha_q  + (1- P_q) \tau_{iq} \log \alpha_q  \bigg)  
			\\
			& = % 
			\sum_q  (   P_q   + (1- P_q) )   \bigg(  \sum_i    \tau_{iq} \log \alpha_q    \bigg) 
			\\
			& =    \sum_{i,q}       \tau_{iq} \log \alpha_q   . 
		\end{align*} 
	\end{proof}
	
	\subsection{Derivation for Expected Log Likelihood} \label{app:derivationExpectedlogL}
	
	Description of the form of the joint likelihood  in Equation \eqref{eq:ElogfXZ} in Section \ref{sec:param_estimation}:
	\begin{proposition}
		The expected log likelihood of the multivariate normal distribution $   f (\Zb ,\Cb)   $  with respect to   $ R (\Zb, \Cb  ) $  is written as 
		$$ \Expec_{ R (\Zb, \Cb  )}[\log f (\Xb ,\Zb)]  = \ERhv [\log f (\Xb | \Zb) ] +  \sum_i    \sum_q     \tau_{iq} \log \alpha_q      $$ 
	\end{proposition}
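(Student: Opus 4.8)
The plan is to reduce the claim to the elementary chain-rule factorization of the joint density together with the membership identity already established in Appendix \ref{app:preserve_ElogfZ}. First I would write the joint density of the observed edges and the latent labels as $f(\Xb,\Zb) = f(\Xb \mid \Zb)\, f(\Zb)$, so that taking logarithms splits the integrand additively:
\begin{align*}
	\log f(\Xb, \Zb) = \log f(\Xb \mid \Zb) + \log f(\Zb).
\end{align*}
Applying the variational expectation $\Expec_{R(\Zb,\Cb)}[\cdot]$ and invoking linearity then gives
\begin{align*}
	\Expec_{R(\Zb,\Cb)}[\log f(\Xb, \Zb)] = \ERhv[\log f(\Xb \mid \Zb)] + \Expec_{R(\Zb,\Cb)}[\log f(\Zb)].
\end{align*}

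Next I would handle the two pieces separately. The first piece is exactly $\ERhv[\log f(\Xb \mid \Zb)]$, the expected conditional log-density whose explicit block-by-block expression is recorded in Equation \eqref{eq:fXZ}; no further manipulation is required here. The second piece involves only the $\Cb$-free quantity $\log f(\Zb)$, so the expectation over the joint law $(\Zb,\Cb)$ collapses onto the $\Zb$-marginal. This is precisely the content of the proposition in Appendix \ref{app:preserve_ElogfZ}, which yields $\Expec_{R(\Zb,\Cb)}[\log f(\Zb)] = \sum_{i,q} \tau_{iq}\log\alpha_q$. Substituting this back in completes the stated identity.

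The argument is essentially bookkeeping, so I do not anticipate a genuine obstacle. The only point demanding care is the interplay between the full variational law $R(\Zb,\Cb)$ and the marginalized $R_{\text{hv}}(\Zb)$: because both $f(\Xb\mid\Zb)$ and $f(\Zb)$ depend on the latent structure only through $\Zb$, integrating out $\Cb$ leaves each expectation unchanged, and this is what licenses writing the first term as the hierarchical-marginal expectation and reducing the second to the canonical SBM membership term. Since that reduction is exactly what Appendix \ref{app:preserve_ElogfZ} supplies, the proposition follows immediately by linearity of expectation.
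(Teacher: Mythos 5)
Your proposal is correct and follows essentially the same route as the paper: factor the joint density as $f(\Xb,\Zb)=f(\Xb\mid\Zb)\,f(\Zb)$, apply linearity of the variational expectation, and collapse the membership term to $\sum_{i,q}\tau_{iq}\log\alpha_q$. The only cosmetic difference is that you cite the proposition of Appendix \ref{app:preserve_ElogfZ} for the second term, whereas the paper re-derives that same $P_q+(1-P_q)=1$ cancellation inline; your version is arguably cleaner, since it avoids the stray $\log\Psi$ and $\log(1-\Psi)$ factors that appear and then vanish in the paper's intermediate lines.
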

	\begin{proof}
		\begin{align*}
			\Expec_{ R (\Zb, \Cb  )}[\log f (\Xb ,\Zb)] & =  \ERhv [\log f (\Xb | \Zb) ]
			+  \ERhv [\log f (\Zb ) ] 
			\\
			& = \ERhv [\log f (\Xb | \Zb) ] + \log  \Psi \sum_i    \sum_q    P_q \tau_{iq} \log \alpha_q   
			\\
			& \ \ + \log (1- \Psi) \sum_i    \sum_q  (1- P_q) \tau_{iq} \log \alpha_q  \bigg)  
			\\ & =  \ERhv [\log f (\Xb | \Zb ) ] + \sum_{i,q}  \bigg( P_q \tau_{iq} \log \alpha_q  +(1- P_q) \tau_{iq} \log \alpha_q  \bigg)  
			\\ & =  \ERhv [\log f (\Xb | \Zb ) ] + \sum_{i,q}     \tau_{iq} \log \alpha_q   
			\stepcounter{equation}\tag{\theequation}\label{eq:HELBO}  
		\end{align*}
	\end{proof}
	
	\subsection{Derivation of Joint Distribution $\Expec_{ R (\Zb, \Cb  )}[\log R (\Zb, \Cb  ) ] $}  \label{app:deriv_JointDist}
	
	The expectation of the log of the joint variational distribution is as follows:
	\begin{align*}
		\Expec_{ R (\Zb, \Cb  )}[\log R (\Zb, \Cb  ) ] 
		&=  \Expec_{ R (\Zb, \Cb  )}[ \log   R (\Zb | \Cb ) ]+  \Expec_{ R (\Zb, \Cb  )}[ \log   R (\Cb ) ]
		\\
		&=  \sum_{i,q}  \bigg(  (1-P_q) \Expec Z_{iq}   \log(\tau_{iq}  )   + P_{q}   \Expec Z_{iq} \log(\tau_{iq}  ) \bigg) + \Expec[ \log   R (\Cb ) ]
		\\
		& = \sum_{i,q} \tau_{iq} \log \tau_{iq} +   \sum_q    \bigg( P_q  \log P_q   +  (1-P_q)         \log  (1-P_q)    \bigg) 
	\end{align*}

	\section{ELBO and Hierarchical ELBO} \label{app:ELBO}
	This section contains details for the hierarchical ELBO as well as  the derivations for these expressions. The definition of ELBO is as follows:
	
	\begin{definition} {(Evidence Lower Bound (ELBO))}  \label{def:ELBO}
		Given observed data $\Xb$  with unknown latent membership variables $\Zb $, 	the evidence lower bound (ELBO) 
		$ \mathcal{L} $ is the 	approximately  optimal likelihood that minimizes the KL Divergence between the approximate distribution $R( \Zb , \Cb )$  and the  posterior frequency $f(   \Zb , \Cb |   \Xb ) $. It is expressed as follows:	
		\begin{align*} 
			\mathcal{L}  
			&= \Expec_{ R_{\text{hv}} (\Zb)  }    \left[      \log \jointPxz - \log R_{\hv} (\Zb ) \right] 
		\end{align*} 
		Alternatively, the ELBO can be rewritten as the sum of the expected 
		frequency and the entropy $\mathcal{H}$ of variational variable $\Zb$:
		\begin{align*} 
			\mathcal{L}   = \Expec_{ R_{\text{hv}} (\Zb)  } [       \log \jointPxz]  + \mathcal{H}_{\hv}( R(\Zb)).
		\end{align*} 
	\end{definition}  
	
	In Appendix \ref{app:ELBO_Proof} we have shown  the inequality between  the ``ordinary" ELBO $
	\mathcal{L} $  and Hierarchical ELBO $\mathcal{L}^\prime  $. We write  $ \mathcal{L}^\prime $ here as follows:
	%\begin{align*}
	%	\mathcal{L} 
	%	& = \Expec_{ R_{\text{hv}} (\Zb)  } [       \log \jointPxz]  + \mathcal{H}_\text{hv} ( R(\Zb )) \\
	%	& \geq  \Expec_{ R (\Zb, \Cb  )}[       \log \jointPxz]     - \Expec_{R ( \Zb, \Cb ) }\left[    \log R (\Zb, \Cb) \right]   + \Expec_{R (\Zb, \Cb ) }  \left[  \log S( \Cb | \Zb) \right] 
	%	\stepcounter{equation}\tag{\theequation}\label{eq:helbo_ineq}
	%	\\
	%	& := \mathcal{L}^\prime  (\text{Hierarchical ELBO }).
	%\end{align*}  
	
	$$ \mathcal{L}^\prime  = \Expec_{ R (\Zb, \Cb  )}[       \log \jointPxz]     - \Expec_{R ( \Zb, \Cb ) }\left[    \log R (\Zb, \Cb) \right]   + \Expec_{R (\Zb, \Cb ) }  \left[  \log S( \Cb | \Zb) \right] $$

	\subsection{ Log likelihood Part of Hierarchical ELBO} \label{app:ElogfX|Z}
	The log likelihood portion of the hierarchical ELBO is written as : 
	\begin{align*} 
		& \Expec_{R_{\bX}} [\log( f   (  \Xb| \Zb  )  ) ] 
		= 
		\sum_{q }    P_q  \sum_{i }    \sum_{j }  \    \tau_{iq} \tau_{jq} \bigg( \frac 1 2   ( \Xb_{ij}     - \boldsymbol{\mu}_{q}   )^T  \boldsymbol{\Sigma}_q^{-1} ( \Xb_{ij}      - \boldsymbol{\mu}_{q}   )         -   (2\pi)^{K/2}  (\log    |\boldsymbol{ \Sigma}_q |  )^{  1 /2 }
		\bigg)    
		%	\stepcounter{equation}\tag{\theequation}\label{eq:multinormal}   
		\\
		& +   \sum_q (1 - P_q)  \sum_{i }    \sum_{j }     \tau_{iq} \tau_{jq} \bigg(  \frac 1 2   (  \Xb_{ij}      - \boldsymbol{\mu}_{AN}   )^T  \boldsymbol{\Sigma}_{AN}^{-1} (  \Xb_{ij}      - \boldsymbol{\mu}_{AN}   )         -  (2\pi)^{K/2}  (\log    | \boldsymbol{ \Sigma} _{AN}|^{  1 /2 }
		\bigg)    \\
		& +   \sum_{q  }   \sum_{l: l \ne q   }    \sum_i  \sum_j  \    \tau_{iq} \tau_{jl} \bigg(  \frac 1 2   (  \Xb_{ij}      - \boldsymbol{\mu}_{AN}   )^T  \boldsymbol{\Sigma}_{AN}^{-1} (  \Xb_{ij}      - \boldsymbol{\mu}_{AN}   )         -  (2\pi)^{K/2}  (\log    | \boldsymbol{ \Sigma} _{AN}|^{  1 /2 }
		\bigg).    
	\end{align*}

	\subsection{Expression for Hierarchical ELBO}\label{app:full_ELBO}
	The full form of the hierarchical ELBO is the log likelihood part (Section \ref{app:ElogfX|Z}) plus the membership probabilities, entropy, and their hierarchical counterparts:
	\begin{align*} 
		\mathcal{L} ^\prime 
		%	=\\
		%	&\sum_{q }    P_q  \sum_{i }    \sum_{j }  \    \tau_{iq} \tau_{jq} \bigg( \frac 1 2   ( \Xb_{ij}     - \boldsymbol{\mu}_{q}   )^T  \boldsymbol{\Sigma}_q^{-1} ( \Xb_{ij}      - \boldsymbol{\mu}_{q}   )         -   (2\pi)^{K/2}  (\log    |\boldsymbol{ \Sigma}_q |  )^{  1 /2 }
		%	\bigg)    
		%	\\
		%	& +    \sum_{q  }     (1-P_q)     \sum_{i }    \sum_{j }     \tau_{iq} \tau_{jl} \bigg(  \frac 1 2   (  \Xb_{ij}      - \boldsymbol{\mu}_{AN}   )^T  \boldsymbol{\Sigma}_{AN}^{-1} (  \Xb_{ij}      - \boldsymbol{\mu}_{AN}   )         -  (2\pi)^{K/2}  (\log    | \boldsymbol{ \Sigma} _{AN}|^{  1 /2 }
		%	\bigg)    \\
		%	%
		%	& +   \sum_{q  }      \sum_{l: l \ne q   }      \sum_i  \sum_j      \tau_{iq} \tau_{jl} \bigg(  \frac 1 2   (  \Xb_{ij}      - \boldsymbol{\mu}_{AN}   )^T  \boldsymbol{\Sigma}_{AN}^{-1} (  \Xb_{ij}      - \boldsymbol{\mu}_{AN}   )         -  (2\pi)^{K/2}  (\log    | \boldsymbol{ \Sigma} _{AN}|^{  1 /2 }
		%	\bigg)    
		=  \Expec_{R_{\bX}} [\log( f   (  \Xb| \Zb  )  ) ] 
		& +  \sum_{i,q}    \tau_{iq} \log \alpha_q    
		- \sum_{q}  \sum_i  \tau_{iq} \log \tau_{iq} -    
		\\   \sum_q    \bigg( P_q  \log P_q     + & (1-P_q)   
		\log  (1-P_q)    \bigg)  +   \sum_i \sum_q \bigg( P_q \log \Psi +(1- P_q) \log (1-\Psi)  \bigg)  \tau_{iq}
	\end{align*}
	This is the full expression for the hierarchical ELBO as described in Section \ref{sec:ELBO_Optim}.

	\section{Calculations for Variational EM Algorithm} \label{app:VEM_Calc}
	
	This section gives derivations for every step of the Variational EM algorithm in Section \ref{sec:est_algorithm}.
	
	\subsection{Optimizing Membership Probabilities $\boldsymbol{\tau}$ in E-Step}\label{app:tau_est}
	We find optimal values for each $\tau_{iq}$ by solving this following equation, which is described in Section
	\ref{sec:E_stepTauEst}:
	\begin{align*} 
		\frac{\partial }{\partial \tau_{iq}}    \mathcal{L}  =& 
		\log(\alpha_q )
		+  \sum_{k \le K }   \sum_{j \le n  }   \tau_{jl}    \bigg( P_q f(  X^k_{ij}, \boldsymbol{\mu}_q, \boldsymbol{\Sigma}_q )   + (1-P_q)  f(X^k_{ij}, \boldsymbol{\mu}_{AN}, \boldsymbol{\Sigma}_{AN} )   \\
		& + \sum_{l \le Q: l \ne q }  f(X^k_{ij}, \boldsymbol{\mu}_{AN}, \boldsymbol{\Sigma}_{AN} )   \bigg) -      \log(\tau_{iq} )-1  
		+    P_q \log \Psi +(1- P_q) \log (1-\Psi) 
		\\  
		:=&0 , 
	\end{align*}
	rearranging $\tau_{iq}$ we solve this equation using a fixed point iteration procedure

	\subsection{Estimation of Noise Probability $P_q$ in E-Step} \label{app:noise_probPq}
	Variational variables $P_q$ that serve as the ``soft" versions of  $C_q$ 
	can be  approximated by estimating the probability of block $q$ being a ``signal" block or noise block.  
	The terms $ \ERhv \log f (\Xb | \Zb ) ,    \Expec[ \log   R (\Cb ) ]$,  $\Expec[ \log   S (\Cb| \Zb  ) ] $ in $\mathcal{L}^\prime $ are dependent on $\Cb $. 
	Practically, because we need to  normalize for $N_q$, which is $1-P_q$, that variable is more simple (if not the only possible tractable option).   
	%  For computational reasons we estimate $N_q=1-P_q$ instead of $P_q$ 
	
	\begin{align*} 
		\frac{\partial }{  \partial  N_q  }  \mathcal{L}^\prime 
		=&  \frac{\partial }{  \partial  N_q  }  \ERhv [\log f (\Xb | \Zb) ] 
		-      \log N_q +  \log (1-N_q)     - ( \log \Psi + \log (1-\Psi)   ) \sum_i \tau_{iq}
		\\
		:=& 0    
	\end{align*}  
	where the first term is $f(\cdot )$ is the portion of the multivariate normal density
	as described in Section \ref{app:ElogfX|Z}.
	\begin{align*} 
		\sum_{k  }  \sum_{i,j} \tau_{iq} \tau_{jq}  \bigg(     f(X^k_{ij}, \boldsymbol{\mu}_q, \boldsymbol{\Sigma}_q ) - f(X^k_{ij}, \boldsymbol{\mu}_{AN}, \boldsymbol{\Sigma}_{AN} )   +   \log  \bigg(
		\frac{ 1 - \Psi }{ \Psi }
		\bigg)   \bigg)   
		& =   \log  \bigg( \frac{ N_q }{ 1-N_q } \bigg) 
	\end{align*}
	So then, after rearranging:    
	\begin{align*} 
		\widehat{N_q} =   \bigg(   1+   \bigg[  \exp  \bigg(   \sum_{k  }  \sum_i \sum_j \tau_{iq} \tau_{jq}  \bigg(     f(X^k_{ij}, \boldsymbol{\mu}_q, \boldsymbol{\Sigma}_q ) - f(X^k_{ij}, \boldsymbol{\mu}_{AN}, \boldsymbol{\Sigma}_{AN} )  )   +   \log  \bigg(
		\frac{1-\Psi }{ \Psi }
		\bigg)   \bigg)  \bigg) \bigg]  ^{-1}   \bigg) ^{-1}. 
	\end{align*}  
	Then the final $N_q$ estimates are made after normalizing all $\widehat{N_q} $ such that they sum to one. Finally, the $P_q$ estimates are made by subtracting $N_q$ from 1.

	\subsection{Derivation of  Signal Terms for  M-Step} \label{app:M_StepSignalTerms}	
	The closed-form estimate of the parameter for the mean vector $   {\boldsymbol{ \mu } }_{q} $ for each block $q$   from the M-step is
	\begin{align*}
		\widehat{\boldsymbol{ \mu } }_{q} & = \frac{ \sum_{i,j} \tau_{iq} \tau_{jq} \Xb_{ij}    }{ \sum_{i,j} \tau_{iq} \tau_{jq}    }   P_q    +  \frac{ \sum_{i,j} \tau_{iq} \tau_{jq}  \boldsymbol{\mu}_{AN}   }{ \sum_{i,j} \tau_{iq} \tau_{jq}    } \cdot  (1- P_q )
		\\ & = \frac{ \sum_{i,j} \tau_{iq} \tau_{jq} \Xb_{ij}     }{ \sum_{i,j} \tau_{iq} \tau_{jq}    }   P_q +     \boldsymbol{ \mu}_{AN}    (1- P_q )
	\end{align*}
	
	Assuming convergence of $P_q$ to either 0 or 1 within the context of the variational iterations, the theoretical value of 
	\begin{align*}
		\boldsymbol {\mu}_{q} &=
		\begin{cases*} 
			\frac{ \sum_{i,j} \tau_{iq} \tau_{jq} \Xb_{ij}    }{ \sum_{i,j} \tau_{iq} \tau_{jq}    }  
			& if $q$ is Signal: $P_q=1$
			\\
			\boldsymbol{\mu}_{AN}   & if $q$ is Noise: $P_q=0$
		\end{cases*}
	\end{align*}
	
	Similarly to mean calculations, the variance calculations (along diagonals) are  : 
	\begin{align*}
		\widehat {  \boldsymbol{\Sigma}_q}  &  =   \frac{  \sum_{i,j}   \tau_{iq}  \tau_{jq}  ( \Xb_{ij} - \boldsymbol \mu_q  )^2     } {  \sum_{i,j}   \tau_{iq}  \tau_{jq}    }\cdot   P_q   
		+  \boldsymbol \Sigma_{{AN}} \cdot (1-P_q ) 
		\\
		&=
		\begin{cases*}
			{ \sum_{i,j} \tau_{iq} \tau_{jq} (\Xb_{ij} - \boldsymbol  \mu_q )^2    }   \big/ { \sum_{i,j} \tau_{iq} \tau_{jq}     }   & if $q$ is Signal: $P_q=1$
			\\
			\boldsymbol \Sigma_{AN} & if $q$ is Noise: $P_q=0$
		\end{cases*}
	\end{align*}
	
	The  cross-term for two layers  $h, k$  is written as:
	\begin{align*} 
		{ \widehat{\boldsymbol  {\Sigma}_{hk,q}}}
		% &= \dot{\sigma}_{X,q} \dot {\sigma}_{Y,q} \dot{\rho_q} \\
		& = \frac{   \sum_{i,j}   \tau_{iq}  \tau_{jq}  ( \Xb_{k,ij} - \boldsymbol \mu_{q,k}) ( \Xb^h_{ij} - \boldsymbol \mu_{q,h})     } {  \sum_{i,j}   \tau_{iq}  \tau_{jq}   } \cdot  P_q  + 0 \cdot (1-P_q)
		\\
		&=\frac{   \sum_{i,j}   \tau_{iq}  \tau_{jq}  (\Xb^k_{ij} - \boldsymbol \mu_{q,k}) ((\Xb^h_{ij} -  \boldsymbol \mu_{q,h}  )         } {  \sum_{i,j}   \tau_{iq}  \tau_{jq}   } \cdot P_q
	\end{align*}
	The element-wise  correlations at iteration $t$ across layers  $h,k$ ($h \ne k$)  are then calculated as 
	\begin{align*}
		\hat{\rho_q}^{h,k}  & = \frac{\widehat{ {\Sigma_{hk}^q}}}{    
			\sqrt{\widehat{ {\Sigma^{ h}_q}} \widehat{ {\Sigma_{ k}^q}}} 
		}. 
	\end{align*}
	Finally,   the putative correlation (across all layers) for block $q$ is
	\begin{align*}
		\hat{\rho_q}  =   \max_{h,k}       \hat{\rho_q}^{h,k} .
	\end{align*}

	\subsection{Derivation for $\boldsymbol{\mu}_{AN}$ and $ \boldsymbol{\Sigma}_{AN} $} \label{app:deriv_muAN}
	
	This is derivation for \eqref{eq:mu_AN}
	To calculate the global parameters, the global noise probability term $\Psi$ defined previously is 
	\begin{align*}
		{ \widehat{   \boldsymbol{\mu}_{AN}    }}
		& = \Expec_{R_\Xb(\Zb, \Cb )}  \big[  \boldsymbol{\mu}_{AN}     \big] 
		\\ 
		& =     \Prob(B_q \ne  NB   )  \Expec_{R(\Zb, \Cb )} \big[  \boldsymbol{\mu}_{AN} \big| B_q \text{ is not } NB   \big]  
		\\ & \quad  +    \Prob(B_q = NB )     \Expec_{R(\Zb, \Cb )} \big[  \boldsymbol{\mu}_{AN} \big| \     \{B_q=NB\}\big];  \quad   q: 1 \le q \le Q
		\\
		& =    \Psi  \frac {  \sum_{j , i  }    \sum_{l, q: q \ne l } \tau_{iq}  \tau_{jl}   
			\Xb_{ij}  
		} {  \sum_{j , i  }    \sum_{l, q: q \ne l }     \tau_{iq}  \tau_{jl}         } 
		+(1-\Psi)
		\frac{	  \sum_{j , i  }    \sum_{ q }    \tau_{iq}    \tau_{jq}   (1-P_q)
			\Xb_{ij}  }{   \sum_{j , i  }    \sum_{ q }    \tau_{iq}    \tau_{jq}     (1-P_q) },
	\end{align*} 
	$ \widehat{   \boldsymbol{\Sigma}_{AN}    }$ can also be calculated in a similar way.
	
	\subsection{Derivation of $\Psi $} \label{app:derivPsi}
	In this section we derive $\Psi$ Let  $\{NB\} $ represent the event that there exists a Noise Block in the multilayer graph system. The we write the indicator for this event as $\textbf{1} (NB)$ with probability 
	$\Prob(NB)$. 
	\begin{align*}
		\Psi  &=  \Prob(B_q \ne  NB ;  \forall q:  q \le Q) 
		\\
		&= \Prob( C_q =1 ; \forall q:  q \le Q  ) 
		\\
		%&= \Prob( \text{Global average rate of } q \text{ s.t. } C_q =1; \forall q: 1 \le q \le  Q)  \\
		&=  1 - \Prob(   \text{Global average rate of } q \text{ s.t. } C_q =0; \forall q:  q \le Q)  \\
		& = 1 - 1/Q 
		\\
		& = (Q-1)/Q 
	\end{align*} 
	
	% NOTES TO SELF
	
	%The membership matrix $\Zb$ is a $n \times Q$-dimensional matrix whose row-wise entries are all either 0 or 1 (summing to 1).  The noise indicator $\Cb$ is a $Q$-length vector that is either 0 or 1 that indicates whether block $q$ is signal or noise.  The variational approximation of the indicator $C_q$ at block $q$ is the probability $P_q$, which is between 0 and 1 and should converge to either 0 or 1. We do not know the exact form of the variational variables' joint distributions, but assume that we can approximately estimate them marginally because the distribution is factorizable. 
	% The complement of the event that the block is noise is the scenario where all of the noise is \textit{within}-block, with corresponding probability  $\Prob (B_q \text{ is signal})  =  1- \Prob( NB)  $  .  

	\section{Stochastic Variational Inference} \label{app:StochVI}.
	To speed up computation, we  apply  stochastic variational inference (SVI) to calculate the membership parameters $\tau_{iq}$ and $P_q$.
	We subsample nodes at each step of the E-step in variational EM. Calculating $\tau_{iq,t}$ and $P_{q,t}$ comprise two stochastic sub-steps of the E-step at iteration step $t$; we label their  SVI estimates as $\widehat \tau_{iq,t}$ and $\widehat P_{q,t}$. 
	At each $t$, we sample  a set of nodes $M = \{i_1,...,i_m\} $ of size $m$ and their associated edges from graph layers $\Xb^1,...,\Xb^K$.
	% and label these sub-graphs of size $m$  as  $\Xb^{M,1},...,\Xb^{M,K}$. 
	Let $\tau^m_{iq,t }$ represent the randomly subsampled graph at iteration step  $t$.
	\begin{enumerate}
		\item   (Calculating $\tau^m_{iq,t}$)
		Partial updating step for $\tau^*_{iq,t }$ at time $t$  wherein the subsampled memberships $i,j \in M$ are found:   	
		\begin{align*} 
			\tau^*_{iq, t}   \propto  \exp \bigg(   &\log(\alpha_q ) +  \sum_{k \le K}   \sum_{  j, l \in M   }   \tau_{jl  , t-1 }   \bigg(   P_q f(X^k_{ij}, \boldsymbol{\mu}_q, \boldsymbol{\Sigma}_q )   + (1-P_q)  f(X^k_{ij}, \boldsymbol{\mu}_{AN}, \boldsymbol{\Sigma}_{AN} ) \\
			&+       \sum_{l : l \ne q }  f(X^k_{ij}, \boldsymbol{\mu}_{AN}, \boldsymbol{\Sigma}_{AN} )    \bigg)   
			-1  + P_q \log \Psi +(1- P_q) \log (1-\Psi) \bigg).  
		\end{align*}  
		The update step averages the newly calculated $  \tau^*_{iq, t} $ with the previous value
		\begin{align*}
			\widehat{ \tau}_{iq, t} =   
			\delta_t   \tau^*_{iq, t}  +  (1-\delta_t) \widehat  \tau_{iq, t-1} .
		\end{align*} 
		\item   (Calculating $P_{q,t}$)
		The signal probability 
		$P_q$ is calculated in \eqref{eq:Pq_calculation} but with the same subsampled replacements as done in the previous calculation of $\boldsymbol{\tau}$. For each time point the new noise probability $p^*_{q,t}$ is calculated and  averaged with the previous noise probability at time $t-1$. The update step is 
		\begin{align*}
			\widehat P_{q,t} =  
			\delta_t P^*_{q,t}  +  (1-\delta_t)\widehat P_{q,t-1}    .
		\end{align*} 
	\end{enumerate}

	\subsection{Details on Stochastic Variational Inference} \label{app:StochVI}
	To apply stochastic variational inference, we first define the time-variable \textit{subsampling parameter}  $\delta_t$  to retain some memory from previous iteration. 
	At every step $t$, a subsampled index set $B( )\delta_t) \in [n]$   is randomly drawn from the data, then the step of the algorithm is only applied to the subsample $\Xb_{B(\delta_t)}   $.
	A time-varying  $\delta_t \in (0,1)$ is selected to satisfy the convexity assumption of 
	(1) $\sum_t \delta_t = \infty $ and (2)  $\sum_t \delta^2_t < \infty $ as outlined in \cite{hoffman2012stochastic}, for some $\kappa \in (.5,1)$
	\begin{align*}
		\delta_t &= (t +1)^{-\kappa }.
	\end{align*} 
	
	However, this criteria needs to be changed when the stochastically sampled variables represent memberships. Empirically, the samples converge at a fast rate when the initial ``burn in" steps are subsampled, with subsample sizes increasing with each successive step. If subsampling does not take place, a potentially major impediment may arise from the slow computation speed in early steps where initialized estimates are not near the optimal values. As such, the step sizes are set as such: 
	\begin{align*}
		\delta_t &= \min \bigg(  a +  \bigg( \frac{t} {t +1} \bigg)^\kappa n  , n   \bigg).
	\end{align*} 
	$a$ and $\kappa$ are constants. $a$ governs the smallest subsample size and $\kappa>1$ governs the rate of increase for subsample size at each step size, with the maximum possible subsample size $n$. A larger $a$ means a larger starting subsample, and a larger $\kappa$ means a faster rate of increase in subsample size. 
	
	Empirically, for a wide range of simulations, an effective choice for $a$ is between 100 to 200 (depending on network size) and for $\kappa$ is 2.
	These values are chosen to ensure computational efficiency in addition to accuracy: computation times for initial values are much slower if the parameter estimates are far from the optimal values which maximize the ELBO, so smaller sample sizes in earlier iterations will economize computation  by producing more local minima, while later iterations will yield more globally accurate estimates \cite{hoffman2012stochastic}.
	
	%\section{Operational Procedures} We describe pre- and post- hoc procedures such as initialization and model selection.   Moreover, we describe the stochastic variational inference procedure applied to the E-step of the  Variational EM algorithm as described in the previous section, which was introduced in the work of  \cite{hoffman2012stochastic}. These procedures are operationally important but differ in methodological content from the previous section. \todo{maybe go into appendix: a brief mention for what the update rule ACTUALLY looks like, maybe put this in main  }

	\section{Identifiability and Parsimony} \label{app:connection} 
	
	\subsection{Identifiability and Connection to Prior Models} \label{app:identifiability} 
	In the introduction, we reference the \textit{affiliation model} in Section \ref{sec:contributions_context} as an example of  prior work describing global noise on networks. On a single weighted network, a simple parametric model   known as the affiliation model described in Allman et al. \cite{Allman_2011} is formulated as follows with piecewise global  fixed rates:
	\begin{align*}
		\mu_{ql} &= (1 -     	p_{ql} )       \delta_0 + p_{ql} F_{ql} \big( \theta_{\text{in}} \textbf{1}_{q = l}  +  \theta_{\text{out}} \textbf{1}_{q \ne l}  \big); \quad  1 \le   q,l \le Q
	\end{align*}
	where probability $p_{ql}  $ is the sparsity parameter, continuous distribution  $F_{ql} (\theta_{ql}) $ with parameter $ \theta_{ql}   $
	and $\delta_0$ is a dirac mass at zero, and with probability
	\begin{align*} 
		p_{ql}    &=
		\alpha  \textbf{1} _{q = l} + 	\beta \textbf{1} _{q \ne l}; \quad . 
	\end{align*}
	One can conceive of the weighted stochastic blockmodel as a special case of the \textit{general form of mixture models for random graphs} described in \cite{Allman_2011}. For graph $X$ where each weighted edge is $X_{ij}$ between nodes $i,j$:
	\begin{align*}
		\forall q,l \in \{1,...,Q\}   \quad     X_{ij} | \{  Z_{iq }Z_{jl}=1  \} \sim p_{ql}   f(  \cdot, \theta_{ql} )
		+ (1- p_{ql}  )  \delta_0 (\cdot ),
	\end{align*}
	where $p_{ql}$ serves as the sparsity parameter between 0 and 1, which represents the proportion of . 
	$f(  \cdot, \theta_{ql} )$ represents the parametric family of distributions at specified in group-interactions $q$ and $l$.
	The conditional distribution of $X_{ij}$ is a mixture of the Dirac distribution at zero representing non-present edges. 
	The proposed \texttt{SBANM} model can also be viewed as an instance of the  generalized model above.    It is a   mixture of the affiliation model and the weighted multilayer SBM.  
	Matias et al. \cite{matias2016} discuss identifiability of block parameters in multilayer SBMs. The authors cite  \cite{Allman_2011} in setting the conditions for identifiability for weighted SBMs over multiple layers.   Since the affiliation model is also proven to be identifiable \cite{Allman_2009}, we posit that \texttt{SBANM}  is also identifiable.
	
	In practice, the membership and parameter recovery in simulations in Section \ref{sec:expeirmental_procedure} suggests that the model is identifiable empirically. However, theoretical justifications may be pursued in future work.

	\subsection{Parsimony Compared to Other Models} \label{app:parsimony}
	\texttt{SBANM} is a parsimonious compared to most other models. If inter-block interactions ($B_q \ne B_l$) are all unique, as in some models \cite{matias2016,mariadassou2010}  then this lends to overparametrization, especially at high dimensions ($\approx K \times \frac{ Q(Q-1)}{2}$ parameters). The number of parameters may be reasonable for binary and Poisson-distributed multilayer networks, but will quickly inflate in the multivariate Gaussian case.
	\texttt{SBANM} yields $2KQ + Q-1  + 2K$ parameters comprising the $2KQ$ mean and (diagonal elements of) variance parameters $\{  (\boldsymbol{  \mu  }_q,  \boldsymbol{  \Sigma }_q )  \}_{q:  q \le Q}$ , $Q-1$ correlation parameters $ \{ \rho_q\}_{q:  q \le Q, q \ne q_{NB}}$, and $2K $ noise parameters $(\boldsymbol{  \mu  }_{AN},  \boldsymbol{  \Sigma }_{AN}   ) $. As $Q$ becomes large, the number of parameters increases quadratically in the canonical weighted SBM but linearly in  \texttt{SBANM}. As $K$ becomes large, also, the rate of increase for parameters in the proposed method is smaller than that in existing methods. This advantage is demonstrated in computing time comparisons in Section 6.3.1.
	
	%\subsection{Sparsity} \label{app:sparsity} For many typical graph-data representations, ``sparsity" is represented by zeroes to signify absence of edges.   However, in some contexts, zeroes may not mean absence if the weights assume negative values. This is often the case if weights are normalized (i.e. Fisher transformation). As such, sparsity varies in existing approaches and may be better represented by a baseline \textit{ambient noise} in weighted networks. 

	\section{Additional Simulations}
	
	In this section, we describe three additional simulations that were conducted for the proposed method. The experiment applies the proposed method on simulations with varying parameters. The second experiment applies the method to networks generated form the same parameters. The third experiment verifies the usage of  the \textit{integrated composite likelihood} \cite{matias2016} selecting the optimal $Q^*$. Finally, the last section runs the method on some larger networks.

\subsection{Simulations of Networks with Differing Parameters (Experiment 1)} \label{appsec:simu_experiment1}

We first describe the simulation scheme of  the first experiment. The means for each unique block for every network are randomly generated from a Gaussian distribution centered around 0 and 2 respectively for the first and second layers. 
After the parameters are generated,   the observations are simulated from multinormal distributions governed by these parameters.
Each network has $AN$ governing both a single block $NB$ and interstitial noise $IN$ that is centered around (-1,0).  
We repeat this procedure for trivariate networks of $n=200$ nodes, wherein the Gaussian priors for each (signal) block have means of -2, 0, and 2 respectively for the first, second, and third layers.  
In order to ensure the separability of blocks during simulations, we only select the networks whose blocks' minimum Bhattacharya distances are above a certain threshold.  We calculate the minimum Bhattacharya distances between blocks across 500 simulated networks, and then select the networks with the largest 10\% of the minimum Bhattacharya distances to filter out the networks whose blocks are `far enough away' from each other; we run 50 instances of the \texttt{SBANM} algorithm for both the bivariate  ($n=500$) and the trivariate case ($n=200$).

\subsubsection*{Results}
Fifty runs of the algorithm were performed for both the bivariate and trivariate networks with differing parameters. 500 networks were generated as described in the previous section, then networks with the highest 10\% of the minimum Bhattacharya Distances between clusters' parameters are retained. 

Though this experiment is primarily focused on \textit{membership recovery}, parameter estimation remains as a byproduct.
Across many simulations with a variety of parameters, there does not seem to be much systemic bias in the estimates as empirical means of differences between estimated and true parameters are centered around 0.
Median percentage differences, across all estimated parameters, between the estimates and true values are between 20 to 25\% for bivariate, and 10-20\% for trivariate networks. Histograms for the mean and variance parameters (each distinct parameter is treated like an observation) show essentially matching distributions between estimates and ground truth parameters for means (\ref{fig:simu_histos}).

A slight discrepancy between distributions for variance parameters ($\sigma^2_{q,k}$ for $k = 1,2,3$)  among trivariate networks. This slight bias  may be related again to the curse of dimensionality and, while does not seem to elicit too severe a problem in the clustering results, may be investigated in future endeavors.

Percentage differences between the estimated and ground-truth parameters also show moderately accurate recovery in both bivariate and trivariate networks. The lowest 25\% quartiles for all parameters are between 0 and 3 percent and show that these estimates are very close to the ground truths.  Conflated with the relatively higher mean and median differences, the low 1st quartiles show that accuracy for parameter runs seem to occur along a binary: either estimates are very close to their targets, 
or they are fairly far off. Some of the high percentage differences may arise from small ground-truth values, which are divided to calculate percentage differences. Others may arise from the mismatches in clustering memberships.  However, this limitation mostly arises in the trivariate case, as there is a near-perfect recovery rate for the bivariate simulations. 

\begin{figure} [htbp]
	\centering 
	\begin{tabular}{ccc}
		\hline
		\multicolumn{3}{c}{\textsc{  Histograms of True and Estimated Parameters}}  \\ 
		\hline 
		& \textbf{Bivariate}  & \textbf{Trivariate} \\
		\hline
		\\
		{\rotatebox[origin=l]{90}{  	\quad \quad 	\quad \quad 	  \footnotesize      Estimated\slash True 	$\mu_{k,q}$}}  
		&  
		\includegraphics[width=0.4\linewidth	]{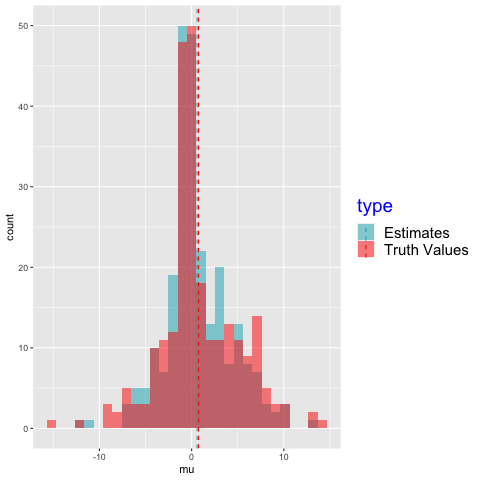}
		&   		\includegraphics[width=0.4\linewidth]{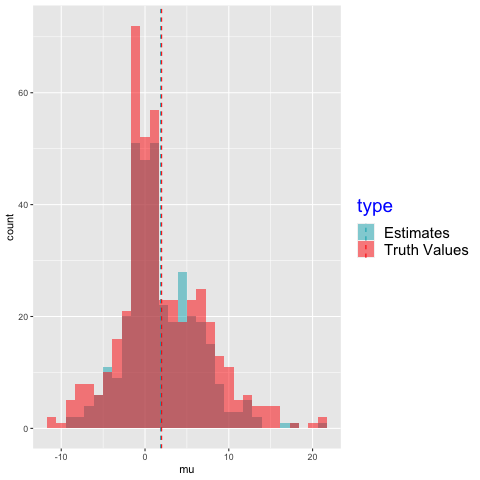} \\
		{\rotatebox[origin=l]{90}{ 		\quad \quad 	\quad \quad   \footnotesize       Estimated \slash True 	$\sigma^2_{k,q}$}}    &  	\includegraphics[width=0.4\linewidth]{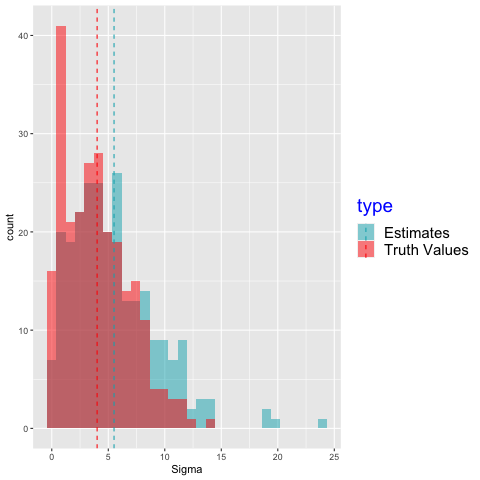}
		&   		\includegraphics[width=0.4\linewidth]{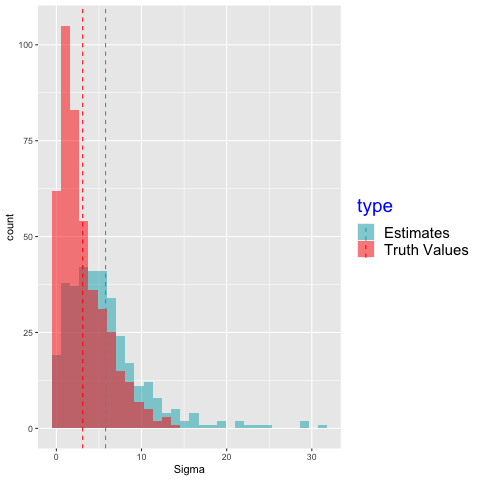} 
		\\
		%	\footnotesize      &  % \footnotesize Mean	Est: 5.51 \slash Truth: 4.01  	 & % \footnotesize Mean Est: 5.58 \slash Truth: 3.10 
		%	&&	\\
		\hline
	\end{tabular}
	\caption{\footnotesize Histograms of ground truth (red) and estimate (blue) parameter values for the 2-layer and 3-layer networks compared to the estimated parameters from the algorithm. Parameters across layers are all plotted together.
		Dashed lines demarcate the empirical means of these estimated and ground truth parameters. 
		For ground truths (red), these empirical means are 
		.75 for $\mu_{k,q}$ (bivariate, top left), 
		1.98 for $\mu_{k,q}$ (trivariate, top right),
		4.01 for $\sigma^2_{k,q}$ (bivariate, bottom left), 
		3.10 for $\sigma^2_{k,q}$ (trivariate, bottom right).
		For estimates of parameters, they are
		.58 for $\mu_{k,q}$ (bivariate, top left), 
		1.84 for $\mu_{k,q}$ (trivariate, top right),
		5.51 for $\sigma^2_{k,q}$ (bivariate, bottom left), 
		5.58 for $\sigma^2_{k,q}$ (trivariate, bottom right).
	}
	\label{fig:simu_histos}
\end{figure} 

\subsection{Simulations of Networks with the Same Parameters (Experiment 2)} \label{appsec:simu_experiment2}

The first experiment was conducted primarily to demonstrated membership recovery under a variety of different parameters and block sizes. 
The purpose of the second experiment, which runs the algorithm under a set of fixed parameters, is to show that the method recovers parameters effectively.
The fixed parameters were generated through simulation with fixed Gaussian distributions with prior means 10,15, and 20 and prior variance parameters of 5. The first entries of each layer correspond to the noise block with fixed means at 5, 10 and 15. The means are:
$\boldsymbol{\mu}_{X,q} =    (5, 11.98, 11.55, 10.39)$,
$\boldsymbol{\mu}_{Y,q} =   (10 ,16.86, 16.49, 14.81)$,
$\boldsymbol{\mu}_{Z,q} = (15 ,16.69 ,21.25 ,21.08)$.
The variances are 
$\boldsymbol{\Sigma}_{X,q}$ = ( 7.88, 13.11, 0.31, 1.16), 
$\boldsymbol{\Sigma}_{Y,q}$=  ( 7.32,  7.67, 4.89, 1.03), 
$\boldsymbol{\Sigma}_{Z,q}$ =(6.69,  4.15, 0.06, 4.36).
The correlations are $\rho_q$ = (0.00, 0.40, 0.15, 0.34), 
and the true group sizes are 76 nodes for the first  block ($NB$), 97 for the second, 93 for the third, and 34 for the fourth.

\subsubsection*{Results}
We generated  100 networks following these exact specifications and ran \texttt{SBANM} on all of them. In Figure \ref{fig:simu_boxplots} in the main text, each boxplot comprises a set of 100 estimates for each parameter values. The first row shows those for the first layer (written as $\Xb$), the second $\Yb$, the third $\Zb$, and the fourth for correlations between the three layers. The red band shows the true parameter values as listed above.

\begin{figure}[htbp] 
	\centering
	\begin{tabular}{ccc}
		\hline
		\multicolumn{3}{c}{\textsc{  Boxplots of Estimated Parameters}}  \\ 
		\hline
		&	$\boldsymbol{\mu}_{k,q}$&	$\boldsymbol{\Sigma}_{k,qq}$ \\
		\rotatebox[origin=l]{90}
		{ 	\quad \quad 	\quad \quad 	\quad  \footnotesize    $\Xb$} &	\includegraphics[width=0.4\linewidth]{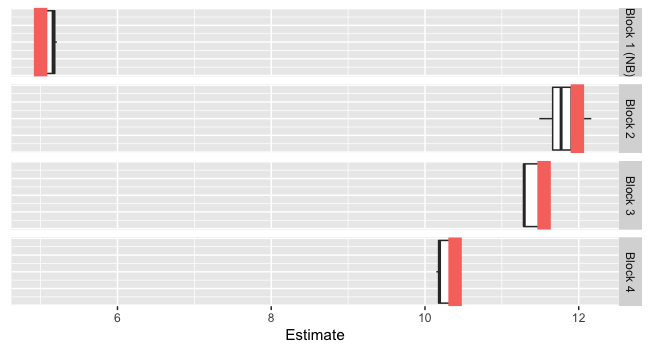}	& 	\includegraphics[width=0.4\linewidth]{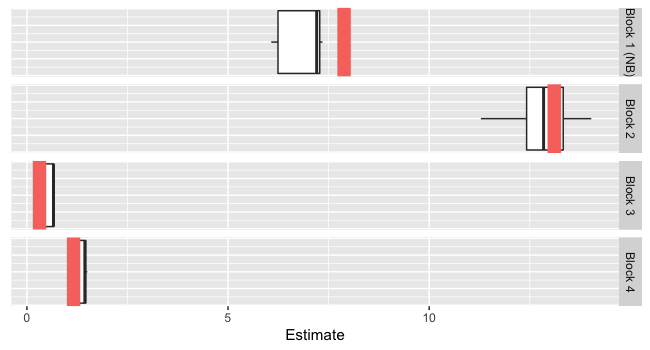} 	 	\\  
		\rotatebox[origin=l]{90}{ 	\quad \quad 	\quad \quad \quad	  \footnotesize    $\Yb$}
		& \includegraphics[width=0.4\linewidth]{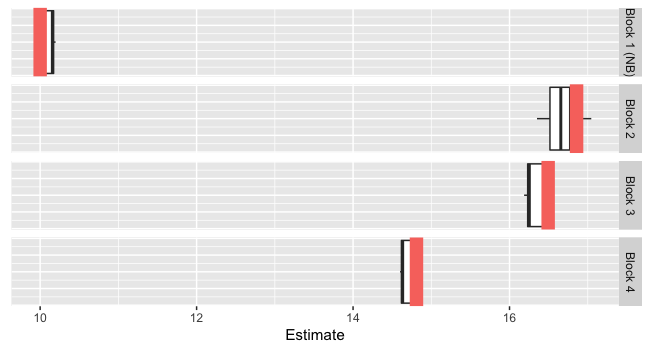}	&    \includegraphics[width=0.4\linewidth]{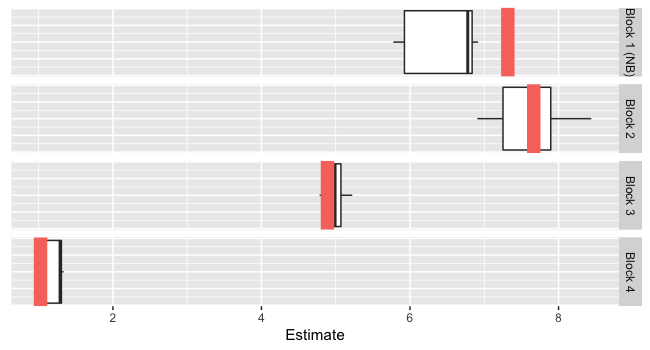} \\
		\rotatebox[origin=l]{90}{ 	\quad \quad 	\quad \quad 	\quad  \footnotesize    $\Zb$} 
		&\includegraphics[width=0.4\linewidth]{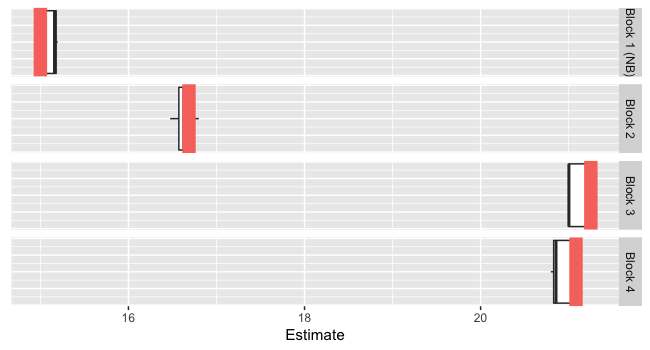}  
		&\includegraphics[width=0.4\linewidth]{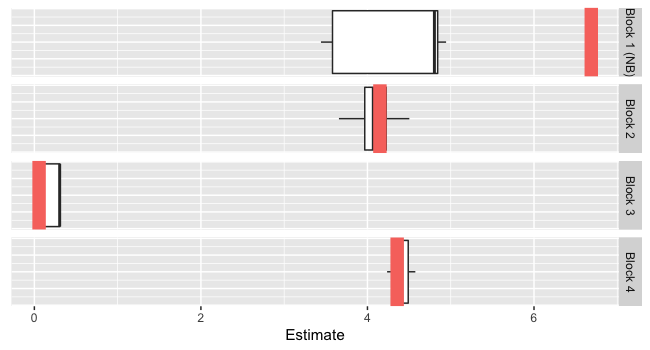}    \\
	\end{tabular}
	\begin{tabular}{c}
		${\rho}_{q}$ \\
		\includegraphics[width=0.55\linewidth]{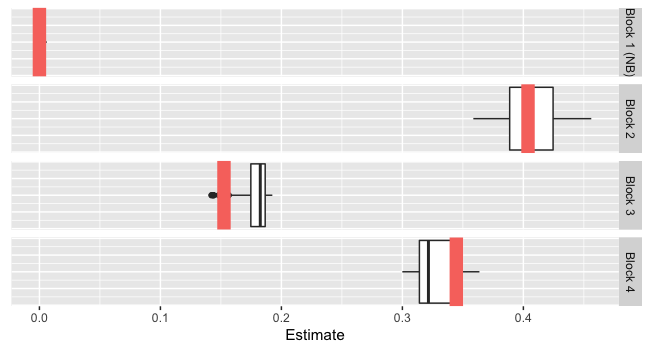} 
	\end{tabular}
	\caption{\footnotesize \textit{Boxplots for repeated estimates of simulations (second type). We ran the algorithm applied to 100 randomly generated networks with the same ground truth parameters and fixed sample sizes. Each boxplot represents the summary of 100 individual estimates corresponding to 100 runs. The red bands represent the ground truth parameters for measn, variances, and correlations.  }}
	\label{fig:simu_boxplots}
\end{figure}

\subsection{ICL Assessment (Experiment 3)} \label{app:simu_ICL}

\textit{Model selection} in the SBM clustering context usually refers to selection of the number of a priori blocks before VEM estimation as it is the only `free' parameter in the specification step of the algorithm.  
Existing approaches \cite{Daudin2008,mariadassou2010,matias2016} consider the \textit{integrated complete likelihood} (ICL) for assessing block model clustering performance. Matias et al.  write the ICL for multilayer graphs in the following way (adapted to match the notation of this study)
\begin{equation} \label{eq:ICL}
	ICL(Q) = \log f( \Xb , \Zb )  - \frac 1 2 Q(Q-1) \log ( n (K-1))  - pen(n,K, \boldsymbol{\Theta} )
\end{equation}
to translate the terminology, $\boldsymbol{\Theta} $ corresponds to the total set of transition parameters in the SBM, where  $\boldsymbol{\Theta} : =\boldsymbol{\Theta}_\text{Signal}  \bigcup \boldsymbol{\Theta}_\text{Noise} $ \cite{matias2016}. The penalty parameter $pen( \cdot )$ is chosen dependent on the distributions of the networks; the `Gaussian homoscedastic' case in Matias et al. is derived to be 
\begin{align*}
	pen(n,K, \boldsymbol{\Theta}) = Q\cdot \log \bigg( \frac{n(n-1) K }{2} \bigg) 
	+ \frac{Q(Q-1) }{2}   K \cdot \log \bigg(   \frac{n(n-1) }{2}  \bigg) .
\end{align*} 
Though the authors made  the assumptions that the variances are constant for all blocks, we assume that the models are similar enough to \texttt{SBANM} such that the evaluation criterion is applicable to our case. 
For this portion of the simulation experiment we fix $n$ at 200 and the ground-truth $Q$ at  5. However, we apply the method for a range of hypothesized block numbers $\widehat{Q}$ (as the \textit{estimate} for number of blocks) from 2 to 7. Simulation results show that the usage of ICLs caps at $\widehat{Q}=5$, the correct ground truth value (Figure \ref{fig:elbossim200}).

\noindent
\textbf{Results:} We used a single instance of a trivariate network with 200 nodes from the simulations generated in the first experiment (Section \ref{appsec:simu_experiment1}). ICLs for five runs of the algorithm were calculated. Each run presupposed a different selection of $Q$ from 2 to 7. The ground-truth value of $Q$ is 5 and Figure \ref{fig:elbossim200} showed that the ground-truth $Q$ captured the highest ICL.
\begin{figure} [htbp]
	\centering
	\includegraphics[width=0.52\linewidth,  trim= {0cm .8cm 0cm 2.3cm} , clip	]
	{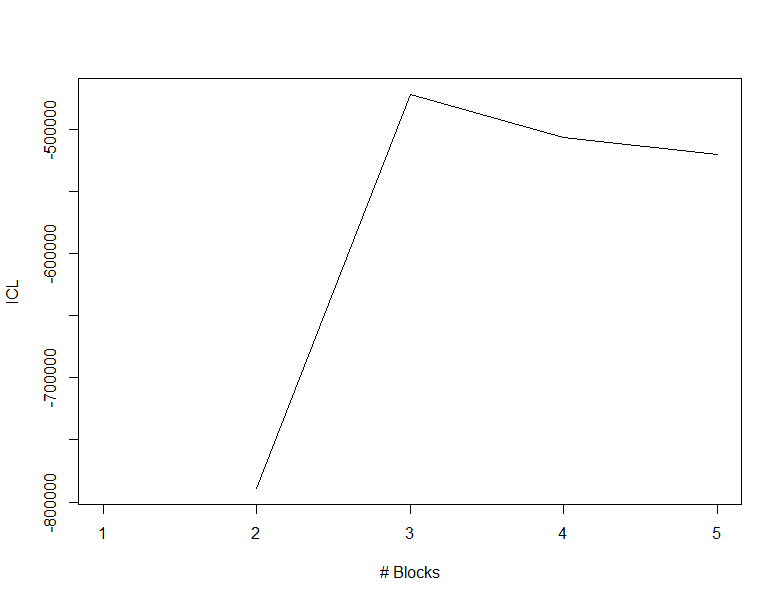}
	
	\caption{\footnotesize{} \textit{ICLs for simulation study for three-layer network of 200 nodes with a ground-truth $Q$ of 5, which maps to the maximum ICL that was found by the method of estimation.}}
	\label{fig:elbossim200}
\end{figure}

\subsection{Large Network Simulations}
For large-network simulations, single instances of networks with $n=1000$ and 2000 are generated for $Q=4$ and 5. Results yielded exact recovery for memberships and within 5\% errors for parameters.

	\section{Details for Analysis of PNC Data}  \label{app:PNC_Details}

	\subsection{PNC Preprocessing and Network Construction}\label{app:data_pnc_proc}
	The PNC has  a well-represented sample with youth of mostly European American ancestry but include a substantial portion of African Americans.
	Roughly 21\% met psychosis spectrum criteria and 4\% reported threshold psychosis symptoms
	(\cite{Calkins_PNC_Psych2017}).    We separately analyze the two age cohorts \textit{youth} (with sample size 5136) and \textit{early adult}  (sample size 1863).
	
	Response networks are constructed using a function that gauges similarity as well as positivity or negativity of responses. This distance function is similar to Hamming distance, but takes into account the direction of \textit{positive} or \textit{negative} agreement and is between -1 and 1 .  
	In a  single graph-layer $\Xb^ k$, a weight $ X_{ij}^k$ between two nodes is derived from indicators $h^k_{ij,u}$ across $U$ questions (indexed by $u$) pertaining to a given set of conditions. 
	\begin{align*}
		h^k_{ij,u} = \begin{cases}
			1& \text{if } i,j  \text{ both answer ``yes"}   \\
			-1& \text{if } i,j \text{ both answer ``no"}   \\
			0& \text{otherwise }    \\
		\end{cases}
	\end{align*} 
	Each $h^k_{ij,u}$ between two subjects $u,v$ is -1 if both answer no, 1 if both yes, otherwise 0.  These values  are then summed  and divided by the total number of questions $U$: $$ r^k_{ij} =  \frac{  \sum_{u = 1,..,U }      h^k_{ij,u}   }{U}   . $$ The weight $r^k_{ij}$ is 1 if two subjects both answer yes to everything and -1 if they answer no to everything. The weight $r^k_{ij}$ is then transformed using a Fisher transformation to produce a  value that approximates an observation in a normal distribution, in layer $k$:
	$X^k_{ij} = \text{Fisher} (r^k_{ij}).$

	\ch{\subsubsection{Exploratory Histograms}}
	We show exploratory histograms for the male early adult sample in the PNC data that serves as the primary sample for the analysis. Each edge (duplicates removed) show that each layer appears to be composed of a mixture of normal distributions.
	\begin{figure}[htbp]
		\centering
		\includegraphics[width=0.325\linewidth]{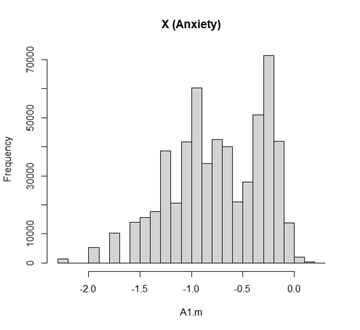}
		\includegraphics[width=0.325\linewidth]{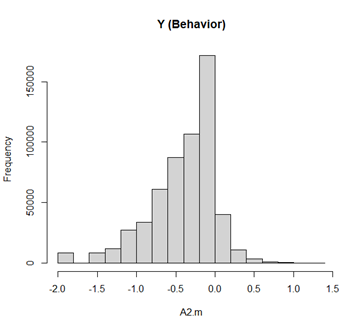}
		\includegraphics[width=0.325\linewidth]{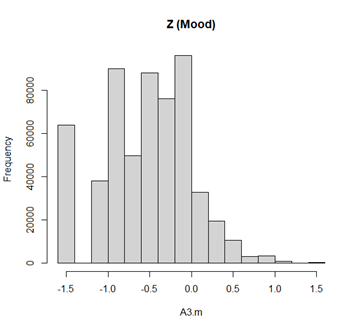}
		\caption{Histograms of the $\Xb, \Yb, \Zb$ layers representing the anxiety, behavior, and mood disorder layers. }
		\label{fig:fig0xhist}
	\end{figure}

	\ch{\subsection{Parameter Estimates and Demographic Characteristics}}

	% latex table generated in R 4.1.1 by xtable 1.8-4 package
	% Thu May 26 11:18:55 2022
	\begin{table}[htbp]
		\centering
		\begin{tabular}{rrr|rrrrrrrr}
			\hline
			& block &  & $n$ & $\rho$ & $\mu_X$ & $\sigma_X$ & $\mu_Y$ & $\sigma_Y$ & $\mu_Z$ & $\sigma_Z$ \\ 
			\hline
			1 &  $NB$ &  & 41 & 0.00 & -0.63 & 0.31 & 0.02 & 0.39 & -0.22 & 0.46 \\ 
			2 &  $S_1$  & & 244 & 0.24 & -0.23 & 0.15 & -0.07 & 0.14 & -0.02 & 0.48 \\ 
			3 &  $S_2$ &  & 471 & 0.27 & -1.17 & 0.31 & -0.76 & 0.43 & -0.93 & 0.45 \\ 
			\hline
		\end{tabular}
		\caption{Parameter estimates of \texttt{SBANM} in male early adults in PNC data. $\rho$ represents the correlation, $\mu$s represent the means, and $\sigma$s represent the variance.} 
	\end{table}
	
	Demographic characteristics among the separated blocks do not elucidate any major differences in sex, age, or race among the different clustered blocks. The \textit{environmental impacts} variable, however, does show some major difference in the $NB$ cluster, perhaps this shows that there could be some causal environmental effects into the psychosis-prone subjects. Unfortunately, we do not have more information on how exactly these environmental factors may play a role in this.
	Future research may investigate deeper into this relationship.
	
	Demographic characteristics of the clustered subjects are shown in 	Table \ref{fig:ClusDemog_table}.
	Rates of patients who are African American, Hispanic, or female are roughly even across the board for most clusters for both youth and early adult under different $Q$ specifications.  
	%Regression Z-scores (with respect to psychosis) of  demographic factors do not appear to be significant for any cluster. 
	
	% Thu May 26 11:34:24 2022
	\begin{table}[htbp]
		\centering
		\begin{tabular}{rrrr|rrrrr}
			\hline
			& Block &  & $n$ & Age & Env. & Black & Hispanic & Female\\ 
			\hline
			1 & $NB$ &  & 41 & 20 & 27 & 32\% & 5\%& 63\%\\ 
			2 & $S_1$&  & 244 & 20 & -10 & 37\% & 7\%& 53\% \\ 
			3 & $S_2$ & & 471 & 20 & -15 & 37\% & 6\% & 63\%  \\ 
			\hline
		\end{tabular}
		\caption{Demographic characteristics for \texttt{SBANM} in male early adults in PNC data. The variables represent age, environmental factors (Z-scores), percenatage black, 
			\textit{Demographic Characteristics of PNC Results. The columns represent respectively: age, environmental factors (Z-scores multiplied by 100),   \% African American, \% Hispanic, and \% Female}	\label{fig:ClusDemog_table}
		}
	\end{table}

	\subsection{Additional Posthoc PNC Analyses} \label{app:pnc_posthoc}
	Hypothesis tests between different imputed blocks in PNC psychopathological networks (post-processed)    and diagnostic categories showed significant differences between all the different clusters. In EA, though the diagnostic comparisons (right) are not all significantly different from each other, the signal (correlated) blocks are all signififcantly different from the noise block $NB$ at the significance level of 0.05.

%\subsubsection*{Significance Table}

\begin{table}[!htb]
	\caption{\footnotesize 
		Hypothesis tests for the clustered blocks in \textit{Youth} subjects along two different criteria. In the first assessment (left), edges in the weighted network for each layer are treated as a i.i.d sample and compared to other edges using  t-tests. In the second assessment, proportions of positive clinical diagnoses are tested across different imputed blocks. 
		Recall that $\textbf{X}$ represent the network of symptom response similarities for anxiety, $\textbf{Y}$ for behavior, and $\textbf{Z}$ for mood disorders.
		%  Correspondingly, we let  $ \big( \boldsymbol \mu_x,\boldsymbol \mu_y, \boldsymbol\mu_z \big)_{q: 1 \le q \le Q}$ represent the means of the edge-connections for each block representing anxiety, behavior, and mood with corresponding standard deviations $\big( \boldsymbol \sigma_x , \boldsymbol \sigma_y, \boldsymbol \sigma_z \big)_{q: 1 \le q \le Q}$.     
	} \label{table:Youth_ClusSig}
	\begin{minipage}{.5\linewidth}
		\footnotesize
		\begin{tabular}{clll}
			\hline
			\hline
			\multicolumn{4}{c}  {\textsc{Edge Comparison for EA (3 Gps)}}\\
			\hline
			$B_q$ Comp. & $\textbf{X}$ & $\textbf{Y}$  & $\textbf{Z}$  \\ 
			\hline
			$NB$- $S_1$& 0.00 (**)& 0.00 (**)& 0.00 (**)\\ 
			$S_1$-$S_2$& 0.00 (**)& 0.00 (**)& 0.00 (**)\\ 
			$NB$-$S_2$& 0.00 (**)& 0.00 (**)& 0.00 (**)\\ 
			\hline
		\end{tabular}
	\end{minipage}%
	\begin{minipage}{.5\linewidth}
		\centering
		\footnotesize
		\begin{tabular}{lllll}
			\hline
			\hline
			\multicolumn{5}{c}  {\textsc{Diagnosis Comparison for EA (3 Gps)}}\\
			\hline
			\%Anx & \%Beh & \%Mood & \%TD & \%Psy \\ 
			\hline
			0.00 (**)& 0.00 (**)& 0.00 (**)& 0.00 (**)& 0.00 (**)\\ 
			0.00 (**)& 0.00 (**)& 0.00 (**)& 0.00 & 0.00 (**) \\ 
			0.00 (**)& 0.03 (**)& 0.00 (**)& 0.00 (**)& 0.00 (**)\\ 
			\hline
		\end{tabular}
	\end{minipage} 
\end{table}

	%In EA subjects, $NB$ appears to have higher rates of psychosis on average. When $  Q$ is 4, $NB$ actually maps to the group with the highest rates of psychosis,  as well as the lowest  rates of TD subjects. When $  Q=5$, however, $S_3$ appears to map to a more typical group (with 50\% TD and 7\% psychosis). This cluster (for early adults) mirrors the $S_1$ group found in youth results; and does not seem to appear when $Q$ is set to 4.In youth subjetcs, $S_1$ has the highest rates of TD. This observation holds for both 3 and 4 groups, as the groups are identical (Table  \ref{fig:app_fullpsychtable} in Appendix \ref{app:pnc_posthoc}), further demonstrating that the clusters are consistent across different $Q$.

	\section{Analysis of US Congressional Voting} \label{app:voting}
	The focus of the study is on the PNC data. However, we also show the model's generality by applying  the method to political and human mobility data.  We use \texttt{SBANM} to find latent patterns in longitudinal US congressional co-voting data to analyze the static as well as dynamic patterns in co-voting amongst US congressional districts, historically a fruitful domain of network analysis \cite{cho2011coevolution}.
	We also find clusters in longitudinal aggregations of bikeshare networks, whose stations are represented by nodes. Analysis of zones  amongst urban mobility services is elucidating for discovering latent patterns within human geography and demographic trends \cite{he2020intertemporal,carlen2019role,cazabet_using_2017}. %Description of the data can be found in Section \ref{sec:data} and details on the analyses are found in Section \ref{sec:app_results}.
	
	%\todo{ Imported form 3.1.1}
	
	In the voteview data, each layer represents interactions among each congressional session. $(\Xb,\Yb)$  represents the 100th and 115th sessions of congress, respectively. $n$  represents the number of congressional seats that are common to all three sessions (new or relabeled seats that were added since the first session are not included)
	Only two layers are used for this application of \texttt{SBANM} to the Divvy data,  and $(\Xb, \Yb)$ in this case represents the normalized, aggregated trips between 2014-2016 and 2016-2018 respectively. The sample size $n=547$  describes the total number of stations and each edge weight represents aggregate trips between stations.

	We use congressional voting records  from \textit{Voteview}  to uncover patterns in US congressional voting patterns that may yield more nuanced political groups than party labels (i.e. Democrat, Republican) over time.
	We use a similar pre-processing step as done for the PNC data to assign measures for co-voting similarities between seats in the US House of Representatives during the  100th, and 115th sessions. Voting similarities between representatives in Congress are represented as weighted edges between nodes (representing members). 
	Each layer corresponds to a different congressional session.  
	We apply the proposed model to data from the  \textit{Divvy} bikeshare system  in Chicago called to show the ways that
	demarcating zones of bikeshare trips change across different years. 
	Trip data for Divvy are publicly available on their respective websites \cite{divvy_data}.  
	%More detailed descriptions of the procurement and pre-processing of both datasets can be found in the later Sections \ref{sec:results_voting} and  \ref{sec:results_bikeshare} .  
	
	% \subsection{US Congressional Voting Analysis} \label{sec:results_voting}
	% imported from data section 
	The overarching motivation for this application is belied by the assumption that political parties change over time and do not necessarily capture the political ``tribes'' in the US House of Representatives in the past and the present.  
	%imported from intro
	Prior work use co-voting patterns in the congress and senate in the United States to demonstrate applications of multilayer SBMs by representing district representatives (or senators) as nodes and their covoting similarities as edges \cite{wilson_temporal,cho2011coevolution}. 
	Though most congressional seats have fixed political parties that are representative of their political alignments, parties are assemblages of many constituents with issues that often fragment or congeal (ie polarize) over time. 
	As such, it is useful to trace and segment the groups that either vote with each other persistently, or change drastically following some  shift. 
	Clustering different political `tribes' by their similarities in voting is important for studying and forecasting patterns in US politics.
	It particular, it may be of interest to look for certain ``swing" districts that yield more signal for political analysts to study, compared to the ambient levels of connectivity in politically non-contentious districts.
	
	%\subsubsection{Data Preprocessing}
	We procure voting data from \textit{Voteview} \cite{voteview}. We  use data from all congressional line items from the 100th (1987-89), and 115th (2017-19) sessions, excluding consensus votes where all votes were `yes' or `no'. These sessions sample distinct decadal political milieus in the United States across 30 years and serve as snapshots indicating long-term changes in the political inclinations of congressional districts. Though the number of these districts total 435 presently, differing seats often appear and vanish due to redistricting, and we use the seats that were common to both sessions. The resulting network size $n$ is 393.
	
	We use similarity measures similar to that which was applied to PNC survey data for voting records. Between two district seats, which are represented by nodes $i$ and $j$, the total votes in agreement (both yes or both no)  are summed, then subtracted by the total disagreeing votes and divided by the total votes cast.  We convert this correlation-like value, which is between -1 and 1, to a statistic that  approximates to a normal distribution by applying the same Fisher transformation used in Section \ref{sec:data}.
	Like in other studies \cite{wilson_temporal}, consensus votes that have either 100\% ``yes" or 100 \% ``no" are omitted.
	
	We ran the algorithm over a range of values for estimated block numbers $  Q$, as was done in Section \ref{sec:expeirmental_procedure}. As the block sizes increase, the ICL also increases, until  $ Q:=3$ which is where it appears to attain a maximum. 
	\begin{figure}[htbp]
		\centering
		\includegraphics[width=.52\linewidth,
		trim= {0cm .8cm 0cm 2.3cm} , clip	]{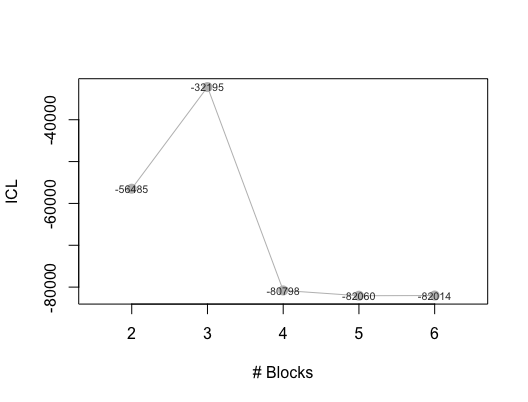}
		\caption{\footnotesize  \textit{Block selection for US congressional voting data based on the method; 3 blocks yields the greatest ICL.} }
		\label{fig:ptx37}
	\end{figure}
	We display the clustering results for 3 blocks are shown in Figure \ref{fig:ptx37}. In addition to the block sizes and estimated correlations, we show the average percentage of Republican party membership (\%\textit{R}) in the 100th and  115th sessions. The results show capture distinct shifts in party membership across the years: $NB$ appears to capture the 
	moderate niche of the congress.  
	
	\begin{table}[ht]
		\centering
		\begin{tabular}{ cc|ccc |ccl } 
			\hline
			\multicolumn{8}{c}  {\textsc{Memberships, Parameters, and Party Affiliation}}\\
			\hline
			\textbf{Block}	 & $n$ &$\mu_{X,q}$&$\mu_{Y,q}$& $\rho_q$& \footnotesize \%\textit{R(100{th})} &\footnotesize  \%\textit{R(115th)} &  \footnotesize  {Notable People}   \\ 
			\hline
			$NB$  & 9 & 0.02 & 0.31 & 0.00 & 36 & 67& \footnotesize Nancy Pelosi (1) \\ 
			$S_1$  & 233 & 0.71 & 0.36 & 0.09 & 4 & 50 &  \footnotesize Beto O'Rourke(2), Paul Ryan(2) \\ 
			$S_2$  & 151 & 0.55 & 0.45 & 0.04 & 99 & 68  & \footnotesize Dick Cheney(1), Liz Cheney(2)  \\ 
			\hline
		\end{tabular}
		\caption{\footnotesize \textit{ Clustering results for congressional voting data in the 100th and 115th sessions. In addition to the means and correlations of the (normalized) similarity networks, mean (Republican) party membership rates and notable people in each block are given.}}
	\end{table}
	
	Nine members in $NB$ vote at the same rate with each other  as with any other cluster; 
	The interpretation of this block as \textit{moderate} is supported by membership of \textit{moderate Democrat} politicians such as Nancy Pelosi who occupied the seat during empirically verified by the fact that  more than half of the block is Republicans in the 115th session.
	Moreover, $NB$ yields the same rate as every other block votes at the same rate with a different block. 
	
	The two biggest political enclaves are  large bipartisan \textit{party} that is half Democrat and half Republican in 2015 but was almost entirely Democrat in 1987 ($S_1$), and another group that was almost entirely Republican in 1987 but only about 2/3 Republican in more recent times. 
	The asymmetry in the blocks $S_1$ and $S_2$ is perhaps of note; one can view possibly $S_2$ as analogous to $S_1$, but more likely the block is capturing an uneven relationship where there is no Democratic equivalent to the Republican block  $S_2$ which shows entrenchment of voting ideology along geographical (district-wise) lines. These dynamics may be due to fundamental differences in voting patterns between the two parties.
	Results reveal the large drop-off in the Democrats' political dominance  in the 100th session.    
	Instead of capturing static (same-period) blocks, \texttt{SBANM} is able to capture some of the largest \textit{differential} movements between the 1980s and 2015.
	
	%     These results offer a countervailing narrative of bipartisan \textit{polarization} (at least between the 80s to 2015) that  but perhaps does support the case for Republicans.  Moreover, these results hint at stronger party-line voting in the Republicans that does not find a analog for the Democrats.

	\section{Human Mobility  Data Analysis} \label{app:results_bikeshare}
	
	The \texttt{SBANM} method is applicable to human mobility patterns which is represented by bikeshare data. Bikeshare networks have been argued to trace the latent patterns within human mobility in urban systems \cite{cazabet_using_2017}. He et al. \cite{he2020intertemporal} and others have modeled bikeshare stations as nodes and aggregate trips as edges 
	\cite{carlen2019role}, and then gathered conclusions about the patterns of human mobility within these bike-sharing constraints. In particular, prior work have analyzed differences in time-of-day patterns, functional differences (ie work-to-home and home-to-home trips), as well as long-term usage between neighborhoods. 
	Carlen et al. have proposed a time-dependent SBM for (binary) paths between bikeshare stations
	\cite{carlen2019role}.
	% \subsubsection{ Context and Pre-processing}
	%Divvy ridership increased steadily between 2014-2016 from 2.7 to 3.6 million, but overall ridership declined slightly from  3.8 million trips in 2017 to  $\sim$3.6 million in 2018 % \cite{greenfield_2018}.  
	%
	We convert trip data from the public records of the \textit{Divvy} bikeshare system into time-series networks where each edge represents trips and each node represents stations.
	We write these  network time-series as $\{G_s \}_{1 \le s \le S}$,  where $S$ is the aggregate weekly time-points between January 2014 to  June 2016 , and  $\{G_t \}_{1 \le t \le T} $  for $T$ as the aggregate weekly time-points between July 2016 to December 2018, as was done a previous analysis of the \textit{Divvy} system as conducted in He at al. \cite{He2019}.
	New stations  as well as stations that were removed during this time are omitted, such that the total number of stations $(n=547)$ is consistent across time. 
	
	We sum all of the edges across all time points for distinct time-periods $S$ and $T$  The two graphs $\Xb$ and $\Yb$ represent differential layers across two temporal regimes. 
	We use the number of  aggregated trips across  each time-regime $\Xb$ and $\Yb$ to represent edge-weights.
	The edge-weights are then transformed by dividing each value by the respective strengths (sum of weights) to procure a ratio between 0 and 1. The ratio is then converted  into an approximately normal value by the \textit{logit} transformation. Because of this transformation, mean values are negative and between -10 and -20. Estimated statistics (Figure \ref{fig:DivvyComm}) are reconverted using the inverse logit transform, then multiplied by the total graphwise sum-of-strengths, to convey a normalized mean rate of trips across stations within the same community.
	
	\begin{figure}[htbp]
		\begin{tabular}{cc}
			\boxed{ 	\includegraphics[width=0.42\linewidth, trim= {9cm 2cm 8cm 0cm}, clip]{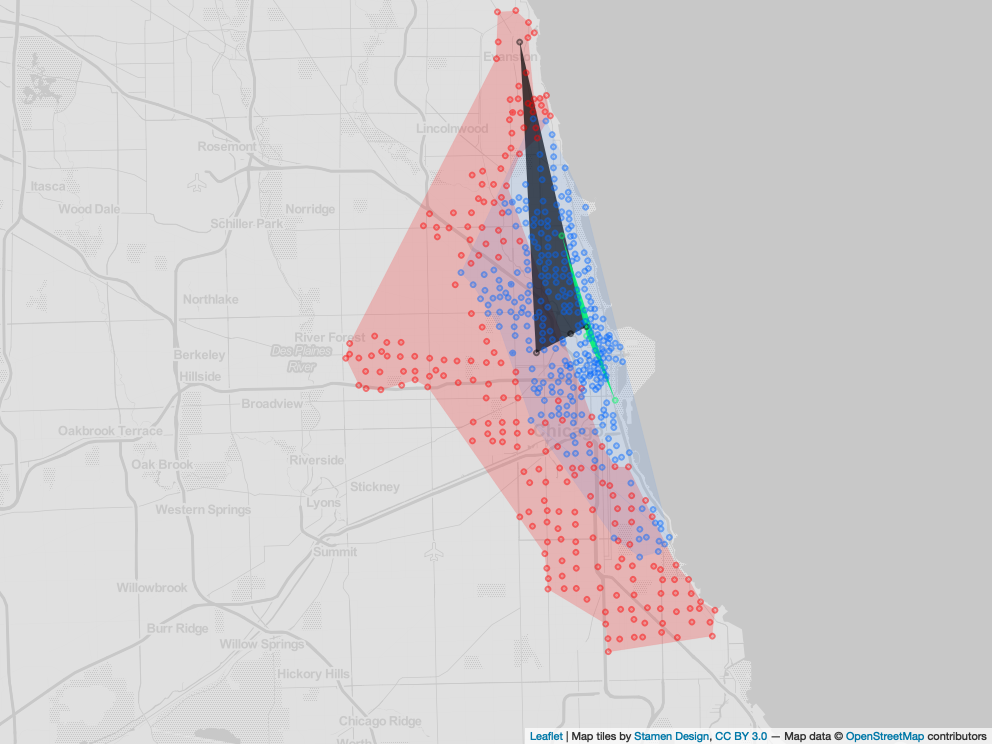} }	&
			\includegraphics[width=0.73\linewidth,
			trim= {8.1cm 14.4cm 4.5cm 16.3cm} , clip]{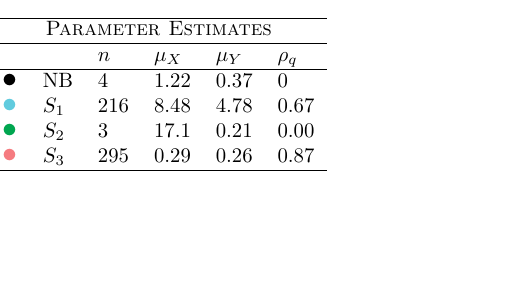}
		\end{tabular}
		\caption{\footnotesize \textit{Communities found across 2 time-periods in the \textit{Divvy} Bikeshare networks in Chicago, with associated (normalized) estimates for (normalized) mean rates of trips within the cluster in each time  period, as well as correlations.}
		}
		\label{fig:DivvyComm}
	\end{figure}
	
	Results show distinct geographical patterns (Figure \ref{fig:DivvyComm}).  The red cluster is  the largest (at 295 nodes) and represents a distinct baseline group for both time periods with activity that persist across time. The high inter-block correlation of .87 in this block suggests persistent trip interactions across time.
	The blue cluster represents a smaller (216 nodes) but a more \textit{persistent} area of activity: it has higher means for both the first and second layers than that of $S_1$ for both time-regimes, and also has a high  correlation rate. Because this area is closer to more affluent areas around the lake with more parklike amenities (such as the lakefront bike path), this block signifies zones with higher trip activity across both time periods.

	Smaller groups $NB$ and $S_2$ concentrate around the northern part of the city  and have very different estimated means that signal drastic change in usage over time. 
	Indeed, the green block $S_2$ has the highest first-layer mean $\mu_X$ but the lowest second layer mean $\mu_Y$. That the correlation in this block across layers is zero furthermore suggests  a disjointingly decreased usage over the two time periods.
	$NB$ is represented by the grey-black cluster in the northwest part of the city and has the same parameters of ridership as riders traversing across different blocks; which offers an interpretation to the large, but not infeasible, distance between stations (members) in this block. These discovered clusters have interpretable results and suggests tha viability of the method to human mobility data, after the appropriate transformations.
\end{appendix}

\bibliographystyle{imsart-nameyear} % 
%Style BST file
%\bibliography{bibliography}       % Bibliography file (usually '*.bib')

%% or include bibliography directly:
% \begin{thebibliography}{}
	% \bibitem[\protect\citeauthoryear{???}{???}]{b1}
	% \end{thebibliography}

\end{document}